\numberwithin{equation}{section}
\numberwithin{table}{section}
\numberwithin{figure}{section}
\newtheorem{assumption}{Assumption}
\newtheorem{lemma}{Lemma}
\newtheorem{theorem}{Theorem}
\newtheorem{corollary}{Corollary}
\newtheorem{proposition}{Proposition}
\newtheorem{algorithm}{Algorithm}
\theoremstyle{remark}
\newtheorem{remark}{Remark}
\providecommand{\BOXEDSPECIAL}[4]{\hbox to #2{\raise #3\hbox to #2{\null #1\hfil}}}
\chardef\@x10\chardef\@xv60
\def\tcitime{
\def\@time{%
  \@minute\time\@hour\@minute\divide\@hour\@xv
  \ifnum\@hour<\@x 0\fi\the\@hour:%
  \multiply\@hour\@xv\advance\@minute-\@hour
  \ifnum\@minute<\@x 0\fi\the\@minute
  }}%
\def\QCTOpt[#1]#2{%
  \def\QCTOptB{#1}
  \def\QCTOptA{#2}
}
\def\QCTNOpt#1{%
  \def\QCTOptA{#1}
  \let\QCTOptB\empty
}
\def\Qct{%
  \@ifnextchar[{%
    \QCTOpt}{\QCTNOpt}
}
\def\QCBOpt[#1]#2{%
  \def\QCBOptB{#1}
  \def\QCBOptA{#2}
}
\def\QCBNOpt#1{%
  \def\QCBOptA{#1}
  \let\QCBOptB\empty
}
\def\Qcb{%
  \@ifnextchar[{%
    \QCBOpt}{\QCBNOpt}
}
\def\PrepCapArgs{%
  \ifx\QCBOptA\empty
    \ifx\QCTOptA\empty
      {}%
    \else
      \ifx\QCTOptB\empty
        {\QCTOptA}%
      \else
        [\QCTOptB]{\QCTOptA}%
      \fi
    \fi
  \else
    \ifx\QCBOptA\empty
      {}%
    \else
      \ifx\QCBOptB\empty
        {\QCBOptA}%
      \else
        [\QCBOptB]{\QCBOptA}%
      \fi
    \fi
  \fi
}
\def\GRAPHICSPS#1{%
 \ifcase\GRAPHICSTYPE
   \special{ps: #1}%
 \or
   \special{language "PS", include "#1"}%
 \fi
}%
\def\graffile#1#2#3#4#5{%
    \bgroup
    \leavevmode
    \@ifundefined{bbl@deactivate}{\def~{\string~}}{\activesoff}
    \raise -#4 \BOXTHEFRAME{%
       \BOXEDSPECIAL{#1}{#2}{#3}{#5}}%
    \egroup
}%
\def\draftbox#1#2#3#4{%
 \leavevmode\raise -#4 \hbox{%
  \frame{\rlap{\protect\tiny #1}\hbox to #2%
   {\vrule height#3 width\z@ depth\z@\hfil}%
  }%
 }%
}%
\newif\ifwasdraft
\def\GRAPHIC#1#2#3#4#5{%
 \ifnum\draft=\@ne\draftbox{#2}{#3}{#4}{#5}%
  \else\graffile{#1}{#3}{#4}{#5}{#2}%
  \fi
 }%
\def\addtoLaTeXparams#1{%
    \edef\LaTeXparams{\LaTeXparams #1}}%
\newif\ifBoxFrame \BoxFramefalse
\newif\ifOverFrame \OverFramefalse
\newif\ifUnderFrame \UnderFramefalse
\def\BOXTHEFRAME#1{%
   \hbox{%
      \ifBoxFrame
         \frame{#1}%
      \else
         {#1}%
      \fi
   }%
}
\def\doFRAMEparams#1{\BoxFramefalse\OverFramefalse\UnderFramefalse\readFRAMEparams#1\end}%
\def\readFRAMEparams#1{%
 \ifx#1\end%
  \let\next=\relax
  \else
  \ifx#1i\dispkind=\z@\fi
  \ifx#1d\dispkind=\@ne\fi
  \ifx#1f\dispkind=\tw@\fi
  \ifx#1t\addtoLaTeXparams{t}\fi
  \ifx#1b\addtoLaTeXparams{b}\fi
  \ifx#1p\addtoLaTeXparams{p}\fi
  \ifx#1h\addtoLaTeXparams{h}\fi
  \ifx#1X\BoxFrametrue\fi
  \ifx#1O\OverFrametrue\fi
  \ifx#1U\UnderFrametrue\fi
  \ifx#1w
    \ifnum\draft=1\wasdrafttrue\else\wasdraftfalse\fi
    \draft=\@ne
  \fi
  \let\next=\readFRAMEparams
  \fi
 \next
 }%
\def\IFRAME#1#2#3#4#5#6{%
      \bgroup
      \let\QCTOptA\empty
      \let\QCTOptB\empty
      \let\QCBOptA\empty
      \let\QCBOptB\empty
      #6%
      \parindent=0pt%
      \leftskip=0pt
      \rightskip=0pt
      \setbox0 = \hbox{\QCBOptA}%
      \@tempdima = #1\relax
      \ifOverFrame
          \typeout{This is not implemented yet}%
          \show\HELP
      \else
         \ifdim\wd0>\@tempdima
            \advance\@tempdima by \@tempdima
            \ifdim\wd0 >\@tempdima
               \textwidth=\@tempdima
               \setbox1 =\vbox{%
                  \noindent\hbox to \@tempdima{\hfill\GRAPHIC{#5}{#4}{#1}{#2}{#3}\hfill}\\%
                  \noindent\hbox to \@tempdima{\parbox[b]{\@tempdima}{\QCBOptA}}%
               }%
               \wd1=\@tempdima
            \else
               \textwidth=\wd0
               \setbox1 =\vbox{%
                 \noindent\hbox to \wd0{\hfill\GRAPHIC{#5}{#4}{#1}{#2}{#3}\hfill}\\%
                 \noindent\hbox{\QCBOptA}%
               }%
               \wd1=\wd0
            \fi
         \else
            \ifdim\wd0>0pt
              \hsize=\@tempdima
              \setbox1 =\vbox{%
                \unskip\GRAPHIC{#5}{#4}{#1}{#2}{0pt}%
                \break
                \unskip\hbox to \@tempdima{\hfill \QCBOptA\hfill}%
              }%
              \wd1=\@tempdima
           \else
              \hsize=\@tempdima
              \setbox1 =\vbox{%
                \unskip\GRAPHIC{#5}{#4}{#1}{#2}{0pt}%
              }%
              \wd1=\@tempdima
           \fi
         \fi
         \@tempdimb=\ht1
         \advance\@tempdimb by \dp1
         \advance\@tempdimb by -#2%
         \advance\@tempdimb by #3%
         \leavevmode
         \raise -\@tempdimb \hbox{\box1}%
      \fi
      \egroup%
}%
\def\DFRAME#1#2#3#4#5{%
 \begin{center}
     \let\QCTOptA\empty
     \let\QCTOptB\empty
     \let\QCBOptA\empty
     \let\QCBOptB\empty
     \ifOverFrame 
        #5\QCTOptA\par
     \fi
     \GRAPHIC{#4}{#3}{#1}{#2}{\z@}
     \ifUnderFrame 
        \nobreak\par\nobreak#5\QCBOptA
     \fi
 \end{center}%
 }%
\def\FFRAME#1#2#3#4#5#6#7{%
 \begin{figure}[#1]%
  \let\QCTOptA\empty
  \let\QCTOptB\empty
  \let\QCBOptA\empty
  \let\QCBOptB\empty
  \ifOverFrame
    #4
    \ifx\QCTOptA\empty
    \else
      \ifx\QCTOptB\empty
        \caption{\QCTOptA}%
      \else
        \caption[\QCTOptB]{\QCTOptA}%
      \fi
    \fi
    \ifUnderFrame\else
      \label{#5}%
    \fi
  \else
    \UnderFrametrue%
  \fi
  \begin{center}\GRAPHIC{#7}{#6}{#2}{#3}{\z@}\end{center}%
  \ifUnderFrame
    #4
    \ifx\QCBOptA\empty
      \caption{}%
    \else
      \ifx\QCBOptB\empty
        \caption{\QCBOptA}%
      \else
        \caption[\QCBOptB]{\QCBOptA}%
      \fi
    \fi
    \label{#5}%
  \fi
  \end{figure}%
 }%
\def\makeactives{
  \catcode`\"=\active
  \catcode`\;=\active
  \catcode`\:=\active
  \catcode`\'=\active
  \catcode`\~=\active
}
   \gdef\activesoff{%
      \def"{\string"}
      \def;{\string;}
      \def:{\string:}
      \def'{\string'}
    }
\def\FRAME#1#2#3#4#5#6#7#8{%
 \bgroup
 \ifnum\draft=\@ne
   \wasdrafttrue
 \else
   \wasdraftfalse%
 \fi
 \def\LaTeXparams{}%
 \dispkind=\z@
 \def\LaTeXparams{}%
 \doFRAMEparams{#1}%
 \ifnum\dispkind=\z@\IFRAME{#2}{#3}{#4}{#7}{#8}{#5}\else
  \ifnum\dispkind=\@ne\DFRAME{#2}{#3}{#7}{#8}{#5}\else
   \ifnum\dispkind=\tw@
    \edef\@tempa{\noexpand\FFRAME{\LaTeXparams}}%
    \@tempa{#2}{#3}{#5}{#6}{#7}{#8}%
    \fi
   \fi
  \fi
  \ifwasdraft\draft=1\else\draft=0\fi{}%
  \egroup
 }%
\def\TEXUX#1{"texux"}
\long\def\QQQ#1#2{%
     \long\expandafter\def\csname#1\endcsname{#2}}%
\long\def\QQA#1#2{}%
\newcommand{\QTR}[2]{\csname text#1\endcsname{#2}}
\def\EXPAND#1[#2]#3{}%
\def\NOEXPAND#1[#2]#3{}%
\def\LaTeXparent#1{}%
\def\ChildStyles#1{}%
\def\ChildDefaults#1{}%
\def\QTagDef#1#2#3{}%
  \providecommand{\UNICODE}[2][]{}
\def\QQfnmark#1{\footnotemark}
 \def\abstract{%
  \if@twocolumn
   \section*{Abstract (Not appropriate in this style!)}%
   \else \small 
   \begin{center}{\bf Abstract\vspace{-.5em}\vspace{\z@}}\end{center}%
   \quotation 
   \fi
  }%
   \def\registered{\relax\ifmmode{}\r@gistered
                    \else$\m@th\r@gistered$\fi}%
 \def\r@gistered{^{\ooalign
  {\hfil\raise.07ex\hbox{$\scriptstyle\rm\text{R}$}\hfil\crcr
  \mathhexbox20D}}}}{}%
\newdimen\theight
\def\Column{%
 \vadjust{\setbox\z@=\hbox{\scriptsize\quad\quad tcol}%
  \theight=\ht\z@\advance\theight by \dp\z@\advance\theight by \lineskip
  \kern -\theight \vbox to \theight{%
   \rightline{\rlap{\box\z@}}%
   \vss
   }%
  }%
 }%
\def\qed{%
 \ifhmode\unskip\nobreak\fi\ifmmode\ifinner\else\hskip5\p@\fi\fi
 \hbox{\hskip5\p@\vrule width4\p@ height6\p@ depth1.5\p@\hskip\p@}%
 }%
\def\miss{\hbox{\vrule height2\p@ width 2\p@ depth\z@}}%
\def\tcol#1{{\baselineskip=6\p@ \vcenter{#1}} \Column}  %
\def\newfmtname{LaTeX2e}
  \DeclareOldFontCommand{\rm}{\normalfont\rmfamily}{\mathrm}
  \DeclareOldFontCommand{\sf}{\normalfont\sffamily}{\mathsf}
  \DeclareOldFontCommand{\tt}{\normalfont\ttfamily}{\mathtt}
  \DeclareOldFontCommand{\bf}{\normalfont\bfseries}{\mathbf}
  \DeclareOldFontCommand{\it}{\normalfont\itshape}{\mathit}
  \DeclareOldFontCommand{\sl}{\normalfont\slshape}{\@nomath\sl}
  \DeclareOldFontCommand{\sc}{\normalfont\scshape}{\@nomath\sc}
  \newcounter{equationnumber}  
  \def\mathletters{%
     \addtocounter{equation}{1}
     \edef\@currentlabel{\theequation}%
     \setcounter{equationnumber}{\c@equation}
     \setcounter{equation}{0}%
     \edef\theequation{\@currentlabel\noexpand\alph{equation}}%
  }
    \def\BibTeX{{\rm B\kern-.05em{\sc i\kern-.025em b}\kern-.08em
                 T\kern-.1667em\lower.7ex\hbox{E}\kern-.125emX}}}{}%
\def\AmS{{\protect\usefont{OMS}{cmsy}{m}{n}%
                A\kern-.1667em\lower.5ex\hbox{M}\kern-.125emS}}}{}%
\def\@@eqncr{\let\@tempa\relax
    \ifcase\@eqcnt \def\@tempa{& & &}\or \def\@tempa{& &}%
      \else \def\@tempa{&}\fi
     \@tempa
     \if@eqnsw
        \iftag@
           \@taggnum
        \else
           \@eqnnum\stepcounter{equation}%
        \fi
     \fi
     \global\tag@false
     \global\@eqnswtrue
     \global\@eqcnt\z@\cr}
\def\TCItag{\@ifnextchar*{\@TCItagstar}{\@TCItag}}
\def\@TCItag#1{%
    \global\tag@true
    \global\def\@taggnum{(#1)}}
\def\@TCItagstar*#1{%
    \global\tag@true
    \global\def\@taggnum{#1}}
\begin{document}

\title{Sparse Quantile Regression}
\author{Le-Yu Chen\thanks{%
E-mail: lychen@econ.sinica.edu.tw} \\
{\small {Institute of Economics, Academia Sinica}} \and Sokbae Lee\thanks{%
E-mail: sl3841@columbia.edu} \\
{\small {Department of Economics, Columbia University}}\\
{\small {Centre for Microdata Methods and Practice, Institute for Fiscal
Studies} }}
\date{March 5, 2023}
\maketitle

\begin{abstract}
We consider both $\ell _{0}$-penalized and $\ell _{0}$-constrained quantile
regression estimators. For the $\ell _{0}$-penalized estimator, we derive an
exponential inequality on the tail probability of excess quantile prediction
risk and apply it to obtain non-asymptotic upper bounds on the mean-square
parameter and regression function estimation errors. We also derive
analogous results for the $\ell _{0}$-constrained estimator. The resulting
rates of convergence are nearly minimax-optimal and the same as those for $%
\ell _{1} $-penalized and non-convex penalized estimators.
Further, we characterize expected Hamming loss for the $\ell _{0}$-penalized
estimator. We implement the proposed procedure via mixed integer linear
programming and also a more scalable first-order approximation algorithm. We
illustrate the finite-sample performance of our approach in Monte Carlo
experiments and its usefulness in a real data application concerning
conformal prediction of infant birth weights (with $n\approx 10^{3}$ and up
to $p>10^{3}$). In sum, our $\ell _{0} $-based method produces a much
sparser estimator than the $\ell _{1}$-penalized and non-convex
penalized approaches without compromising precision. \newline


\noindent \textbf{Keywords}: quantile regression, sparse estimation, mixed
integer optimization, finite sample property, conformal prediction, Hamming
distance \newline

\noindent \textbf{JEL Codes}: C21, C52, C61
\end{abstract}

\newpage

\onehalfspacing

\section{Introduction}

Quantile regression has been increasingly popular since the seminal work of %
\citet{Koenker1978}. See \citet{Koenker2005} for a classic and comprehensive
text on quantile regression and \citet{Koenker17} for a review of recent
developments. This paper is concerned with estimating a sparse
high-dimensional quantile regression model: 
\begin{equation}
Y=X^{\top }\theta _{\ast }+U,  \label{the linear QR model}
\end{equation}%
where $Y\in \mathbb{R}$ is the outcome of interest, $X\in \mathbb{R}^{p}$ is
a $p$-dimensional vector of covariates, $\theta_\ast$ is the vector of
unknown parameters, and $U$ is a regression error. Let $Q_{\tau }(U|X)$
denote the $\tau $-th quantile of $U$ conditional on $X$. Assume that $%
Q_{\tau }(U|X)=0$ almost surely for a given $\tau \in (0,1)$ and that the
data consist of a random sample of $n$ observations $\left(
Y_{i},X_{i}\right) _{i=1}^{n}$. As usual, $p$ can be much larger than $n$;
however, sparsity $s$, the number of nonzero elements of $\theta_\ast$, is
less than $n$.

To date, an $\ell _{1}$-penalized approach to estimating 
\eqref{the linear
QR model} has been predominant in the literature mainly thanks to its
computational advantages. See e.g., \citet{belloni2011}, \citet{Wang13}, %
\citet{belloni2014uniform, Belloni2019}, \citet{zheng2015}, %
\citet{Lee2018oracle}, \citet{
Lv2018}, \citet{Wang18wild} and \citet{wang2019wp} among many others. The $%
\ell _{1}$-penalized quantile regression ($\ell _{1}$-PQR hereafter) is akin
to the well known approach of Lasso \citep{tibshirani1996}. Smooth yet
non-convex penalized estimation approaches have also been proposed as
alternatives to $\ell _{1}$-PQR. These include methods of adaptive Lasso (adaptive $\ell
_{1}$-) and non-convex penalized quantile regressions %
\citep[see e.g.,][]{Wu:Liu:09,Wang12,Fan14adaptive,Fan2014,Peng:Wang:15}. See also \citet{wang2022} for the
state-of-the-art theoretical analysis of $\ell _{1}$-based and non-convex
penalized quantile regressions.

Recently, there is emerging interest in adopting an $\ell _{0}$-based
approach since the latter is regarded as a more direct solution to
estimation problem under sparsity. For instance, \citet{bertsimas2016} took
an $\ell _{0}$-constrained approach in order to solve the best subset
selection problem in linear regression models. \citet{Huang:2018} proposed a
scalable computational algorithm for $\ell _{0} $-penalized least squares
solutions. \citet{chen2018, chen2018arXiv} studied the $\ell _{0}$%
-constrained and $\ell _{0}$-penalized empirical risk minimization
approaches to high dimensional binary classification problems. %
\citet{Bertsimas:Parys:2020} and \citet{Hazimeh:Mazumder:2020} made further
advances in $\ell _{0}$-based methods for mean regression models. %
\citet{DHM:jmlr:2021} considered an $\ell _{0}$-regularized approach for a
hinge-loss based classification problem.

In this paper, we pursue an $\ell _{0}$-based approach to estimating sparse
quantile regression. We are inspired by \citet[Section
6]{bertsimas2016}, who provided a piece of numerical evidence---without
theoretical analysis---that the $\ell _{0}$-constrained least absolute
deviation (LAD) estimator outperforms $\ell _{1}$-penalized LAD estimator in
terms of both sparsity and predictive accuracy. That is, %
\citet{bertsimas2016} made a convincing case for adopting an $\ell _{0}$%
-based approach in median regression. In convex optimization, a constrained
approach is equivalent to a penalized method 
\citep[see,
e.g.,][]{boyd2004convex}. For non-convex problems, both are distinct and it
is unclear which method is better. Therefore, in the paper, we consider both 
$\ell _{0}$-constrained and $\ell_{0}$-penalized quantile regression ($%
\ell_0 $-CQR and $\ell_0$-PQR hereafter).

The main contributions of this paper are twofold. First, we derive an
exponential inequality on the tail probability of the excess quantile
predictive risk and apply it to obtain non-asymptotic upper bounds on a
triplet of population quantities for the $\ell _{0}$-PQR estimator: the mean
excess predictive risk, the mean-square regression function estimation
error, and the mean-square parameter estimation error. The resulting rates
of convergence for the triplets are at the order of $s \ln p /n$, which is
the same as those of $\ell_1$-PQR and non-convex penalized quantile regression 
\citep[see,
e.g.,][]{belloni2011,wang2019wp,wang2022} and nearly matches the minimax
lower bound $s \ln (p/s) /n$ obtained in Theorem 4.1 of \citet{wang2019wp}.
However, the optimal tuning parameter $\lambda$ in $\ell_1$-PQR is of order $%
\sqrt{\ln p/n}$, whereas it is of order ${\ln p/n}$ in $\ell_0$-PQR. We also
characterize expected Hamming loss for the $\ell_0$-penalized estimator. In
a nutshell, $\ell_0$-PQR produces a sparser estimator than $\ell_1$-PQR,
while maintaining the same level of prediction and estimation errors. In
addition, we establish analogous results for the $\ell_0$-CQR estimator
under the assumption that the imposed sparsity is at least as large as true
sparsity. Our non-asymptotic results build on \citet{bousquet2002} and %
\citet{massart2006} and are applicable for $\ell_0$-based, general $M$%
-estimation with a Lipschitz objective function that includes sparse
logistic regression as a special case. Therefore, our theoretical results
may be of independent interest beyond quantile regression.

The second contribution is computational. Both $\ell_0$-CQR and $\ell_0$-PQR
estimation problems can be equivalently reformulated as mixed integer linear
programming (MILP) problems. This reformulation enables us to employ
efficient mixed integer optimization (MIO) solvers to compute exact
solutions to the $\ell _{0}$-based quantile regression problems. 
However, the method of MIO is concerned with optimization over integers,
which could be computationally challenging for large scale problems. To
scale up $\ell _{0}$-based methods, \citet{bertsimas2016} developed fast
first-order approximation methods for both $\ell _{0}$-constrained least
squares and absolute deviation estimators. \citet{Huang:2018} also proposed
a scalable computational algorithm for approximating the $\ell _{0}$%
-penalized least squares solutions. Building on these papers, we propose a
new first-order computational approach, which can deliver high-quality approximate $\ell _{0}$-PQR solutions and thus can be used as a warm-start
strategy for boosting the computational performance of the MILP based implementation approach. As a
standalone algorithm, our first-order approach renders the $\ell _{0}$-PQR
computationally as scalable as commonly used $\ell_1$-PQR.

As an illustrative application, we consider conformal prediction of birth
weights and have a horse race among $\ell _{0}$-CQR, $\ell _{0}$-PQR, $\ell
_{1}$-PQR, adaptive Lasso and non-convex penalized quantile regressions with $n\approx
1000$ and $p$ ranging from $p\approx 20$ to $p\approx 1600$. Recently, %
\citet{romano2019conformalized} combined conformal prediction with quantile
regression and proposed conformalized quantile regression that rigorously
ensures a non-asymptotic, distribution-free coverage guarantee, independent
of the underlying regression algorithm. When we implement conformal
prediction using competing estimation methods, we find that both $\ell _{0}$%
-CQR and $\ell _{0}$-PQR are capable of delivering much sparser solutions
than $\ell _{1}$-PQR, while maintaining tighter width yet comparable coverage of prediction confidence intervals in the high dimensional settings. Adaptive Lasso and non-convex penalized estimation approaches do improve the performance of $\ell _{1}$-PQR, but they also tend to select more covariates than $\ell _{0}$-CQR and $\ell _{0}$-PQR and perform poorly especially when the covariates are highly dependent on each other.
Furthermore, we obtain similar results in Monte Carlo experiments.
Therefore, $\ell _{0}$-CQR and $\ell _{0}$-PQR are worthy competitors to $%
\ell _{1}$-PQR, adaptive Lasso and non-convex penalized quantile regression approaches
---superior if a researcher prefers sparsity---as supported by
non-asymptotic theory, a real-data application and Monte Carlo experiments.

The rest of this paper is organized as follows. In Section \ref%
{Sec:L0-penalized QR}, we set up the sparse quantile regression model and
present the $\ell _{0}$-based approaches. In Section \ref{Sec:Theory}, we
establish non-asymptotic statistical properties of the proposed $\ell _{0}$%
-PQR and $\ell _{0}$-CQR estimators. In Section \ref{Sec:Computation}, we
provide both MILP- and first-order (FO)-based computational approaches for
solving the $\ell _{0}$-PQR problems. In Section \ref{Sec:Simulation}, we
perform a simulation study on the finite-sample performance of our proposed
estimators. In Section \ref{Sec:Application}, we illustrate our method in a
real data application concerning conformal prediction of birth weights. We
then conclude the paper in Section \ref{Sec:Conclusions}. Appendix \ref%
{Proofs} collates proofs of all theoretical results of the paper and an online appendix contains further details on the variable selection results of our empirical study.

\section{\texorpdfstring{$\ell_0$}{L0}-Based Approaches to Quantile
Regression\label{Sec:L0-penalized QR}}

Let $\Vert a\Vert _{0}$ be the $\ell _{0}$ norm of a vector $a$, which is
the number of nonzero components of $a$. The usual $\ell _{1}$ and $\ell
_{2} $ norms are denoted by $\Vert \cdot \Vert _{1}$ and $\Vert \cdot \Vert
_{2}$, respectively. For any $t$ and $u$, let 
\begin{equation}
\rho (t,u)\equiv (t-u)[\tau -1(t\leq u)].  \label{quantile check function}
\end{equation}%
Let $\Theta $ denote a parameter space, which is assumed to be a compact
subspace of $\mathbb{R}^{p}$. Define%
\begin{equation}
S_{n}(\theta )\equiv n^{-1}\sum_{i=1}^{n}\rho (Y_{i},X_{i}^{\top }\theta ).
\label{Sn}
\end{equation}

We first define $\ell_0$-CQR. For any given sparsity $q\geq 0$, let $%
\widetilde{\theta }$ denote an $\ell _{0}$-constrained quantile regression ($%
\ell _{0}$-CQR) estimator, which is defined as a solution to the following
minimization problem: 
\begin{equation}
\min\limits_{\theta \in \mathbb{B}(q)}S_{n}(\theta ), \text{ where } \; 
\mathbb{B}(q)\equiv \{\theta \in \Theta :\Vert \theta \Vert _{0}\leq q\}.
\label{L0-CQR-est}
\end{equation}
In practice, choosing $q$ is important: $\ell_0$-CQR will result in
selecting many more (or far fewer) covariates if the imposed sparsity is too
large (or too small).

To mitigate the issue of unknown true sparsity $s$, we now focus on $\ell
_{0}$-PQR. Let $\widehat{\theta }$ denote an $\ell _{0}$-PQR estimator,
which is defined as a solution to the following minimization problem: 
\begin{equation}
\min\nolimits_{\theta \in \mathbb{B}(k_{0})}S_{n}(\theta )+\lambda \Vert
\theta \Vert _{0},  \label{penalized QR}
\end{equation}%
where $\lambda $ is a nonnegative tuning parameter and $k_{0}$ is a fixed
upper bound for the true sparsity $s$. In $\ell _{0}$-PQR, the main tuning
parameter is $\lambda $. To adapt to an unknown $s$, we rely on $\ell _{0}$%
-penalization that is controlled by $\lambda $. The sparsity bound $k_{0}$
is different from $q$ in $\ell _{0}$-CQR. The latter acts as a tuning
parameter, which will be calibrated to maximize the predictive performance,
whereas the former is predetermined and imposed throughout the implementation of $\ell_{0}$-PQR. We will set $k_{0}$ with a large value in numerical exercises. 

To make our proposed estimators operational, we follow the standard machine
learning approach. That is, we first randomly split the dataset into three
samples: training, validation and test samples. For each candidate value of
the tuning parameter $q$ or $\lambda$, we estimate the model using the
training sample. Then, the tuning parameter is selected based on the
quantile prediction risk using the validation sample. Finally, out-of-sample
performance is evaluated using the test sample.

\section{Theory for \texorpdfstring{$\ell _{0}$}{L0}-Based Quantile
Regression}

\label{Sec:Theory}

\subsection{Assumptions\label{theory:assump}}

We provide general regularity conditions that include quantile regression as
a special case. Define $S(\theta )\equiv \mathbb{E}\left[ \rho (Y,X^{\top
}\theta )\right] $.

\begin{assumption}
\label{iden-cond} $S(\theta )\geq S(\theta _{\ast })$ for any $\theta \in
\Theta $.
\end{assumption}

Note that for quantile regression, 
\begin{equation}
S(\theta )-S(\theta _{\ast })=\int \int_{0}^{x^{\top }(\theta -\theta _{\ast
})}\left[ F_{U|X}(z|x)-F_{U|X}(0|x)\right] dz\;dF_{X}(x),
\label{difference in QR objective functions}
\end{equation}%
where $F_{U|X}(\cdot |x)$ is the cumulative distribution function of $U$
conditional on $X=x$ and $F_{X}$ is the cumulative distribution function of $%
X$. Thus, Assumption \ref{iden-cond} is satisfied.

\begin{assumption}
\label{L-cond}There exists a Lipschitz constant $L$ such that 
\begin{equation}
\left\vert \rho (t,u_{1})-\rho (t,u_{2})\right\vert \leq L\left\vert
u_{1}-u_{2}\right\vert  \label{Lipschitz continuity}
\end{equation}%
for all $t,u_{1},u_{2}\in \mathbb{R}$.
\end{assumption}

Assumption \ref{L-cond} is satisfied for quantile regression with $L=1$. For
any two real numbers $x$ and $y$, let $x\vee y\equiv \max \{x,y\}$ and $%
x\wedge y\equiv \min \{x,y\}$.

\begin{assumption}
\label{design-cond-1}There exists a positive and finite constant $B$ such
that 
\begin{equation}
\max_{1\leq j\leq p}\{\left\vert X^{(j)}\right\vert \vee \left\vert \theta
^{(j)}\right\vert \}\leq B,  \label{boundedness}
\end{equation}%
where $X^{(j)}$ and $\theta ^{(j)}$ denote the $j$-th component of $X$ and
that of $\theta $, respectively.
\end{assumption}

Assumption \ref{design-cond-1} requires that each component of $X$ and that
of $\theta $ be bounded by a universal constant. This condition could be
restrictive yet is commonly adopted in the literature. For example, \cite%
{zheng2018} assumed the uniform boundedness of each of the covariates,
citing the literature that points out that \textquotedblleft a global linear
quantile regression model is most sensible when the covariates are confined
to a compact set\textquotedblright\ to avoid the problem of quantile
crossing.

\begin{assumption}[Separability Condition]
\label{sep-cond} There exists a countable subset $\Theta ^{\prime }$ of $%
\Theta $ that satisfies the following conditions: (i) for any $\theta \in
\Theta $, there exists a sequence $(\theta _{j})$ of elements of $\Theta
^{\prime }$ such that, for every realization of $(Y,X)$, $\rho (Y,X^{\top
}\theta _{j})$ converges to $\rho (Y,X^{\top }\theta )$ as $j\rightarrow
\infty $. (ii) Furthermore, for any given $\varepsilon _{\ast }>0$, there
exists a point $\theta _{\ast }^{\prime }\in \Theta ^{\prime }$ such that $%
\Vert \theta _{\ast }^{\prime }\Vert _{0}=\Vert \theta _{\ast }\Vert _{0}$
and $S(\theta _{\ast }^{\prime })\leq S(\theta _{\ast })+\varepsilon _{\ast
} $.
\end{assumption}

Assumption \ref{sep-cond} is very mild. A similar condition is assumed in %
\citet{massart2006} to avoid measurability issues and to use the
concentration inequality by \citet{bousquet2002}. By Assumption \ref{L-cond}
and taking $\Theta ^{\prime }=\Theta \cap \mathbb{Q}^{p}$, Assumption \ref%
{sep-cond} (i) holds by the denseness of the set of rational numbers and the
continuity of the function $\rho $. Suppose that, for some non-negative
random variable $Z$ with $\mathbb{E}\left( Z\right) <\infty $, and $\rho
(Y,X^{\top }\theta )\leq Z$ holds with probability 1 for every $\theta \in
\Theta $. Then using this condition together with Assumption \ref{sep-cond}
(i), we can also deduce from the dominated convergence theorem that $%
S(\theta _{\ast })=\inf_{\theta \in \Theta ^{\prime }}S(\theta )$ and thus
Assumption \ref{sep-cond} (ii) also holds. In the quantile regression case,
we can take the dominating variable $Z$ to be $\left\vert Y\right\vert
+pB^{2}$, which has finite mean provided that the mean of $|Y|$ is also
finite. The requirement that $\mathbb{E}|Y|<\infty $ is not strictly
necessary because we can redefine the quantile regression objective function
by $\rho (Y,X^{\top }\theta )-\rho (Y,X^{\top }\theta _{\ast })$, whose
magnitude is uniformly bounded above by $2pB^{2}$.

For each $\theta $, define $R(\theta )\equiv \mathbb{E}[|X^{\top }(\theta
-\theta _{\ast })|^{2}],$ which is the the expected squared difference of
the true quantile regression function $X^{\top }\theta _{\ast }$ and a
linear fit evaluated at a given parameter vector $\theta $.

\begin{assumption}
\label{key-cond}For some $k\geq k_{0}$ in \eqref{penalized QR}, there exists
a constant $\kappa _{0}>0$ such that 
\begin{equation}
S(\theta )-S(\theta _{\ast })\geq \kappa _{0}^{2}R(\theta )\text{ for all }%
\theta \in \mathbb{B}(k).  \label{margin condition}
\end{equation}
\end{assumption}

Assumption \ref{key-cond} relates $R(\theta )$ to the difference of their
corresponding quantile predictive risks. Given Assumption \ref{design-cond-1}%
, if, for some $k\geq k_{0}$, the distribution $F_{U|X}\left( z|x\right) $
admits a Lebesgue density $f_{U|X}\left( z|x\right) $ that is bounded below
by a positive constant $c_{u}$ for all $z$ in an open interval containing $%
\left[ -B^{2}\left( k+s\right) ,B^{2}\left( k+s\right) \right] $ and for
all $x$ in the support of $X$, then Assumption \ref{key-cond} holds with $%
\kappa _{0}=\sqrt{c_{u}/2}.$

\begin{assumption}
\label{design-cond-2} For some $k\geq k_{0}$ in \eqref{penalized QR}, there
exists a constant $\kappa _{1}>0$ such that 
\begin{equation}
R(\theta )\geq \kappa _{1}^{2}\left\Vert \theta -\theta _{\ast }\right\Vert
_{2}^{2}\text{ for all }\theta \in \mathbb{B}(k).  \label{sparse-eigen}
\end{equation}
\end{assumption}

For any subset $J\subset \{1,...,p\}$, let $X_{J}$ denote the $\left\vert
J\right\vert $-dimensional subvector of $X\equiv (X^{(1)},\ldots
,X^{(p)})^{\top }$ formed by keeping only those elements $X^{(j)}$ with $%
j\in J$. Suppose that, for some $k\geq k_{0}$ and for any subset $J\subset
\{1,...,p\}$ such that $\left\vert J\right\vert \leq k+s$, the smallest
eigenvalue of $\mathbb{E}\left( X_{J}X_{J}^{\top }\right) $ is bounded below
by a positive constant $\omega $. Since $R(\theta )=(\theta -\theta _{\ast
})^{\top }\mathbb{E}\left[ XX^{\top }\right] (\theta -\theta _{\ast })$ and $%
\Vert \theta -\theta _{\ast }\Vert _{0}\leq k+s$ for $\theta \in \mathbb{B}%
(k),$ it then follows that Assumption \ref{design-cond-2} holds with $\kappa
_{1}=\sqrt{\omega }$. This assumption is related to the sparse eigenvalue
condition used in the high dimensional regression literature (see, e.g. %
\citet{Raskutti2011}). For example, if $X$ is a random vector with mean zero
and the covariance matrix $\Sigma $ whose $(i,j)$ component is $\Sigma
_{i,j}=r^{|i-j|}$ for some constant $r>0$, then the smallest eigenvalue of $%
\Sigma $ is bounded away from zero where the lower bound is independent of
the dimension $p$ \citep[][p.\thinspace1384]{van2009} and thus $\mathbb{E}%
\left( X_{J}X_{J}^{\top }\right) $ is bounded below by a universal positive
constant for every $J\subset \{1,...,p\}$.

\subsection{Non-Asymptotic Bounds and Minimax Optimal Rates}

The following theorem is the key step to the main results of this section
for $\ell_0$-PQR.

\begin{theorem}
\label{thm-main-1}Let Assumptions \ref{iden-cond}--\ref{design-cond-2} hold.
Suppose that $s\leq k_{0}$. Then, for any given positive scalar $\eta \leq 1$%
, there is a universal constant $M$, which depends only on $\eta $, such
that, for every $y\geq 1$, 
\begin{align}
& \mathbb{P}\left[ S(\widehat{\theta })-S\left( \theta _{\ast }\right) \geq 
\frac{2\lambda s}{1-\eta }+32C^{2}(s+k_{0})\left( \frac{1+\eta +M\eta y}{%
1-\eta }\right) \frac{\ln (2p)}{n}\right] \leq \exp (-y),
\label{probability bound} \\
& \mathbb{P}\left[ \Vert \widehat{\theta }-\theta _{\ast }\Vert _{0}\geq 
\frac{4-2\eta }{1-\eta }s+32\lambda ^{-1}C^{2}(s+k_{0})\left( \frac{1+\eta
+M\eta y}{1-\eta }\right) \frac{\ln (2p)}{n}\right] \leq \exp (-y),
\label{probability bound hamming}
\end{align}%
where%
\begin{equation}
C\equiv 8LB\kappa _{1}^{-1}\kappa _{0}^{-1},  \label{C0}
\end{equation}%
provided that 
\begin{equation}
\ln (2p)\geq \left( \frac{\kappa _{1}^{2}\kappa _{0}^{2}}{64L}\vee 1\right) .
\label{condition on p}
\end{equation}
\end{theorem}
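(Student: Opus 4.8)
\noindent The plan is to combine a basic inequality coming from the optimality of $\widehat{\theta }$ in \eqref{penalized QR} with a uniform‑in‑$\theta $ control of a suitably weighted empirical process, the latter obtained from Bousquet's concentration inequality together with a peeling argument over the scale of $R(\theta )$ and a union bound over candidate support sets. Throughout, write $\mathbb{P}_{n}$ for the empirical measure of $(Y_{i},X_{i})_{i=1}^{n}$, $\psi _{\theta }\equiv \rho (Y,X^{\top }\theta )-\rho (Y,X^{\top }\theta _{\ast })$, $\mathcal{E}(\theta )\equiv S(\theta )-S(\theta _{\ast })=\mathbb{E}\psi _{\theta }\geq 0$ (Assumption \ref{iden-cond}), and $\nu _{n}(\theta )\equiv (\mathbb{P}_{n}-\mathbb{P})\psi _{\theta }$.

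First, by Assumption \ref{sep-cond}(ii), fix for a small $\varepsilon _{\ast }>0$ a point $\theta _{\ast }^{\prime }\in \Theta ^{\prime }$ with $\Vert \theta _{\ast }^{\prime }\Vert _{0}=s$ and $S(\theta _{\ast }^{\prime })\leq S(\theta _{\ast })+\varepsilon _{\ast }$; since $s\leq k_{0}$ we have $\theta _{\ast }^{\prime }\in \mathbb{B}(k_{0})$, so it is feasible in \eqref{penalized QR}, and optimality of $\widehat{\theta }$ gives
\[
\mathcal{E}(\widehat{\theta })+\lambda \Vert \widehat{\theta }\Vert _{0}\leq \lambda s+\varepsilon _{\ast }+\big(\nu _{n}(\theta _{\ast }^{\prime })-\nu _{n}(\widehat{\theta })\big).
\]
By Assumption \ref{L-cond}, $|\psi _{\theta _{\ast }^{\prime }}-\psi _{\theta }|\leq L|X^{\top }(\theta -\theta _{\ast }^{\prime })|$, and Assumptions \ref{design-cond-1} and \ref{design-cond-2} let one control both the $L_{2}(\mathbb{P})$ size and the sup‑norm of these increments through $R(\theta )$ (up to an $O(\varepsilon _{\ast })$ error, which is harmless because Assumptions \ref{key-cond}--\ref{design-cond-2} force $\Vert \theta _{\ast }^{\prime }-\theta _{\ast }\Vert _{2}^{2}\leq \varepsilon _{\ast }/(\kappa _{0}^{2}\kappa _{1}^{2})$). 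Hence it suffices to show that, with probability at least $1-\exp (-y)$,
\[
\nu _{n}(\theta _{\ast }^{\prime })-\nu _{n}(\theta )\leq \eta \kappa _{0}^{2}R(\theta )+A\qquad \text{for all }\theta \in \mathbb{B}(k_{0}),
\]
where $A$ is a constant multiple of $C^{2}(s+k_{0})(1-\eta )^{-1}(1+\eta +M\eta y)\ln (2p)/n$ with $C$ as in \eqref{C0}. Granting this, the margin condition \eqref{margin condition} (applicable since $\widehat{\theta }\in \mathbb{B}(k_{0})\subseteq \mathbb{B}(k)$) gives $\eta \kappa _{0}^{2}R(\widehat{\theta })\leq \eta \mathcal{E}(\widehat{\theta })$, so that $(1-\eta )\mathcal{E}(\widehat{\theta })+\lambda \Vert \widehat{\theta }\Vert _{0}\leq \lambda s+\varepsilon _{\ast }+A$; letting $\varepsilon _{\ast }\downarrow 0$ and dropping $\lambda \Vert \widehat{\theta }\Vert _{0}$ yields \eqref{probability bound}, whereas dropping $\mathcal{E}(\widehat{\theta })$, dividing by $\lambda $, and using $\Vert \widehat{\theta }-\theta _{\ast }\Vert _{0}\leq \Vert \widehat{\theta }\Vert _{0}+s$ yields \eqref{probability bound hamming}; the slightly larger numerical constants $2/(1-\eta )$ and $(4-2\eta )/(1-\eta )$ in the statement absorb the modest slack in these last manipulations.

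It remains to establish the displayed empirical‑process bound. For each $J\subseteq \{1,\dots ,p\}$ with $|J|\leq k_{0}$, the increments $\{\psi _{\theta _{\ast }^{\prime }}-\psi _{\theta }:\mathrm{supp}(\theta )\subseteq J\}$ live in a model of linear dimension at most $s+k_{0}$. I would partition $\{\theta \in \Theta :\mathrm{supp}(\theta )\subseteq J\}$ into dyadic shells indexed by the value of $R(\theta )$ and, on the shell of radius $r$, apply Bousquet's inequality: by the contraction principle (Assumption \ref{L-cond}) the expected supremum of $\nu _{n}$ over the shell is at most a constant times $Lr\sqrt{(s+k_{0})/n}$ (a Cauchy--Schwarz/whitening estimate on an $(s+k_{0})$‑dimensional coordinate subspace), the variance proxy is $O(L^{2}r^{2})$, and the envelope is $O(LB\kappa _{1}^{-1}\sqrt{s+k_{0}}\,r)$ — here Assumptions \ref{design-cond-1} and \ref{design-cond-2} bring in the $B$ and $\kappa _{1}^{-1}$ appearing in $C$. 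A union bound over the $\binom{p}{k_{0}}\leq (2p)^{k_{0}}$ sets $J$ supplies the $\ln (2p)$ factor, and a further union over the dyadic scales costs only a bounded geometric sum; I would then convert the $Lr\sqrt{\cdot }$ and envelope terms into $\eta \kappa _{0}^{2}R(\theta )+(\text{additive})$ by the arithmetic--geometric‑mean inequality with weight $\eta $, which is the source of the $\kappa _{0}^{-2}$ in $C$ and of the $\eta $‑dependence of $A$ and of $M$. Condition \eqref{condition on p} is exactly what makes the envelope‑times‑$y/n$ (Bernstein) term dominated by the variance term in this trade‑off, so that no $y^{2}/n^{2}$ contribution survives; peeling then upgrades the per‑shell bound to the uniform bound over $\mathbb{B}(k_{0})$.

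The main obstacle is this last step: carrying through the peeling, the combinatorial union bound, and Bousquet's Bernstein‑type tail simultaneously while keeping a single universal constant $M$ that depends only on $\eta$, and in particular tuning the arithmetic--geometric‑mean split and invoking \eqref{condition on p} so that the envelope term is absorbed cleanly. The remaining ingredients — the basic inequality, the contraction‑principle and dimension‑counting estimates for the localized linear classes, the measurability safeguard from Assumption \ref{sep-cond}(i), and the bookkeeping that turns the rearranged inequality into the two stated tail bounds — are comparatively routine.
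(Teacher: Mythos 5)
Your overall architecture matches the paper's: a basic inequality from the optimality of $\widehat{\theta}$, a reduction of both tail bounds to the control of the empirical process $\overline{\Delta}_n(\theta_*',\theta)=\nu_n(\theta_*')-\nu_n(\theta)$ uniformly over $\mathbb{B}(k_0)$, and Bousquet's inequality as the concentration tool, with Assumptions \ref{key-cond}--\ref{design-cond-2} converting variances and radii into excess risk. The bookkeeping you describe for passing from the uniform bound to \eqref{probability bound} and \eqref{probability bound hamming} is essentially the paper's Lemma \ref{lemma-prob-bound}. But the technical core is where you diverge, and where the gap sits. The paper never takes a union bound over supports: it forms the single self-normalized process $V_x=\sup_{\theta}\overline{\Delta}_n(\theta_*',\theta)/[\Delta(\theta,\theta_*)+\varepsilon_*^2+x^2]$, applies Bousquet once to that ratio class, and bounds $\mathbb{E}[V_x]$ via Lemma A.5 of Massart and N\'ed\'elec together with symmetrization, contraction, and H\"older's inequality, the $\ln(2p)$ arising from $\mathbb{E}\max_{j\le p}|n^{-1}\sum_i\epsilon_iX_i^{(j)}|$ (Hoeffding over $2p$ signed coordinates), not from counting supports. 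Crucially, because the denominator is at least $x^2$ with $x^2=My\varepsilon_*^2\propto y\,(s+k_0)\ln(2p)/n$, the envelope of the normalized class is $b_x\propto (s+k_0)/x^2$ and the Bernstein term $b_xy/n$ collapses to $LB^2/(16MC^2\ln(2p))$, with the $y$, $n$, and $(s+k_0)$ all cancelling; condition \eqref{condition on p} then only needs to make this an absolute constant.

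Your route replaces this with a union bound over the $\binom{p}{k_0}\le(2p)^{k_0}$ candidate supports plus per-support, per-shell Bousquet, and that is where the argument breaks. To survive the union bound you must run Bousquet at level $t\gtrsim k_0\ln(2p)+y$, and the Bernstein (envelope) term is then of order $b\,t/n$ with $b\gtrsim LB\kappa_1^{-1}\sqrt{s+k_0}\,r$ on the shell of radius $r$ (or $b\approx 2LB^2(s+k_0)$ globally). After the arithmetic--geometric-mean split this leaves an additive contribution of order $(s+k_0)t^2/n^2\propto(s+k_0)(k_0\ln(2p)+y)^2/n^2$, or equivalently forces the bottom peeling radius to satisfy $r^2\gtrsim(s+k_0)(k_0\ln(2p)+y)/(\eta n)\cdot LB^2$ --- either way a term that exceeds the claimed threshold $32C^2(s+k_0)(1+\eta+M\eta y)\ln(2p)/n$ by a factor of order $k_0$ unless one additionally assumes $k_0\ln(2p)\lesssim n$ (or $k_0$ bounded), neither of which is hypothesized. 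Your claim that condition \eqref{condition on p} "is exactly what makes the envelope term dominated by the variance term" is not correct: that condition only asserts $\ln(2p)\ge(\kappa_1^2\kappa_0^2/(64L))\vee 1$ and says nothing about $k_0$ versus $n$; in the paper it plays the much more modest role of bounding $b_xy/n$ and $v_xy/n$ by absolute constants after the self-normalization has already done the cancellation. Since you also explicitly leave this step as "the main obstacle" rather than carrying it out, the proposal is incomplete precisely at its crux. The fix is either to adopt the paper's ratio-process device (or, equivalently, to peel on $\Delta(\theta,\theta_*)$ while bounding the local expected supremum over all of $\mathbb{B}(k_0)$ at once via H\"older and $\max_j|n^{-1}\sum_i\epsilon_iX_i^{(j)}|$, so that no union over supports enters the Bousquet level), or to absorb support-dependent union-bound costs into the penalty $\lambda\Vert\theta\Vert_0$ --- but the latter changes the form of the stated bound.
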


Results (\ref{probability bound}) and (\ref{probability bound hamming}) of
Theorem \ref{thm-main-1} are non-asymptotic and establish exponential
inequalities on the tail probabilities of the excess quantile predictive
risk $S(\widehat{\theta })-S\left( \theta _{\ast }\right) $ as well as the $%
\ell _{0}$-distance between the $\ell _{0}$-PQR estimator and the true
parameter value. Applying inequality (\ref{probability bound}), we can
obtain non-asymptotic upper bounds on a triplet of population quantities:
(i) the mean excess predictive risk $\mathbb{E}[S(\widehat{\theta }%
)-S(\theta _{\ast })]$; (ii) the mean-square regression function estimation
error $\mathbb{E}[R(\widehat{\theta })]$; (iii) the mean-square parameter
estimation error $\mathbb{E}[\Vert \widehat{\theta }-\theta _{\ast }\Vert
_{2}^{2}]$. The results concerning these bounds are given in the next
theorem.

\begin{theorem}
\label{mean excess risk and mean squared estimation error}Let Assumptions %
\ref{iden-cond}--\ref{design-cond-2} hold. Suppose that $s\leq k_{0}$. Given
condition (\ref{condition on p}) of Theorem \ref{thm-main-1}, there is a
universal constant $K$, which depends only on the constants $L$ and $B$,
such that the following bounds hold:%
\begin{eqnarray}
\mathbb{E}\left[ S(\widehat{\theta })-S\left( \theta _{\ast }\right) \right]
&\leq &4\lambda s+\frac{K(s+k_{0})}{\kappa _{1}^{2}\kappa _{0}^{2}}\frac{\ln
(2p)}{n},  \label{mean excess risk} \\
\mathbb{E}\left[ R(\widehat{\theta })\right] &\leq &\kappa _{0}^{-2}\left(
4\lambda s+\frac{K(s+k_{0})}{\kappa _{1}^{2}\kappa _{0}^{2}}\frac{\ln (2p)}{n%
}\right) ,  \label{mean squared prediction error} \\
\mathbb{E}\left[ \left\Vert \widehat{\theta }-\theta _{\ast }\right\Vert
_{2}^{2}\right] &\leq &\kappa _{1}^{-2}\kappa _{0}^{-2}\left( 4\lambda s+%
\frac{K(s+k_{0})}{\kappa _{1}^{2}\kappa _{0}^{2}}\frac{\ln (2p)}{n}\right) .
\label{mean squared estimation error}
\end{eqnarray}
\end{theorem}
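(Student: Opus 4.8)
The plan is to derive all three bounds directly from the exponential tail inequality (\ref{probability bound}) of Theorem \ref{thm-main-1}: first integrate that inequality to control the mean excess predictive risk $\mathbb{E}[S(\widehat\theta)-S(\theta_\ast)]$, and then convert this into bounds on $\mathbb{E}[R(\widehat\theta)]$ and $\mathbb{E}[\|\widehat\theta-\theta_\ast\|_2^2]$ using Assumptions \ref{key-cond} and \ref{design-cond-2}.

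First I would establish (\ref{mean excess risk}). Set $W\equiv S(\widehat\theta)-S(\theta_\ast)$, which is a nonnegative random variable by Assumption \ref{iden-cond} and is a bona fide integrable random variable because $\Theta$ is compact and $\rho$ is dominated under Assumptions \ref{L-cond} and \ref{sep-cond}, so that the layer-cake identity $\mathbb{E}[W]=\int_0^\infty\mathbb{P}(W\geq t)\,dt$ applies. Fix $\eta=1/2$, so that the constant $M$ in Theorem \ref{thm-main-1} becomes a genuine universal constant and $2\lambda s/(1-\eta)=4\lambda s$, while $(1+\eta)/(1-\eta)=3$ and $M\eta/(1-\eta)=M$. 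Then (\ref{probability bound}) reads $\mathbb{P}(W\geq A+By)\leq e^{-y}$ for all $y\geq 1$, with $A\equiv 4\lambda s+96\,C^2(s+k_0)\ln(2p)/n$ and $B\equiv 32M\,C^2(s+k_0)\ln(2p)/n$. Splitting the tail integral at $t=A+B$, bounding $\mathbb{P}(W\geq t)\leq 1$ on $[0,A+B]$, and substituting $t=A+By$ on $[A+B,\infty)$ (where the tail bound is valid since $y\geq 1$) gives $\mathbb{E}[W]\leq(A+B)+B\int_1^\infty e^{-y}\,dy=A+B(1+e^{-1})\leq A+2B$. Since $C^2=64L^2B^2\kappa_1^{-2}\kappa_0^{-2}$ by (\ref{C0}), collecting the two $\ln(2p)/n$ terms yields $\mathbb{E}[W]\leq 4\lambda s+K(s+k_0)\kappa_1^{-2}\kappa_0^{-2}\ln(2p)/n$ with $K\equiv 64(96+64M)L^2B^2$, which depends only on $L$ and $B$; this is (\ref{mean excess risk}).

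Next I would obtain (\ref{mean squared prediction error}) and (\ref{mean squared estimation error}) pointwise and take expectations. Since $\widehat\theta$ solves (\ref{penalized QR}) it lies in $\mathbb{B}(k_0)\subseteq\mathbb{B}(k)$ for the $k\geq k_0$ appearing in Assumptions \ref{key-cond} and \ref{design-cond-2}, so almost surely $S(\widehat\theta)-S(\theta_\ast)\geq\kappa_0^2R(\widehat\theta)$ and $R(\widehat\theta)\geq\kappa_1^2\|\widehat\theta-\theta_\ast\|_2^2$. Rearranging gives $R(\widehat\theta)\leq\kappa_0^{-2}W$ and $\|\widehat\theta-\theta_\ast\|_2^2\leq\kappa_1^{-2}\kappa_0^{-2}W$ pointwise; taking expectations and inserting the bound on $\mathbb{E}[W]$ from the previous step delivers (\ref{mean squared prediction error}) and (\ref{mean squared estimation error}) with the same $K$.

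The argument is essentially bookkeeping once Theorem \ref{thm-main-1} is available; the only points requiring care are (a) that the tail bound in (\ref{probability bound}) is asserted only for $y\geq 1$, so the probability mass on small $t$ must be controlled trivially by $1$ rather than via the exponential bound, (b) verifying at the outset that $W$ is a genuine nonnegative, integrable random variable so that the tail-integral representation is legitimate, and (c) tracking how the universal constant $M$ (now universal because $\eta$ is fixed) together with $L$ and $B$ enter, so that the final $K$ depends on $L$ and $B$ alone.
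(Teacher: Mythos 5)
Your proposal is correct and follows essentially the same route as the paper's proof: fix $\eta=1/2$ in (\ref{probability bound}), integrate the tail to get $\mathbb{E}[S(\widehat\theta)-S(\theta_\ast)]\leq A+2B$ with the same $A$ and $B$, and then pass to the other two bounds via Assumptions \ref{key-cond} and \ref{design-cond-2}. The only cosmetic difference is where you split the tail integral (at $t=A+B$ versus the paper's change of variables from $t=0$), which yields the identical bound $A+2B$ and the same constant $K$.
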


If $k_0/s$ is bounded by a fixed constant, we can deduce from Theorem \ref%
{mean excess risk and mean squared estimation error} that 
\begin{align*}
\mathbb{E}\left[ S(\widehat{\theta})-S\left( \theta _{\ast }\right) \right]
& =O\left[ \left( \lambda +n^{-1}\ln p\right) s\right] , \\
\mathbb{E}\left[ R(\widehat{\theta})\right] & =O\left[ \left( \lambda
+n^{-1}\ln p\right) s\right] , \\
\mathbb{E}\left[ \left\Vert \widehat{\theta }-\theta _{\ast }\right\Vert
_{2}^{2}\right] & =O\left[ \left( \lambda +n^{-1}\ln p\right) s\right] ,
\end{align*}%
which suggests that the optimal $\lambda $ be of the following form: 
\begin{equation}  \label{optimal-lambda}
\lambda =C_{\lambda }\frac{\ln p}{n},
\end{equation}%
where $C_{\lambda }$ is a positive constant that needs to be chosen by a
researcher. Under \eqref{optimal-lambda} and the side condition that $k_0/s
\leq C_k$ for some fixed constant $C_k$, we have that 
\begin{equation*}
\mathbb{E}\left[ S(\widehat{\theta})-S\left( \theta _{\ast }\right) \right]
=O\left( \frac{s\ln p}{n}\right) ,\mathbb{E}\left[ R(\widehat{\theta})\right]
=O\left( \frac{s\ln p}{n}\right) \text{ and }\mathbb{E}\left[ \left\Vert 
\widehat{\theta }-\theta _{\ast }\right\Vert _{2}^2 \right] =O\left( \frac{%
s\ln p}{n}\;\right).
\end{equation*}

We now specialize Theorem \ref{mean excess risk and mean squared estimation
error} to quantile regression. The following corollary provides the main
results for $\ell_0$-PQR.

\begin{corollary}
\label{thm-cor-1} Assume that (i) \eqref{boundedness} holds and $k_{0}\in
\lbrack s,C_{k}s]$ for a fixed constant $C_{k}\geq 1$, (ii) $\lambda
=C_{\lambda }{\ln p}/{n}$ for a fixed constant $C_{\lambda }>0$, (iii) $\mathbb{E}|Y|<\infty$, (iv) $f_{U|X}\left(
z|x\right) $ is bounded below by $c_{u}>0$ for all $z$ in an open interval
containing $\left[ -B^{2}\left( k_{0}+s\right) ,B^{2}\left( k_{0}+s\right) %
\right] $ and for all $x$ in the support of $X$, (v) for any subset $%
J\subset \{1,...,p\}$ such that $\left\vert J\right\vert \leq \left(
k_{0}+s\right) $, the smallest eigenvalue of $\mathbb{E}\left(
X_{J}X_{J}^{\top }\right) $ is bounded below by a positive constant $\omega $%
. Then, there is a universal constant $\bar{K}$, which depends only on the
constants $B$, such that 
\begin{eqnarray}
\mathbb{E}\left[ S(\widehat{\theta })-S\left( \theta _{\ast }\right) \right]
&\leq &4C_{\lambda }\frac{s\ln p}{n}+\frac{\bar{K}\left( C_{k}+1\right) s}{%
c_{u}\omega }\frac{\ln (2p)}{n},  \label{mean excess risk qr} \\
\mathbb{E}\left[ R(\widehat{\theta })\right] &\leq &\frac{8C_{\lambda }}{%
c_{u}}\frac{s\ln p}{n}+\frac{2\bar{K}\left( C_{k}+1\right) s}{%
c_{u}^{2}\omega }\frac{\ln (2p)}{n},  \label{mean squared prediction qr} \\
\mathbb{E}\left[ \left\Vert \widehat{\theta }-\theta _{\ast }\right\Vert
_{2}^{2}\right] &\leq &\frac{8C_{\lambda }}{c_{u}\omega }\frac{s\ln p}{n}+%
\frac{2\bar{K}\left( C_{k}+1\right) s}{c_{u}^{2}\omega ^{2}}\frac{\ln (2p)}{n%
},  \label{mean squared estimation error qr}
\end{eqnarray}%
provided that 
\begin{equation}
\ln (2p)\geq \left( \frac{c_{u}\omega }{128}\vee 1\right) .
\label{condition on p spec}
\end{equation}
\end{corollary}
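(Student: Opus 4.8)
The plan is to obtain Corollary \ref{thm-cor-1} as a direct specialization of Theorem \ref{mean excess risk and mean squared estimation error}, so the work splits into two tasks: (a) check that Assumptions \ref{iden-cond}--\ref{design-cond-2} are implied by hypotheses (i)--(v), together with the explicit values of the constants $L$, $B$, $\kappa_{0}$, $\kappa_{1}$ they produce for quantile regression; and (b) substitute these values into the three bounds \eqref{mean excess risk}--\eqref{mean squared estimation error} and simplify.

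For task (a), I would proceed assumption by assumption. Assumption \ref{iden-cond} holds by the representation \eqref{difference in QR objective functions}, since the inner integrand $F_{U|X}(z|x)-F_{U|X}(0|x)$ has the same sign as $z$. Assumption \ref{L-cond} holds with $L=1$ because $\rho(t,u)$ is the $\tau$-check function evaluated at $t-u$, which is Lipschitz in its argument with constant $\max\{\tau,1-\tau\}\le 1$. Assumption \ref{design-cond-1} is exactly hypothesis (i), so $B$ is the constant appearing in \eqref{boundedness}. Assumption \ref{sep-cond} follows from the discussion in Section \ref{theory:assump}: take $\Theta'=\Theta\cap\mathbb{Q}^{p}$ for part (i), and for part (ii) use the dominating envelope $Z=|Y|+pB^{2}$, which has finite mean by hypothesis (iii), together with the dominated convergence theorem. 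For Assumption \ref{key-cond} I would take $k=k_{0}$ and note that for $\theta\in\mathbb{B}(k_{0})$ one has $\|\theta-\theta_{\ast}\|_{0}\le k_{0}+s$, hence $|x^{\top}(\theta-\theta_{\ast})|\le 2B^{2}(k_{0}+s)$ pointwise on the support of $X$; hypothesis (iv) then gives $\mathrm{sgn}(z)\,[F_{U|X}(z|x)-F_{U|X}(0|x)]\ge c_{u}|z|$ on the relevant range, so the inner integral in \eqref{difference in QR objective functions} is at least $\tfrac{c_{u}}{2}|x^{\top}(\theta-\theta_{\ast})|^{2}$, and integrating over $x$ yields Assumption \ref{key-cond} with $\kappa_{0}^{2}=c_{u}/2$. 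Finally, for Assumption \ref{design-cond-2}, writing $J=\mathrm{supp}(\theta)\cup\mathrm{supp}(\theta_{\ast})$ with $|J|\le k_{0}+s$ gives $R(\theta)=(\theta-\theta_{\ast})_{J}^{\top}\mathbb{E}(X_{J}X_{J}^{\top})(\theta-\theta_{\ast})_{J}\ge\omega\|\theta-\theta_{\ast}\|_{2}^{2}$ by hypothesis (v), so Assumption \ref{design-cond-2} holds with $\kappa_{1}^{2}=\omega$.

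For task (b), first I would observe that condition \eqref{condition on p} of Theorem \ref{thm-main-1} reads $\ln(2p)\ge(\kappa_{1}^{2}\kappa_{0}^{2}/(64L))\vee 1=(c_{u}\omega/128)\vee 1$, which is precisely \eqref{condition on p spec}, so the hypotheses of Theorem \ref{mean excess risk and mean squared estimation error} are in force. Then I substitute $L=1$, $\kappa_{0}^{2}=c_{u}/2$, $\kappa_{1}^{2}=\omega$, $\lambda=C_{\lambda}\ln p/n$, and $s+k_{0}\le(C_{k}+1)s$ into \eqref{mean excess risk}--\eqref{mean squared estimation error}. Since the constant $K$ of Theorem \ref{mean excess risk and mean squared estimation error} depends only on $L$ and $B$, and $L=1$ is now fixed, it depends only on $B$; setting $\bar{K}=2K$ absorbs the factor $\kappa_{1}^{2}\kappa_{0}^{2}=c_{u}\omega/2$ in the denominator, and the prefactors $\kappa_{0}^{-2}=2/c_{u}$ and $\kappa_{1}^{-2}\kappa_{0}^{-2}=2/(c_{u}\omega)$ turn \eqref{mean squared prediction error} and \eqref{mean squared estimation error} into the stated \eqref{mean squared prediction qr} and \eqref{mean squared estimation error qr}.

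I do not anticipate a genuine obstacle here: once the constants are identified the argument is bookkeeping, and the identifications are essentially already spelled out in the remarks following Assumptions \ref{L-cond}, \ref{key-cond}, and \ref{design-cond-2}. The one point needing care is the choice $k=k_{0}$ when invoking Assumptions \ref{key-cond} and \ref{design-cond-2}: it is exactly this choice that makes the envelope bound $|x^{\top}(\theta-\theta_{\ast})|\le 2B^{2}(k_{0}+s)$ line up with the interval in hypothesis (iv) and the cardinality bound $|J|\le k_{0}+s$ line up with hypothesis (v), so one must ensure that $k_{0}$, rather than a larger admissible $k$, is the value carried into Theorem \ref{mean excess risk and mean squared estimation error}.
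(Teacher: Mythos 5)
Your proposal is correct and follows essentially the same route as the paper: verify that hypotheses (i)--(v) deliver Assumptions \ref{iden-cond}--\ref{design-cond-2} with $L=1$, $\kappa_{0}=\sqrt{c_{u}/2}$ (via the envelope bound $|x^{\top}(\theta-\theta_{\ast})|\le 2B^{2}(k_{0}+s)$ and \eqref{difference in QR objective functions}), and $\kappa_{1}=\sqrt{\omega}$, then specialize Theorem \ref{mean excess risk and mean squared estimation error}. The paper's proof is just a terser version of the same argument, leaning on the discussion in Section \ref{theory:assump} for the routine verifications, and your constant bookkeeping ($\bar{K}=2K$, \eqref{condition on p} reducing to \eqref{condition on p spec}) checks out.
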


Corollary \ref{thm-cor-1} provides non-asymptotic bounds on the mean-square
regression function and parameter estimation errors as well as the excess
quantile prediction risk. The resulting rates of convergence are of order $%
s\ln p/n$, which is the same as those of $\ell _{1}$-PQR and non-convex penalized quantile regression (see, e.g., \citet{belloni2011} and \citet{Wang13} for earlier
results and \citet{wang2019wp} and \citet{wang2022} for the latest
results). These are nearly minimax optimal rates of convergence because it
is shown in \citet[Theorem
4.1(i)]{wang2019wp} that the minimax lower bound for $\mathbb{E}[\Vert 
\widehat{\theta }-\theta _{\ast }\Vert _{2}^{2}]$ is of order $s\ln (p/s)/n$%
. The optimal tuning parameter $\lambda $ in $\ell _{1}$-PQR is of order $%
\sqrt{\ln p/n}$, whereas it is of order ${\ln p/n}$ in $\ell _{0}$-PQR.

\begin{remark}
Instead of assuming condition (iv) in Corollary \ref{thm-cor-1}, one may
assume the regularity conditions that are similar to those imposed in \cite%
{belloni2011}: that is, $f_{U|X}\left( 0|x\right) $ is bounded below by a
positive constant for all $x$ in the support of $X$, $\partial f_{U|X}\left(
z|x\right) /\partial z$ exists and is bounded in absolute value by a
constant uniformly in $(z,x)$, and 
\begin{equation*}
\inf_{\theta \in \mathbb{B}(k_{0}):\theta \neq \theta _{\ast }}\frac{\left\{ 
\mathbb{E}\left[ \left\vert X^{\top }(\theta -\theta _{\ast })\right\vert
^{2}\right] \right\} ^{3/2}}{\mathbb{E}\left[ \left\vert X^{\top }(\theta
-\theta _{\ast })\right\vert ^{3}\right] }>0.
\end{equation*}%
The last condition is called the restricted nonlinearity condition %
\citep{belloni2011}. In a recent working paper, \citet{wang2019wp}
established theoretical results for $\ell _{1}$-PQR without relying on the
restricted nonlinearity condition. In fact, \citet{wang2019wp} only assumed
a uniform lower bound for $f_{U|X}\left( \cdot |x\right) $ in a neighborhood
of zero, which is weaker than condition (iv) in Corollary \ref{thm-cor-1}.
It is an open question whether we can verify Assumption \ref{key-cond} under
a weaker condition imposed in \citet{wang2019wp}.
\end{remark}

Using the method for proving Corollary \ref{thm-cor-1}, we can obtain the
following result for $\ell_0$-CQR.

\begin{corollary}
\label{thm-cor-2} Assume that (i) \eqref{boundedness} holds, (ii) $s \leq q$%
, (iii) $\mathbb{E}|Y|<\infty $, (iv) $f_{U|X}\left( z|x\right) $ is bounded
below by $c_{u}>0$ for all $z$ in an open interval containing $\left[
-B^{2}\left( q+s\right) ,B^{2}\left( q+s\right) \right] $ and for all $x$
in the support of $X$, (v) for any subset $J\subset \{1,...,p\}$ such that $%
\left\vert J\right\vert \leq \left( q+s\right) $, the smallest eigenvalue of 
$\mathbb{E}\left( X_{J}X_{J}^{\top }\right) $ is bounded below by a positive
constant $\omega $. Then, there is a universal constant $\tilde{K}$, which
depends only on the constant $B$, such that 
\begin{eqnarray}
\mathbb{E}\left[ S(\widetilde{\theta})-S\left( \theta _{\ast }\right) \right]
&\leq &\frac{\tilde{K}(s+q)}{c_{u}\omega }\frac{\ln (2p)}{n},
\label{mean excess risk qr tilde} \\
\mathbb{E}\left[ R(\widetilde{\theta})\right] &\leq &\frac{2\tilde{K}(s+q)}{%
c_{u}^{2}\omega }\frac{\ln (2p)}{n},
\label{mean squared prediction qr tilde} \\
\mathbb{E}\left[ \left\Vert \widetilde{\theta }-\theta _{\ast }\right\Vert
_{2}^{2}\right] &\leq & \frac{2\tilde{K}(s+q)}{c_{u}^{2}\omega ^{2}}\frac{%
\ln (2p)}{n},  \label{mean squared estimation error qr tilde}
\end{eqnarray}%
provided that \eqref{condition on p spec} holds.
\end{corollary}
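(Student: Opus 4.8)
\textit{Proof proposal.} The plan is to recognize that the $\ell_{0}$-CQR estimator $\widetilde{\theta}$ with imposed sparsity $q$ is nothing but the $\ell_{0}$-PQR estimator $\widehat{\theta}$ of \eqref{penalized QR} with the tuning parameter set to $\lambda=0$ and the sparsity bound set to $k_{0}=q$: in both cases one minimizes $S_{n}(\theta)$ over $\mathbb{B}(q)$. Consequently, Corollary \ref{thm-cor-2} will follow from Theorem \ref{mean excess risk and mean squared estimation error}, specialized to quantile regression exactly as in the proof of Corollary \ref{thm-cor-1}, once I check that Assumptions \ref{iden-cond}--\ref{design-cond-2} hold under conditions (i)--(v) with the role of $k_{0}$ played by $q$. (Alternatively, one could re-run the argument behind Theorem \ref{thm-main-1} directly, but the reduction is cleaner, and no simplifying side condition such as $k_{0}\le C_{k}s$ is needed here because the bounds are stated in terms of $s+q$ rather than collapsed into an $O(s\ln p/n)$ rate.)

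The verification of the assumptions mirrors the proof of Corollary \ref{thm-cor-1}. Assumption \ref{iden-cond} holds by identity \eqref{difference in QR objective functions} together with monotonicity of $F_{U|X}(\cdot\,|x)$; Assumption \ref{L-cond} holds with $L=1$; Assumption \ref{design-cond-1} is condition (i); Assumption \ref{sep-cond} holds by taking $\Theta^{\prime}=\Theta\cap\mathbb{Q}^{p}$ and using $\mathbb{E}|Y|<\infty$ from condition (iii) to build the dominating variable $Z=|Y|+pB^{2}$, as discussed after Assumption \ref{sep-cond}. Condition (ii) gives $s\le q=k_{0}$. For $\theta\in\mathbb{B}(q)$ one has $\theta-\theta_{\ast}\in\mathbb{B}(q+s)$, hence $|x^{\top}(\theta-\theta_{\ast})|\le 2B^{2}(q+s)$ under \eqref{boundedness}; therefore condition (iv) yields Assumption \ref{key-cond} with $\kappa_{0}=\sqrt{c_{u}/2}$ and $k=k_{0}=q$, and condition (v) yields Assumption \ref{design-cond-2} with $\kappa_{1}=\sqrt{\omega}$ and $k=k_{0}=q$. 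Finally, condition \eqref{condition on p} of Theorem \ref{thm-main-1} reads $\ln(2p)\ge(\kappa_{1}^{2}\kappa_{0}^{2}/(64L))\vee 1=(\omega c_{u}/128)\vee 1$, which is exactly \eqref{condition on p spec}.

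It then remains to substitute into the bounds \eqref{mean excess risk}--\eqref{mean squared estimation error} of Theorem \ref{mean excess risk and mean squared estimation error} with $\lambda=0$, $L=1$, $k_{0}=q$, $\kappa_{0}^{2}=c_{u}/2$ and $\kappa_{1}^{2}=\omega$. The penalty terms $4\lambda s$ vanish; since the constant $K$ there depends only on $L$ and $B$, with $L=1$ it becomes a constant depending only on $B$; and collecting the prefactors $\kappa_{0}^{-2}=2/c_{u}$ and $\kappa_{1}^{-2}\kappa_{0}^{-2}=2/(c_{u}\omega)$ produces precisely \eqref{mean excess risk qr tilde}--\eqref{mean squared estimation error qr tilde} with $\tilde{K}=2K$.

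The one point that needs care—and the closest thing to an obstacle—is verifying that Theorem \ref{mean excess risk and mean squared estimation error} is still applicable at $\lambda=0$. Its proof obtains the three displayed bounds by integrating the tail inequality \eqref{probability bound} of Theorem \ref{thm-main-1}, and \eqref{probability bound} is valid for every $\lambda\ge 0$; the only quantity that degenerates is the Hamming-type bound \eqref{probability bound hamming}, which carries a factor $\lambda^{-1}$, but Corollary \ref{thm-cor-2} makes no claim about $\|\widetilde{\theta}-\theta_{\ast}\|_{0}$, so this is immaterial. Hence, as the text indicates, the argument is essentially the same bookkeeping as for Corollary \ref{thm-cor-1}.
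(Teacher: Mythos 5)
Your proposal is correct and follows essentially the same route as the paper: the authors likewise obtain the tail bound by running the Theorem \ref{thm-main-1} argument with $\lambda=0$ and sparsity bound $q$, and then proceed exactly as in the proof of Corollary \ref{thm-cor-1}. Your explicit identification of $\widetilde{\theta}$ with the $\ell_{0}$-PQR estimator at $\lambda=0$, $k_{0}=q$, and your remark that only the $\lambda^{-1}$-weighted Hamming bound degenerates, are accurate refinements of the same bookkeeping.
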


Corollary \ref{thm-cor-2} shows that the $\ell_0$-CQR estimator is also
nearly minimax optimal, provided that the imposed sparsity $q$ is at least
as large as the true sparsity $s$ and that $q/s$ is bounded by a fixed
constant. Therefore, our theory predicts that $\ell_0$-PQR and $\ell_0$-CQR
would perform similarly in applications.

\subsection{Hamming Loss}

Applying (\ref{probability bound hamming}) of Theorem \ref{thm-main-1}, we
now derive a theoretical result regarding the $\ell _{0}$-PQR in terms of
expected Hamming loss. Specifically, the following theorem presents an upper
bound on the expectation of the Hamming distance between $\widehat{\theta }$
and $\theta _{\ast }$.

\begin{theorem}
\label{thm:sparsity} Let Assumptions \ref{iden-cond}--\ref{design-cond-2}
hold. Furthermore, \eqref{condition on p} holds, $k_{0}\in \lbrack s,C_{k}s]$
for a fixed constant $C_{k}\geq 1$, and $\lambda =C_{\lambda }{\ln (p)}/{n}$%
. For any given $\nu >0$, there exists a sufficiently large constant $%
C_{\lambda }$, which does not depend on $(s,n,p)$, such that%
\begin{equation*}
\mathbb{E}\left[ \frac{\Vert \widehat{\theta }-\theta _{\ast }\Vert _{0}}{s}%
\right] \leq (4+\nu ).
\end{equation*}
\end{theorem}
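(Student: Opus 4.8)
The plan is to derive the expectation bound directly from the exponential tail inequality \eqref{probability bound hamming} of Theorem \ref{thm-main-1} by integrating the tail, and then to calibrate the free parameters $\eta$ and $C_{\lambda}$ in the correct order. Write $W \equiv \Vert \widehat{\theta} - \theta_{\ast}\Vert_{0}$, and note that $W$ is a nonnegative integer with $W \le 2k_{0}$ almost surely (since $\widehat{\theta}\in \mathbb{B}(k_{0})$ and $\Vert \theta_{\ast}\Vert_{0} = s \le k_{0}$), so $\mathbb{E}[W]<\infty$ and $\mathbb{E}[W] = \int_{0}^{\infty}\mathbb{P}(W>t)\,dt$. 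Set $A \equiv \frac{4-2\eta}{1-\eta}\,s$, $B_{0} \equiv 32\lambda^{-1}C^{2}(s+k_{0})\frac{1+\eta}{1-\eta}\frac{\ln(2p)}{n}$ and $B_{1} \equiv 32\lambda^{-1}C^{2}(s+k_{0})\frac{M\eta}{1-\eta}\frac{\ln(2p)}{n}$, so that \eqref{probability bound hamming} reads $\mathbb{P}(W \ge A + B_{0} + B_{1}y) \le e^{-y}$ for every $y\ge 1$. Splitting the tail integral at $t_{0}\equiv A+B_{0}+B_{1}$ and substituting $t = A+B_{0}+B_{1}y$ on $(t_{0},\infty)$ gives $\mathbb{E}[W]\le t_{0} + B_{1}\int_{1}^{\infty}e^{-y}\,dy \le A + B_{0} + 2B_{1}$.

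Dividing by $s$ and using $\ln(2p)\le 2\ln p$ (valid since \eqref{condition on p} forces $p\ge 2$), $\lambda = C_{\lambda}\ln p/n$, and $k_{0}/s\le C_{k}$, we obtain
\[
\mathbb{E}\!\left[\frac{W}{s}\right] \;\le\; \frac{4-2\eta}{1-\eta} \;+\; \frac{32C^{2}(1+k_{0}/s)(1+\eta+2M\eta)}{1-\eta}\cdot\frac{\ln(2p)}{C_{\lambda}\ln p} \;\le\; \frac{4-2\eta}{1-\eta} \;+\; \frac{64\,C^{2}(1+C_{k})(1+\eta+2M\eta)}{C_{\lambda}(1-\eta)}.
\]
Since $\frac{4-2\eta}{1-\eta}$ is increasing in $\eta$ and equals $4$ at $\eta=0$, given $\nu>0$ one first chooses $\eta\in(0,1]$ small enough that $\frac{4-2\eta}{1-\eta}\le 4+\nu/2$. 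This fixes $\eta$, hence fixes $M=M(\eta)$, while $C = 8LB\kappa_{1}^{-1}\kappa_{0}^{-1}$ and $C_{k}$ are already constants; one then picks $C_{\lambda}$ large enough (depending only on $\nu,\eta,M,C,C_{k}$, and therefore not on $(s,n,p)$) that the residual term is at most $\nu/2$. Adding the two pieces yields $\mathbb{E}[W/s]\le 4+\nu$, which is the claim.

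The proof is short, and the only point requiring care is the order of the quantifiers: because the universal constant $M$ in Theorem \ref{thm-main-1} depends on $\eta$, the value of $\eta$ must be pinned down before $C_{\lambda}$ is chosen; once $\eta$ is fixed, every quantity entering the residual term is a genuine constant independent of $(s,n,p)$, so absorbing it into a large $C_{\lambda}$ is immediate. The almost-sure bound $W\le 2k_{0}$ is used only to guarantee $\mathbb{E}[W]<\infty$ and to license the tail-integral representation; no finer property of it is needed.
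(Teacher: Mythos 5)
Your proof is correct and follows essentially the same route as the paper's: apply the Hamming tail bound \eqref{probability bound hamming}, integrate the tail to get $\mathbb{E}[W]\le A+B_{0}+2B_{1}$, convert $\ln(2p)$ to $2\ln p$, and then fix $\eta$ small (which pins down $M$) \emph{before} taking $C_{\lambda}$ large. The paper packages the same argument slightly differently (writing $C_{\lambda}=64\zeta_{\lambda}C^{2}(C_{k}+1)$ and taking $\eta=3M^{-1/2}$), but the decomposition, the quantifier order you emphasize, and the final calibration are identical.
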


Note that $D_{H}(\widehat{\theta },\theta _{\ast })\equiv s^{-1}\Vert 
\widehat{\theta }-\theta _{\ast }\Vert _{0}$ is the Hamming
distance---normalized by diving it by $s$---between $\widehat{\theta }$ and $%
\theta _{\ast }$, that is, $s^{-1}$ times the number of elements of the $%
\ell _{0}$-PQR estimator that are different from the corresponding elements
of the true parameter vector. Theorem \ref{thm:sparsity} shows that $\mathbb{%
E}[D_{H}(\widehat{\theta },\theta _{\ast })]$ can be bounded by a constant
that is slightly larger than 4, provided that the tuning parameter $\lambda $
is suitably chosen. Note that 
\begin{equation*}
\mathbb{P}\left( \widehat{\theta }\neq \theta _{\ast }\right) =\mathbb{P}%
\left( \Vert \widehat{\theta }-\theta _{\ast }\Vert _{0}\geq 1\right) \leq 
\mathbb{E}\left[ \Vert \widehat{\theta }-\theta _{\ast }\Vert _{0}\right] .
\end{equation*}%
Thus, we do not expect that $\mathbb{E}[\Vert \widehat{\theta }-\theta
_{\ast }\Vert _{0}]$ can be small since it is impossible to make $\mathbb{P}(%
\widehat{\theta }\neq \theta _{\ast })$ small. Instead, what we obtain in
Theorem \ref{thm:sparsity} is that $\mathbb{E}[\Vert \widehat{\theta }%
-\theta _{\ast }\Vert _{0}]$ is bounded by $(4+\nu )s$, independent of $p$.

In view of Theorem~\ref{mean excess risk and mean squared estimation error},
Theorem~\ref{thm:sparsity} suggests that the estimated sparsity and the
selected set of covariates of $\ell _{0}$-PQR cannot be too distinct from $s$
and the true set of nonzero elements of $\theta _{\ast }$. By Theorem~\ref%
{thm:sparsity} , the resulting sparsity of $\ell _{0}$-PQR is likely to be
substantially smaller than $k_{0}$ with a suitable choice of $\lambda $ and $%
k_{0}$; therefore, we expect that the constraint $\theta \in \mathbb{B}%
(k_{0})$ in \eqref{penalized QR} will not be binding in practice. Moreover,
since the choice of $C_{\lambda }$ in Theorem~\ref{thm:sparsity} is
independent of $(s,n,p)$, the minimax optimal rates are still intact.

\begin{remark}
Using a simple Gaussian mean model, \citet{butucea2018} considered variable
selection under expected Hamming loss. They derived
sufficient and necessary conditions under which the following term converges
to zero (using our notation): 
\begin{equation}
\mathbb{E}\left[ \frac{1}{s}\sum_{j=1}^{p}\left\vert 1(\widehat{\theta }%
_{j}\neq 0)-1(\theta _{\ast ,j}\neq 0)\right\vert \right] ,
\label{recovery-term}
\end{equation}%
where $1(\cdot )$ is the indicator function and $\widehat{\theta }_{j}$ and $%
\theta _{\ast ,j}$, respectively, are the $j$-th elements of $\widehat{%
\theta }$ and $\theta _{\ast }$. Their conditions involve the size of the
smallest non-zero elements of a signal vector. It is an interesting future
research topic to investigate the behavior of \eqref{recovery-term} in $\ell
_{0}$-PQR.
\end{remark}

\section{Implementation of \texorpdfstring{$\ell _{0}$}{L0}-PQR}

\label{Sec:Computation}

\subsection{Computation through Mixed Integer Optimization}

\label{Mixed Integer Optimization}

The MIO approach is useful for solving variable selection problems with $%
\ell _{0}$-norm constraints or penalties 
\citep[see,
e.g.,][]{bertsimas2016,chen2018, chen2018arXiv}. Assume that the parameter
space $\Theta $ takes the form $\Theta =\prod\nolimits_{j=1}^{p}\left[ 
\underline{\theta }_{j},\overline{\theta }_{j}\right] ,$ where $\underline{%
\theta }_{j}$ and $\overline{\theta }_{j}$ are lower and upper parameter
bounds such that $-\infty <\underline{\theta }_{j}\leq \theta _{j}\leq 
\overline{\theta }_{j}<\infty $ for $j\in \{1,...,p\}$. We now present an
implementation of $\ell _{0}$-PQR, which builds on the method of mixed
integer linear programming (MILP). Specifically, the $\ell _{0}$-penalized
minimization problem (\ref{penalized QR}) can be equivalently reformulated
as the following MILP problem:%
\begin{align}
& \min_{\theta \in \mathbf{\Theta },\left( r_{i},s_{i}\right)
_{i=1}^{n},(d_{j})_{j=1}^{p}}\frac{1}{n}\sum\nolimits_{i=1}^{n}\left[ \tau
r_{i}+\left( 1-\tau \right) s_{i}\right] +\lambda
\sum\nolimits_{j=1}^{p}d_{j}  \label{MIO} \\
& \text{subject to}  \notag \\
& r_{i}-s_{i}=Y_{i}-X_{i}^{\top }\theta ,\text{ }i\in \{1,...,n\},
\label{sum of positive and negative parts} \\
& d_{j}\underline{\theta }_{j}\leq \theta _{j}\leq d_{j}\overline{\theta }%
_{j},\text{ }j\in \{1,...,p\},  \label{selection constraint} \\
& d_{j}\in \{0,1\},\text{ }j\in \{1,...,p\},  \label{indicator di} \\
& r_{i}\geq 0,\text{ }s_{i}\geq 0,\text{ }i\in \{1,...,n\},
\label{non-negativity constraints} \\
& \sum_{j=1}^{p}d_{j}\leq k_{0}.  \label{k0-constraints}
\end{align}

We now explain the equivalence between (\ref{penalized QR}) and (\ref{MIO}).
If we remove from the problem (\ref{MIO}) the second term of the
objective function as well as all the $(d_{1},...,d_{p})$ control variables
together with their constraints (\ref{selection constraint}) and (\ref%
{indicator di}), the resulting minimization problem reduces to the linear
programming reformulation of the standard linear quantile regression problem %
\citep[][Section 6.2]{Koenker2005}. In the presence of the penalty term and
the $(d_{1},...,d_{p})$ controls, the inequality and dichotomization
constraints (\ref{selection constraint}) and (\ref{indicator di}) ensure
that, whenever $d_{j}=0$, the value $\theta _{j}$ must also be zero and the
sum $\sum\nolimits_{j=1}^{p}d_{j}$ thus captures the number of non-zero
components of the vector $\theta $. The last constraint %
\eqref{k0-constraints} imposes that the estimated sparsity is at most $k_0$. 
As a result, both minimization problems (\ref{penalized QR}) and (\ref{MIO})
are equivalent. This equivalence enables us to employ modern MIO solvers to
solve $\ell _{0}$-PQR problems.

\subsection{Computation through First-Order Approximation}

\label{First-Order Approximation}

The MIO formulation (\ref{MIO}) is concerned with optimization over
integers, which could be computationally challenging for large scale
problems. \citet[][Section 3]{bertsimas2016} have developed discrete
first-order algorithms enabling fast computation of near optimal solutions
to $\ell _{0}$-constrained least squares and least absolute deviation
estimation problems. \citet{Huang:2018} have also proposed fast and scalable
algorithms for computing approximate solutions to $\ell _{0}$-penalized
least squares estimation problems. These algorithms build on the necessary
conditions for optimality in the $\ell _{0}$-constrained or penalized
optimization problems. Motivated from these papers, in this subsection, we
present a first-order approximation algorithm that can be used as either a
standalone solution algorithm or a warm-start strategy for enhancing the
computational performance of our MIO approach to the $\ell _{0} $-PQR
problem.

For $\tau \in (0,1)$, the quantile regression objective function (\ref{Sn})
can be equivalently expressed as%
\begin{equation}
S_{n}(\theta )=n^{-1}\max_{\tau -1\leq w_{i}\leq \tau
}\sum_{i=1}^{n}w_{i}(Y_{i}-X_{i}^{\top }\theta ).
\label{Sn equivalent formulation}
\end{equation}%
The function $S_{n}(\theta )$ is nonsmooth. Following \citet{nesterov2005},
we can construct a smooth approximation of $S_{n}(\theta )$ by 
\begin{equation}
S_{n}(\theta ;\delta )\equiv n^{-1}\max_{\tau -1\leq w_{i}\leq \tau }\left[
\sum_{i=1}^{n}w_{i}(Y_{i}-X_{i}^{\top }\theta )-\frac{\delta }{2}\left\Vert
w\right\Vert _{2}^{2}\right]  \label{Sn(theta;delta)}
\end{equation}%
where $w$ denote the vector of controls $\left( w_{1},...,w_{n}\right) $ in
the maximization problem (\ref{Sn(theta;delta)}). Note that %
\citet{nesterov2005}'s smoothing method is different from a
convolution-based smoothing method for quantile regression by %
\citet{FGH:2021} and \citet{HE2023}.

Assume that the parameter space $\Theta $ is of an equilateral cube form $%
\Theta =\left[ -B,B\right] ^{p}$ for some $B>0$. Let $t$ be any given vector
in $\mathbb{R}^{p}$. Let $\widehat{\beta }$ be a solution to the following $%
\ell _{0}$-penalized minimization problem:%
\begin{equation}
\min\nolimits_{\beta \in \mathbb{B}(k_{0})}\left\Vert \beta -t\right\Vert
_{2}^{2}+\lambda \Vert \beta \Vert _{0},  \label{thresholding rule}
\end{equation}%
where $\lambda $ is a non-negative penalty tuning parameter. It is
straightforward to see that the solution $\widehat{\beta }$ can be computed
as follows. Let $\widetilde{\beta }$ be a $p$ dimensional vector given by%
\begin{equation*}
\widetilde{\beta }_{j}=\left\{ 
\begin{array}{l}
B \times 1\left\{ B^{2}-2t_{j}B+\lambda <0\right\} \text{ if }t_{j}>B \\ 
t_{j} \times 1\{\left\vert t_{j}\right\vert >\sqrt{\lambda }\}\text{ if }%
-B\leq t_{j}\leq B \\ 
-B \times 1 \left\{ B^{2}+2t_{j}B+\lambda <0\right\} \text{ if }t_{j}<-B%
\end{array}%
\right. ,
\end{equation*}
for $j\in \{1,...,p\}$. Then the solution $\widehat{\beta }=$ $\widetilde{%
\beta }$ if $\Vert \widetilde{\beta }\Vert _{0}\leq k_{0}$. Otherwise,
letting $S(t)$ denote the set of $k_{0}$ indices that keep track of the
largest $k_{0}$ components of $t$ in absolute value, we set $\widehat{\beta }%
_{j}=\widetilde{\beta }_{j}$ for $j\in S(t)$ and $\widehat{\beta }_{j}=0$
for $j\notin S(t)$. Therefore, the problem (\ref{thresholding rule}) admits
a simple closed-form solution. We will exploit this fact and develop a
first-order approximation algorithm.

Define%
\begin{equation}
Q_{n}(\theta ;\delta )\equiv S_{n}(\theta ;\delta )+\lambda \Vert \theta
\Vert _{0}.  \label{Qn(theta;delta)}
\end{equation}%
For any vector $t\in \mathbb{R}^{p}$, suppose we can construct a quadratic
envelope of $S_{n}(\theta ;\delta )$ with respect to the vector $t$ in the
sense that%
\begin{equation}
S_{n}(\theta ;\delta )\leq \widetilde{S}_{n}(\theta ;t,\delta ,l)\equiv
S_{n}(t;\delta )+\bigtriangledown _{\theta }S_{n}(t;\delta )^{\top }\left(
\theta -t\right) +\frac{l}{2}\left\Vert \theta -t\right\Vert _{2}^{2}
\label{quadratic envelope}
\end{equation}%
for some non-negative real scalar $l$, which does not depend on the
parameter vector $\theta $. Note that (\ref{quadratic envelope}) holds
whenever the gradient function $\bigtriangledown _{\theta }S_{n}(\cdot
;\delta )$ is Lipschitz continuous such that 
\begin{equation}
\Vert \bigtriangledown _{\theta }S_{n}(t;\delta )-\bigtriangledown _{\theta
}S_{n}(t^{\prime };\delta )\Vert _{2}\leq h\Vert t-t^{\prime }\Vert _{2}
\label{Lipschitz}
\end{equation}%
for some Lipschitz constant $h$, which does not depend on $t$ and $t^{\prime
}$. By the envelope theorem, 
\begin{equation*}
\bigtriangledown _{\theta }S_{n}(t;\delta )=-\frac{1}{n}\sum%
\nolimits_{i=1}^{n}X_{i}\widehat{w}_{i,\delta },
\end{equation*}%
where $(\widehat{w}_{1,\delta },...,\widehat{w}_{n,\delta })$ is the
solution to the minimization problem (\ref{Sn(theta;delta)}). Using %
\citet[][Theorem 1]{nesterov2005}, we can deduce that (\ref{Lipschitz})
holds with 
\begin{equation}
h=\frac{1}{n\delta }\;\text{trace}\left(
\sum\nolimits_{i=1}^{n}X_{i}X_{i}^{\prime }\right)
\label{Lipschitz constant}
\end{equation}%
and hence (\ref{quadratic envelope}) holds for every $l\geq h$.

Define 
\begin{equation*}
\widetilde{Q}_{n}(\theta ;t,\delta ,l)\equiv \widetilde{S}_{n}(\theta
;t,\delta ,l)+\lambda \Vert \theta \Vert _{0}.
\end{equation*}%
Note that $\widetilde{Q}_{n}(\theta ;t,\delta ,l)$ is an upper envelope of $%
Q_{n}(\theta ;\delta )$ around the vector $t$ with the property that $%
\widetilde{Q}_{n}(t;t,\delta ,l)=Q_{n}(t;\delta )$.

For $t\in \mathbb{R}^{p}$, define the mapping%
\begin{equation}
H_{\delta ,l}(t)\equiv \arg \min_{\theta \in \mathbb{B}(k_{0})}\left\{
\left\Vert \theta -\left( t-\frac{1}{l}\bigtriangledown _{\theta
}S_{n}(t;\delta )\right) \right\Vert _{2}^{2}+\lambda \Vert \theta \Vert
_{0}\right\} .  \label{H(t)}
\end{equation}%
Arranging the terms, we can easily deduce%
\begin{equation}
H_{\delta ,l}(t)=\arg \min_{\theta \in \mathbb{B}(k_{0})}\widetilde{Q}%
_{n}(\theta ;t,\delta ,l).  \label{H(t) and Q(b)}
\end{equation}%
We say that a point $t\in \mathbb{R}^{p}$ is a stationary point of the
mapping $H_{\delta ,l}$ if $t\in H_{\delta ,l}(t)$. For each given value of $%
\delta $, let $\widehat{\theta }_{\delta }$ denote a solution to the problem
of minimizing $Q_{n}(\theta ;\delta )$ over $\theta \in \mathbb{B}(k_{0})$.
We propose to approximate $\widehat{\theta }_{\delta }$ by solving for the
stationary point of the mapping $H_{\delta ,l}$. This can be justified by
the following proposition, which is a straightforward extension of Theorem
3.1 of \citet{bertsimas2016}.

\begin{proposition}[\citet{bertsimas2016}]
\label{stationary point}The following statements hold:

(a) If $\widehat{\theta }_{\delta }\in \arg \min_{\theta \in \mathbb{B}%
(k_{0})}Q_{n}(\theta ;\delta )$, then $\widehat{\theta }_{\delta }\in
H_{\delta ,l}(\widehat{\theta }_{\delta })$.

(b) Let $l>h$ and $t_{m}$ be a sequence such that $t_{m+1}\in H_{\delta
,l}(t_{m})$. Then, for some limits $t^{\ast }$ and $Q^{\ast }$, we have that 
$t_{m}\longrightarrow t^{\ast }$, $Q_{n}(t_{m};\delta )\downarrow Q^{\ast }$
as $m\longrightarrow \infty $. Moreover, 
\begin{equation}
\min_{m=1,...,N}\Vert t_{m+1}-t_{m}\Vert _{2}^{2}\leq \frac{2\left(
Q_{n}(t_{1};\delta )-Q^{\ast }\right) }{N\left( l-h\right) }.
\label{convergence}
\end{equation}
\end{proposition}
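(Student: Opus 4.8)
\medskip

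The plan is to establish part (a) by a direct optimality argument and part (b) by a standard monotone-descent/telescoping argument exploiting the quadratic majorization in \eqref{quadratic envelope}.

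\medskip

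For part (a), suppose $\widehat{\theta}_{\delta} \in \arg\min_{\theta \in \mathbb{B}(k_{0})} Q_{n}(\theta;\delta)$. Take $l \geq h$ so that \eqref{quadratic envelope} holds and hence $\widetilde{Q}_{n}(\theta;\widehat{\theta}_{\delta},\delta,l) \geq Q_{n}(\theta;\delta)$ for all $\theta$, with equality at $\theta = \widehat{\theta}_{\delta}$. For any $\theta \in \mathbb{B}(k_{0})$,
\begin{equation*}
\widetilde{Q}_{n}(\theta;\widehat{\theta}_{\delta},\delta,l) \geq Q_{n}(\theta;\delta) \geq Q_{n}(\widehat{\theta}_{\delta};\delta) = \widetilde{Q}_{n}(\widehat{\theta}_{\delta};\widehat{\theta}_{\delta},\delta,l),
\end{equation*}
where the first inequality is the envelope property, the second is optimality of $\widehat{\theta}_{\delta}$, and the equality is the tightness of the envelope at its base point. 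Thus $\widehat{\theta}_{\delta}$ minimizes $\widetilde{Q}_{n}(\cdot;\widehat{\theta}_{\delta},\delta,l)$ over $\mathbb{B}(k_{0})$, which by \eqref{H(t) and Q(b)} says precisely $\widehat{\theta}_{\delta} \in H_{\delta,l}(\widehat{\theta}_{\delta})$.

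\medskip

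For part (b), fix $l > h$ and a sequence with $t_{m+1} \in H_{\delta,l}(t_{m})$. The key inequality chains the envelope and the definition of $H_{\delta,l}$: since $t_{m+1}$ minimizes $\widetilde{Q}_{n}(\cdot;t_{m},\delta,l)$, we have $\widetilde{Q}_{n}(t_{m+1};t_{m},\delta,l) \leq \widetilde{Q}_{n}(t_{m};t_{m},\delta,l) = Q_{n}(t_{m};\delta)$; and since $l > h \geq$ the Lipschitz constant of $\nabla_{\theta} S_{n}(\cdot;\delta)$, a standard descent-lemma computation gives
\begin{equation*}
Q_{n}(t_{m+1};\delta) \leq \widetilde{Q}_{n}(t_{m+1};t_{m},\delta,l) - \frac{l-h}{2}\|t_{m+1}-t_{m}\|_{2}^{2} \leq Q_{n}(t_{m};\delta) - \frac{l-h}{2}\|t_{m+1}-t_{m}\|_{2}^{2}.
\end{equation*}
Hence $Q_{n}(t_{m};\delta)$ is nonincreasing; it is bounded below (e.g. $S_{n} \geq$ some finite constant on the compact feasible set via the representation \eqref{Sn equivalent formulation} and $\lambda\|\theta\|_{0}\geq 0$), so it converges to some $Q^{\ast}$. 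Summing the displayed inequality for $m=1,\dots,N$ telescopes the $Q_{n}$ terms and yields $\sum_{m=1}^{N}\|t_{m+1}-t_{m}\|_{2}^{2} \leq \frac{2(Q_{n}(t_{1};\delta)-Q^{\ast})}{l-h}$, and bounding the sum below by $N\min_{m\leq N}\|t_{m+1}-t_{m}\|_{2}^{2}$ gives \eqref{convergence}. Convergence $\|t_{m+1}-t_{m}\|_{2}\to 0$ follows, and since the $t_{m}$ lie in the compact set $\mathbb{B}(k_{0})$ one extracts a convergent subsequence; the descent property together with closedness of the $H_{\delta,l}$ map (the $\arg\min$ being over finitely many $k_{0}$-sparse faces, each a compact quadratic program) upgrades this to convergence of the full sequence to a limit $t^{\ast}$.

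\medskip

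The main obstacle is the last step of part (b): promoting subsequential convergence to convergence of the entire sequence $t_{m}$ to a single limit $t^{\ast}$. The descent inequality alone gives only that accumulation points are fixed points of $H_{\delta,l}$ and that consecutive iterates get arbitrarily close; ruling out oscillation among distinct accumulation points requires a more careful argument—for instance, showing that for $m$ large the active support of $t_{m}$ stabilizes (because $Q_{n}$ takes finitely many "support-pattern" values and is strictly decreasing until the pattern freezes), after which the iteration reduces to a convergent proximal-gradient recursion on a fixed low-dimensional subspace where the smooth part $S_{n}(\cdot;\delta)$ is strongly-convex-like enough (or at least the iterates form a Cauchy sequence by the summability $\sum\|t_{m+1}-t_{m}\|_{2}^{2}<\infty$ combined with a uniqueness-of-limit-point argument). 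I would isolate this as the technical heart of the proof and handle the descent/telescoping bookkeeping first, since that part is routine.
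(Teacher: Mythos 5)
Your proposal is correct and takes essentially the same route as the paper: part (a) via the majorization--optimality sandwich $\widetilde{Q}_{n}(\theta;\widehat{\theta}_{\delta},\delta,l)\geq Q_{n}(\theta;\delta)\geq Q_{n}(\widehat{\theta}_{\delta};\delta)=\widetilde{Q}_{n}(\widehat{\theta}_{\delta};\widehat{\theta}_{\delta},\delta,l)$, and part (b) via the descent inequality $Q_{n}(t_{m+1};\delta)\leq Q_{n}(t_{m};\delta)-\tfrac{l-h}{2}\Vert t_{m+1}-t_{m}\Vert_{2}^{2}$ followed by telescoping. The one obstacle you flag at the end---upgrading square-summability of the increments to convergence of the full sequence $t_{m}$---is not something you are missing from the paper: the paper's proof simply asserts that compactness of $\Theta$ together with the descent inequality implies $t_{m}\rightarrow t^{\ast}$, which, as you correctly observe, only delivers monotone convergence of $Q_{n}(t_{m};\delta)$, vanishing consecutive increments, and stationarity of subsequential limits; the support-stabilization or Cauchy-type argument you sketch is exactly what would be needed to make that step rigorous, and the paper does not supply it.
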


Proposition \ref{stationary point} implies that any solution to the
minimization of $Q_{n}(\theta ;\delta )$ over $\theta \in \mathbb{B}(k_{0})$
is also a stationary point of the mapping $H_{\delta ,l}$. Moreover we can
solve for a stationarity point by iterating until convergence. Result (\ref%
{convergence}) indicates that the convergence rate is $O(N^{-1})$, where $N$
is the number of performed iterations. Note that we can use (\ref%
{thresholding rule}) to obtain a closed-form solution to the $\ell _{0}$%
-penalized minimization problem (\ref{H(t)}) for every $t\in \mathbb{R}^{p}$
and therefore solving for a stationary point of $H_{\delta ,l}$ would incur
relatively little computational cost.

We now turn to the $\ell _{0}$-PQR problem (\ref{penalized QR}). The next
proposition builds on the results of \citet{nesterov2005} concerning the
uniform approximation bound of the smooth function to the non-smooth
quantile loss function. Let $c_{\tau }\equiv \tau ^{2}\vee \left( 1-\tau
\right) ^{2}$.

\begin{proposition}
\label{Approximation}For $\delta \geq 0$, if $\widehat{\theta }_{\delta }\in
\arg \min_{\theta \in \mathbb{B}(k_{0})}Q_{n}(\theta ;\delta )$, then 
\begin{equation}
S_{n}(\widehat{\theta }_{\delta })+\lambda \Vert \widehat{\theta }_{\delta
}\Vert _{0}\leq \min\nolimits_{\theta \in \mathbb{B}(k_{0})}\left\{
S_{n}(\theta )+\lambda \Vert \theta \Vert _{0}\right\} +\frac{\delta c_{\tau
}}{2}.  \label{approximation bound}
\end{equation}
\end{proposition}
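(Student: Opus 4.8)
The plan is to exploit two facts: first, that $S_n(\theta;\delta)$ is sandwiched between $S_n(\theta)$ and $S_n(\theta)$ minus a small error term uniformly in $\theta$; and second, that $\widehat{\theta}_\delta$ is by hypothesis the exact minimizer of $Q_n(\cdot;\delta) = S_n(\cdot;\delta) + \lambda\|\cdot\|_0$ over $\mathbb{B}(k_0)$. Combining these, the optimality of $\widehat{\theta}_\delta$ for the smoothed problem transfers, up to the error term, into near-optimality for the original problem.

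First I would establish the key two-sided bound on the smoothing error. From the representation $S_n(\theta) = n^{-1}\max_{\tau-1\le w_i\le\tau}\sum_i w_i(Y_i - X_i^\top\theta)$ in \eqref{Sn equivalent formulation} and the definition of $S_n(\theta;\delta)$ in \eqref{Sn(theta;delta)}, the only difference is the subtracted regularizer $\tfrac{\delta}{2}\|w\|_2^2$ inside the maximization. Since the feasible set is the box $[\tau-1,\tau]^n$, on this set each $w_i^2 \le (\tau-1)^2 \vee \tau^2 = c_\tau$, so $0 \le \tfrac{\delta}{2}\|w\|_2^2 \le \tfrac{\delta n c_\tau}{2}$. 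Because subtracting a nonnegative quantity from the objective of a maximization can only decrease its value, and subtracting at most $\tfrac{\delta n c_\tau}{2}$ can decrease it by at most that amount, I get
\begin{equation*}
S_n(\theta) - \frac{\delta c_\tau}{2} \le S_n(\theta;\delta) \le S_n(\theta)
\end{equation*}
for every $\theta$ (the upper bound after dividing the $n^{-1}$ through; the lower bound similarly, being a bit careful that the maximizer for the penalized problem may differ from that of the unpenalized one, but the inequality only needs that the penalized max at its own optimizer is at least the unpenalized objective at that same point minus $\tfrac{\delta c_\tau}{2}$).

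Then the conclusion follows by a short chain. Adding $\lambda\|\theta\|_0$ throughout gives $Q_n(\theta;\delta) \le S_n(\theta) + \lambda\|\theta\|_0 \le Q_n(\theta;\delta) + \tfrac{\delta c_\tau}{2}$. Now take any $\theta \in \mathbb{B}(k_0)$: by optimality of $\widehat{\theta}_\delta$ for $Q_n(\cdot;\delta)$, and the left inequality applied at $\widehat{\theta}_\delta$, and the right inequality applied at $\theta$,
\begin{equation*}
S_n(\widehat{\theta}_\delta) + \lambda\|\widehat{\theta}_\delta\|_0 \ge Q_n(\widehat{\theta}_\delta;\delta),
\end{equation*}
wait — I want the reverse direction, so instead: $S_n(\widehat{\theta}_\delta) + \lambda\|\widehat{\theta}_\delta\|_0 \le Q_n(\widehat{\theta}_\delta;\delta) + \tfrac{\delta c_\tau}{2} \le Q_n(\theta;\delta) + \tfrac{\delta c_\tau}{2} \le S_n(\theta) + \lambda\|\theta\|_0 + \tfrac{\delta c_\tau}{2}$, where the middle step is the minimizing property of $\widehat{\theta}_\delta$. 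Taking the infimum over $\theta \in \mathbb{B}(k_0)$ yields \eqref{approximation bound}. The only mildly delicate point is verifying the lower envelope $S_n(\theta;\delta) \ge S_n(\theta) - \tfrac{\delta c_\tau}{2}$ with the correct handling of which $w$ achieves the two maxima; there is no real obstacle, just the standard observation that evaluating the smoothed objective at the \emph{unsmoothed} maximizer already gives a lower bound of $S_n(\theta) - \tfrac{\delta c_\tau}{2}$, hence so does the smoothed maximum.
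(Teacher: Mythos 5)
Your proof is correct and follows essentially the same route as the paper: establish the uniform sandwich $Q_n(\theta;\delta)\le S_n(\theta)+\lambda\Vert\theta\Vert_0\le Q_n(\theta;\delta)+\tfrac{\delta c_\tau}{2}$ (which the paper states without the explicit bound $0\le\tfrac{\delta}{2}\Vert w\Vert_2^2\le\tfrac{\delta n c_\tau}{2}$ over the box $[\tau-1,\tau]^n$ that you supply), then chain it with the optimality of $\widehat{\theta}_\delta$ for the smoothed problem. Your additional justification of the lower envelope by evaluating the smoothed objective at the unsmoothed maximizer is exactly the right detail and only makes the argument more self-contained.
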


Given a tolerance level $\epsilon $, Proposition \ref{Approximation} implies
that, for any given $\delta \leq 2\epsilon c_{\tau }^{-1}$, if we solve for
the minimization of $Q_{n}(\theta ;\delta )$ over $\theta \in \mathbb{B}%
(k_{0})$, the resulting solution $\widehat{\theta }_{\delta }$ is an $%
\epsilon $-level approximate $\ell _{0}$-PQR estimator in the sense that%
\begin{equation*}
S_{n}(\widehat{\theta }_{\delta })+\lambda \Vert \widehat{\theta }_{\delta
}\Vert _{0}\leq \min\nolimits_{\theta \in \mathbb{B}(k_{0})}\left\{
S_{n}(\theta )+\lambda \Vert \theta \Vert _{0}\right\} +\epsilon \text{.}
\end{equation*}%
This thus yields the following algorithm for computing a near optimal
solution to the $\ell _{0}$-PQR problem (\ref{penalized QR}).

\begin{algorithm}
\label{First Order Approximation Algorithm}Given an initial guess $\widehat{%
\theta }_{1}$, set $\delta =2\epsilon c_{\tau }^{-1}$ and perform the
following iterative procedure starting with $k=1$:

\begin{enumerate}
\item[Step 1.] For $k\geq 1$, compute $\widehat{\theta }_{k+1}\in H_{\delta
,l}(\widehat{\theta }_{k})$.

\item[Step 2.] Repeat Step 1 until the objective function $Q_{n}(\cdot
;\delta )$ converges.
\end{enumerate}
\end{algorithm}

We end this section with a remark that unlike MIO, the first-order
approximation method only delivers a feasible solution to the minimization
of $Q_{n}(\theta ;\delta )$ over $\theta \in \mathbb{B}(k_{0})$ and this
solution does not necessarily coincide with a global optimal solution to the 
$\ell _{0}$-PQR problem (\ref{penalized QR}).

\section{Simulation Study\label{Sec:Simulation}}

In this section, we perform Monte Carlo simulation experiments to evaluate
the performance of our $\ell _{0}$-based quantile regression approaches. We
consider the following data generating setup. Let $Z=(Z_{1},...,Z_{p-1})$ be
a $p-1$ dimensional multivariate normal random vector with mean zero and
covariance matrix $\Sigma $ with its element $\Sigma _{i,j}=\left(
0.5\right) ^{\left\vert i-j\right\vert }$. Let $X=(X_{1},...,X_{p})$ be a $p$
dimensional covariate vector with its components $X_{1}=1$ and $%
X_{j}=Z_{j-1}1\{\left\vert Z_{j-1}\right\vert \leq 6\}$ for $j\in \{2,..,p\}$%
. The outcome $Y$ is generated according to the model:%
\begin{equation*}
Y=X^{\top }\theta _{\ast }+X_{2}\varepsilon ,
\end{equation*}%
where $\varepsilon $ is a random disturbance which is independent of $X$ and
follows the univariate normal distribution with mean zero and standard
deviation $0.25$. We considered two configurations of the true parameter
vector $\theta _{\ast }$. For configuration (i), we set the sparsity $s=5$
and the true parameter value $\theta _{\ast ,j}=1$ for $s$ equispaced
values. For configuration (ii), we employed a more challenging case where $%
s=20$ and all the $s$ nonzero components of $\theta _{\ast }$ occurred at
equispaced indices between 1 and $p$ with its first 5 components equal to 1
and remaining $\left( s-5\right) $ nonzero components being set to be $%
\left( 2^{-1},2^{-2},...,2^{-(s-5)}\right) $, which decreased exponentially
to zero.

We compared the finite-sample performance among the $\ell _{0}$-PQR and $%
\ell _{0}$-CQR of the present paper, the $\ell _{1}$-PQR of %
\citet{belloni2011}, the adaptive Lasso penalized quantile regression of %
\citet{Fan14adaptive} and the nonconvex penalized quantile regression of %
\citet{Wang12}. In each simulation repetition, we generated a training
sample of $n=100$ observations for estimating the parameter vector $\theta $
and another independent validation sample of $100$ observations for
calibrating the tuning parameters of these estimation approaches. Moreover,
we also generated a test sample of $5000$ observations for evaluating the
out-of-sample predictive performance.

We focused our simulation study on median regression ($\tau =0.5$). To
implement the $\ell _{1}$-PQR approach, we used the $\ell _{1}$-penalized
quantile regression estimator of \citet{belloni2011} with the penalty level
given by 
\begin{equation}
\lambda _{BC}\equiv c_{BC}\Lambda \left( 1-\alpha |X\right) ,
\label{L1 PQR tuning parameter}
\end{equation}%
where $\Lambda \left( 1-\alpha |X\right) $ is the $\left( 1-\alpha \right) $
level quantile of the random variate $\Lambda $, which is defined in %
\citet[][equation (2.6)]{belloni2011}, conditional on the covariate vector $%
X $. Following \citet{belloni2011}, we set $\alpha =0.1$. Moreover, we
calibrated the optimal tuning value $c_{BC}$ from a set of candidate values $%
\mathcal{S}$ using the aforementioned validation sample in the setup with $%
p\geq 100$. For the low dimensional setup with $p<100$, we performed this
calibration over an expanded set $\mathcal{S\cup \{}0\}$, thereby allowing
for an estimating model that did not penalize any parameter. For simulations
under parameter configuration (i), we set $\mathcal{S}=\{0.1,0.2,...,1.9,2\}$%
. Under parameter configuration (ii), which is a more difficult case for
estimation, we further enlarged the tuning value search space by taking $%
\mathcal{S}$ to be $\{0.01,0.02,...1.99,2\}$.

To implement the $\ell _{0}$-CQR method, we solved over the training sample
the $\ell _{0}$-constrained estimation problem (\ref{L0-CQR-est}) for
sparsity level $q$ ranging from $1$ up to $p\wedge 25$. To solve (\ref%
{L0-CQR-est}) with $\tau =0.5$ for a given value of $q$, following %
\citet[][Section 6]{bertsimas2016}, we used the MIO-based, $\ell _{0}$%
-constrained LAD approach with a warm-start strategy by supplying the MIO
solver an initial guess computed via the discrete first-order approximation
algorithms. We then calibrated the optimal sparsity level among this set of $%
q$ values using the calibration sample. The resulting $\ell _{0}$-CQR
estimator was then constructed based on the model associated with the
calibrated optimal sparsity level.

For the $\ell _{0}$-PQR method, noting that the scale of the quantile
regression objective function $S_{n}(\theta )$ varies whenever that of $Y$
changes, to relate the penalty term to the scale of $Y$ and to the derived
rate \eqref{optimal-lambda}, we adopted the following simple rule: 
\begin{equation}
\lambda =c\left( n^{-1}\sum\nolimits_{i=1}^{n}|Y_{i}|\right) \frac{\ln p}{n},
\label{L0 penalty tuning parameter}
\end{equation}%
which is proportional to the sample average of the absolute value of $Y$.

For a given value of $c$ in (\ref{L0 penalty tuning parameter}), we solved
the problem (\ref{penalized QR}) with $k_{0}=100$ using our MIO
computational approach of Section \ref{Sec:Computation}, where we
warm-started the MIO solver by supplying as an initial guess the approximate
solution obtained through the first-order method of Section \ref{First-Order
Approximation}. As in the $\ell _{1}$-PQR case, we calibrated the optimal
tuning scalar $c$ over the set $\mathcal{S}$ using the calibration sample in
the setup with $p\geq k_{0}$ and over the expanded set $\mathcal{S\cup \{}%
0\} $ in the setup with $p<k_{0}$.

We provided further details here on the implementation of our first-order
approximation procedure in Algorithm \ref{First Order Approximation
Algorithm}. We set the tolerance level $\epsilon $ to be $2\cdot 10^{-4}$
and parameter $l$ of the quadratic envelope in (\ref{quadratic envelope}) to
be $2h$, where $h$ is the Lipschitz constant given by (\ref{Lipschitz
constant}). Note that Algorithm \ref{First Order Approximation Algorithm}
also requires an initial guess. We therefore ran it for $T=50$ times, each
of which was performed with a different initial guess and used the output
that delivered the best penalized objective function value in (\ref%
{penalized QR}) as the resulting first-order approximate solution. We chose
these $T$ initial guesses sequentially where the first one was the $\ell
_{1} $-PQR solution of \citet{belloni2011} implemented with its tuning value 
$c_{BC}$ set to be identical to the given value $c$ in (\ref{L0 penalty
tuning parameter}) whereas, for $t\in \{2,...,T\}$, the $t$-th initial guess
was subsequently constructed as the solution to the standard quantile
regression of the outcome $Y$ on those covariates selected in the output of
Algorithm \ref{First Order Approximation Algorithm} which was initiated with
the $\left( t-1\right) $-th initial guess. We found this implementation
procedure worked very well in both our simulation study here and the
empirical application of Section \ref{Sec:Application}.

We specified the parameter space $\Theta $ to be $[-10,10]^{p}$ for the MIO
computation of both the $\ell _{0}$-PQR and $\ell _{0}$-CQR estimators.
Throughout this paper, we used the MATLAB implementation of the Gurobi
Optimizer (version 8.1.1) to solve all the MIO problems. Moreover, all
numerical computations were done on a desktop PC (Windows 7) equipped with
128 GB RAM and a CPU processor (Intel i9-7980XE) of 2.6 GHz. To reduce
computation cost in all MIO computations associated with the covariate
configuration of $p=500$, we set the MIO solver time limit to be 10 minutes
beyond which we forced the solver to stop early and used the best discovered
feasible solution to construct the resulting $\ell _{0}$-PQR and $\ell _{0}$%
-CQR estimators.

We also compared our $\ell _{0}$-based estimators with the adaptive Lasso
quantile regression approach of \citet{Fan14adaptive}. Specifically, the
latter approach seeks to minimize the following weighted $\ell _{1}$%
-penalized quantile regression objective function%
\begin{equation}
S_{n}(\theta )+\sum_{j=1}^{p}g_{\mu }\left( \left\vert \widehat{\theta }%
_{j}^{ini}\right\vert \right) \left\vert \theta _{j}\right\vert ,
\label{adaptive Lasso}
\end{equation}%
where $\widehat{\theta }^{ini}=(\widehat{\theta }_{1}^{ini},...,\widehat{%
\theta }_{p}^{ini})$ is an initial high dimensional quantile regression
estimator and $g_{\mu }$ is a penalty weight function. For implementation,
we set $\widehat{\theta }^{ini}$ to be the $\ell _{1}$-PQR estimator and
considered the following two choices for the penalty weight $g_{\mu }$. The
first choice, which is based on the derivative of the smoothly clipped
absolute deviation (SCAD) penalty function \citep{fan2001}, is given by 
\begin{equation}
g_{\mu }(t)=\mu 1\{t\leq \mu \}+\frac{\left( a\mu -t\right) \vee 0}{a-1}%
1\{t>\mu \}  \label{SCAD weight}
\end{equation}%
for some parameters $a>2$ and $\mu \geq 0$. For ease of reference, we use
AL-SCAD as shorthand for the adaptive Lasso quantile regression approach
implemented with the SCAD based penalty weight (\ref{SCAD weight}). For the
second choice, we specify $g_{\mu }$ to be the derivative of the minimax
concave penalty (MCP) function \citep{zhang2010}, which is given by%
\begin{equation}
g_{\mu }(t)=\left( \mu -\frac{t}{a}\right) 1\{t\leq a\mu \}
\label{MCP weight}
\end{equation}%
for some parameters $a>1$ and $\mu \geq 0$. We refer to AL-MCP as shorthand
for the estimation approach based on the minimization of (\ref{adaptive
Lasso}) with the MCP based penalty weight (\ref{MCP weight}).

Finally, we considered the approach of nonconvex penalized quantile
regression of \citet{Wang12} implemented with either the SCAD or MCP
penalty. Specifically, this approach is based on minimizing the penalized
objective function 
\begin{equation}
S_{n}(\theta )+\sum_{j=1}^{p}G_{\mu }\left( \left\vert \theta
_{j}\right\vert \right) ,  \label{fully nonconvex penalized QR}
\end{equation}%
where $G_{\mu }$ is either the SCAD or the MCP penalty function, both of
which are nonconvex. Following \citet{zou2008} and \citet{Wang12}, we
adopted the local linear approximation algorithm to solve this nonconvex
minimization problem. This algorithm proceeds as follows. Let $\widehat{%
\theta }^{(0)}=(\widehat{\theta }_{1}^{(0)},...,\widehat{\theta }_{p}^{(0)})$
be an initial estimator, which we take as the $\ell _{1}$-PQR estimator.
Given an estimator $\widehat{\theta }^{(m)}$ at the $m$th iteration stage,
we solve for $\widehat{\theta }^{(m+1)}$ by minimizing (\ref{adaptive Lasso}%
) with $\widehat{\theta }^{ini}$ being replaced by $\widehat{\theta }^{(m)}$
and $g_{\mu }$, which is the derivative of $G_{\mu }$, taking the form (\ref%
{SCAD weight}) if $G_{\mu }$ is the SCAD penalty or (\ref{MCP weight}) if $%
G_{\mu }$ is the MCP penalty function. We then iterate this process until
the vector of weight differences $\left[ g_{\mu }\left( \left\vert \widehat{%
\theta }_{j}^{(m+1)}\right\vert \right) -g_{\mu }\left( \left\vert \widehat{%
\theta }_{j}^{(m)}\right\vert \right) \right] $ converges in $\ell _{2}$%
-norm within a numerical tolerance of $10^{-4}$. We refer to QR-SCAD and
QR-MCP as shorthand for the nonconvex penalized quantile approaches
implemented respectively with the SCAD and MCP penalties. Throughout the
implementation of all the methods that use (\ref{SCAD weight}) and (\ref{MCP
weight}), we set $a=3.7$ and focused on the calibration of the tuning
parameter $\mu $, which was performed over the set $\mathcal{S}$ using the
calibration sample in the setup with $p\geq 100$ and over the expanded set $%
\mathcal{S\cup \{}0\}$ in the setup with $p<100$.

We reported performance results based on 100 simulation repetitions. We
considered the following performance measures. Abusing the notation a bit,
let $\widehat{\theta }$ denote the estimated parameters under a given
quantile regression approach. To assess the predictive performance, we
reported the relative risk, which is the ratio of the median predictive risk
evaluated at the estimate $\widehat{\theta }$ over that evaluated at the
true value $\theta _{\ast }$. We approximated the out-of-sample predictive
risk using the generated 5000-observation test sample. Let $in\_RR$ and $%
out\_RR$ respectively denote the average of in-sample and that of
out-of-sample relative risks over the simulation repetitions.

We also reported the estimation performance in terms of both the average
parameter estimation error defined as $\mathbb{E}[ \Vert \widehat{\theta }%
-\theta _{\ast }\Vert _{2}] $ and the average regression function estimation
error defined as $\mathbb{E}[|X^{\top }(\widehat{\theta }-\theta _{\ast
})|^{2}]$. Finally, we examined the variable selection performance. We say
that a covariate $X_{j}$ is effectively selected if and only if the
magnitude of $\widehat{\theta }_{j}$ is larger than a small tolerance level
(e.g., $10^{-5}$ as used in our numerical study) which is distinct from zero
in numerical computation. Let $Avg\_sparsity$ denote the average number of
effectively selected covariates. Let $Corr\_sel$ be the proportion of the
truly relevant covariates being effectively selected. Let $Orac\_sel$ be the
proportion of obtaining an oracle variable selection outcome where the set
of effectively selected covariates coincides exactly with that of the truly
relevant covariates. Finally, let $Num\_irrel$ denote the average number of
effectively selected covariates whose true regression coefficients are zero.

\subsection{Simulation Results under Parameter Configuration (i)}

For parameter configuration (i), we performed simulations with $p\in
\{10,500\}$ to assess the performance in both the low and high dimensional
settings. The results for these two settings are presented respectively in
Tables \ref{Table 1} and \ref{Table 2}. For $\ell _{0}$-PQR, we report
performance measures for both the implementation based on the first-order
(FO) approximation and that based on the MIO, which was warm-started by
using the FO solutions as initial guesses. We find that, regarding the
predictive performance, all the competing approaches performed comparably
well for both in-sample and out-of-sample relative risks under the low
dimensional covariate design. By contrast, for the high dimensional design, $%
\ell _{1}$-PQR was considerably dominated by all the other approaches in
terms of out-of-sample predictive performance.

\begin{center}
\begin{table}[tbh]
\caption{Simulation comparison for $p=10$ under parameter configuration (i)}
\label{Table 1}{\small 
\begin{tabular}{c|cccccccc}
\hline\hline
$p=10$ & \multicolumn{2}{|c}{$\ell _{0}$-PQR} & $\ell _{0}$-CQR & $\ell _{1}$%
-PQR & AL-SCAD & AL-MCP & QR-SCAD & QR-MCP \\ 
& MIO & FO &  &  &  &  &  &  \\ \hline
$Corr\_sel$ & \multicolumn{1}{|l}{1} & 1 & 1 & 1 & 1 & 1 & 1 & 1 \\ 
$Orac\_sel$ & \multicolumn{1}{|l}{0.78} & 0.77 & 0.6 & 0.05 & 0.8 & 0.79 & 
0.79 & 0.81 \\ 
$Num\_irrel$ & \multicolumn{1}{|l}{1.10} & 1.11 & 1.02 & 3.13 & 0.92 & 0.93
& 0.97 & 0.87 \\ 
$Avg\_sparsity$ & \multicolumn{1}{|l}{6.10} & 6.11 & 6.02 & 8.13 & 5.92 & 
5.93 & 5.97 & 5.87 \\ 
$\mathbb{E}\left[ \left\Vert \widehat{\theta }-\theta _{\ast }\right\Vert
_{2}\right] $ & \multicolumn{1}{|l}{0.035} & 0.036 & 0.038 & 0.047 & 0.034 & 
0.034 & 0.035 & 0.034 \\ 
$\mathbb{E}[|X^{\top }(\widehat{\theta }-\theta _{\ast })|^{2}]$ & 
\multicolumn{1}{|l}{0.001} & 0.001 & 0.001 & 0.002 & 0.001 & 0.001 & 0.001 & 
0.001 \\ 
$in\_RR$ & \multicolumn{1}{|l}{0.976} & 0.976 & 0.974 & 0.969 & 0.980 & 0.979
& 0.979 & 0.979 \\ 
$out\_RR$ & \multicolumn{1}{|l}{1.029} & 1.030 & 1.030 & 1.040 & 1.028 & 
1.027 & 1.029 & 1.027 \\ \hline
\end{tabular}%
}
\end{table}

\begin{table}[tbh]
\caption{Simulation comparison for $p=500$ under parameter configuration (i)}
\label{Table 2}{\small 
\begin{tabular}{c|cccccccc}
\hline\hline
$p=500$ & \multicolumn{2}{|c}{$\ell _{0}$-PQR} & $\ell _{0}$-CQR & $\ell
_{1} $-PQR & AL-SCAD & AL-MCP & QR-SCAD & QR-MCP \\ 
& MIO & FO &  &  &  &  &  &  \\ \hline
$Corr\_sel$ & 1 & 1 & 1 & 1 & 1 & 1 & 1 & 1 \\ 
$Orac\_sel$ & 1 & 1 & 0.97 & 0 & 0.84 & 0.89 & 0.84 & 0.87 \\ 
$Num\_irrel$ & 0 & 0 & 0.03 & 29.26 & 2.11 & 1.69 & 2.11 & 1.68 \\ 
$Avg\_sparsity$ & 5 & 5 & 5.03 & 34.26 & 7.11 & 6.69 & 7.11 & 6.68 \\ 
$\mathbb{E}\left[ \left\Vert \widehat{\theta }-\theta _{\ast }\right\Vert
_{2}\right] $ & \multicolumn{1}{|l}{0.028} & 0.028 & 0.028 & 0.135 & 0.029 & 
0.029 & 0.028 & 0.028 \\ 
$\mathbb{E}[|X^{\top }(\widehat{\theta }-\theta _{\ast })|^{2}]$ & 
\multicolumn{1}{|l}{0.001} & 0.001 & 0.001 & 0.019 & 0.001 & 0.001 & 0.001 & 
0.001 \\ 
$in\_RR$ & \multicolumn{1}{|l}{0.981} & 0.981 & 0.981 & 0.793 & 0.967 & 0.970
& 0.967 & 0.969 \\ 
$out\_RR$ & \multicolumn{1}{|l}{1.024} & 1.024 & 1.025 & 1.282 & 1.026 & 
1.026 & 1.025 & 1.024 \\ \hline
\end{tabular}%
}
\end{table}
\end{center}

Turning to the variable selection results, we note that all the eight
estimation approaches had perfect $Corr\_sel$ rates and hence were effective
for selecting the relevant covariates. However, superb $Corr\_sel$
performance might just be a consequence of overfitting, which may result in
excessive selection of irrelevant covariates and adversely impact on the
out-of-sample predictive performance. From the results on the variable
selection performance measures, we note that the number of irrelevant
variables selected under $\ell _{1}$-PQR was quite large relatively to those
under the other seven approaches in the high dimensional setup even though
all of the considered estimation approaches exhibited the effect of reducing
the covariate space dimension. This echoes with the finding in the setup of $%
p=500$ that $\ell _{1}$-PQR had far better in-sample fit in terms of $in\_RR$
yet worse out-of-sample fit in terms of $out\_RR$ relatively to all the
other quantile regression approaches. Besides, while we could observe
nonzero and high values of $Orac\_Sel$ for the $\ell _{0}$-PQR, $\ell _{0}$%
-CQR, adaptive Lasso based and other nonconvex penalized estimation
approaches in both the low and high dimensional setups, the $\ell _{1}$-PQR
approach could rarely induce oracle variable selection outcome in these
simulations. We also find, except for $\ell _{1}$-PQR, which performed
relatively poorly, all the other approaches performed comparably well in the
parameter and regression function estimation performances. Finally, for the $%
\ell _{0}$-PQR approach, the FO-based algorithm as a standalone solution
algorithm also performed very well. In fact, we find that in the high
dimensional setup, all the MIO-based $\ell _{0}$-PQR computations could not
converge within the 10-minute computational time limit and the discovered
solutions upon early stopping coincided with the FO-based solutions which
were used to warm-start the MIO solver. This indicates that the first-order
algorithm already located high-quality $\ell _{0}$-PQR solutions upon which
further improvements through the global optimization solver of MIO could not
be obtained within the given computational time constraint. These simulation
results thus shed lights on the usefulness of our first-order approximation
approach for solving $\ell _{0}$-PQR problems in the presence of
computational resource constraints.

\subsection{Simulation Results under Parameter Configuration (ii)}

In this section we report results of simulations conducted with $p=500$
under parameter configuration (ii) where $s=20$ and the true nonzero
regression coefficients can decay exponentially to zero. Under this
simulation design, it turns out that none of the estimation approaches could
yield nonzero $Corr\_sel$ and $Orac\_Sel$ values. This result is not
surprising as the true coefficients of some relevant covariates are of very
small magnitudes so that none of the estimation methods in this simulation
study could effectively select these variables. We now summarize in Table 3
the results of remaining performance measures. From this table, it is
evident that $\ell _{1}$-PQR tended to select far more irrelevant covariates
and thus performed quite poorly in terms of out-of-sample prediction as well
as parameter and regression function estimation performances. The apparent
overfitting of $\ell _{1}$-PQR could be mitigated through employing instead
the adaptive Lasso based approaches of AL-SCAD and AL-MCP or the nonconvex
penalized estimation approaches of QR-SCAD and QR-MCP. Compared to these
four alternative penalized quantile regression approaches, it is worth
noting that both the $\ell _{0}$-PQR and $\ell _{0}$-CQR approaches were
capable of further reducing substantially the incidence of selecting
irrelevant covariates while retaining comparable performance in prediction,
parameter and regression function estimation, and the selection of relevant
covariates. Finally, we also note in this simulation that the FO-based
implementation of $\ell _{0}$-PQR delivered similar pattern of estimated
sparsity to that of the MIO-based $\ell _{0}$-PQR and its predictive and
estimation performances did not fall much behind those of the MIO-based $%
\ell _{0}$-PQR and $\ell _{0}$-CQR approaches. These results also suggest
that our FO-based $\ell _{0}$-PQR method can be a useful standalone approach
to sparse estimation of high dimensional quantile regression models.

\begin{center}
\begin{table}[tbh]
\caption{Simulation comparison for $p=500$ under parameter configuration
(ii) }
\label{Table 3}{\small 
\begin{tabular}{c|cccccccc}
\hline\hline
$p=500$ & \multicolumn{2}{|c}{$\ell _{0}$-PQR} & $\ell _{0}$-CQR & $\ell
_{1} $-PQR & AL-SCAD & AL-MCP & QR-SCAD & QR-MCP \\ 
& MIO & FO &  &  &  &  &  &  \\ \hline
$Num\_irrel$ & 0.22 & 0.22 & 0.6 & 43.87 & 19.07 & 14.12 & 18.89 & 15.78 \\ 
$Avg\_sparsity$ & 8.71 & 8.09 & 9.12 & 53.88 & 28.57 & 23.45 & 28.42 & 25.19
\\ 
$\mathbb{E}\left[ \left\Vert \widehat{\theta }-\theta _{\ast }\right\Vert
_{2}\right] $ & 0.081 & 0.112 & 0.084 & 0.206 & 0.109 & 0.106 & 0.107 & 0.102
\\ 
$\mathbb{E}[|X^{\top }(\widehat{\theta }-\theta _{\ast })|^{2}]$ & 0.007 & 
0.014 & 0.007 & 0.042 & 0.012 & 0.011 & 0.012 & 0.011 \\ 
$in\_RR$ & 1.003 & 1.074 & 0.993 & 0.602 & 0.842 & 0.890 & 0.841 & 0.864 \\ 
$out\_RR$ & 1.132 & 1.219 & 1.137 & 1.517 & 1.206 & 1.196 & 1.199 & 1.186 \\ 
\hline
\end{tabular}%
}
\end{table}
\end{center}

\section{An Application to Conformal Prediction\label{Sec:Application}}

In this section, we compare $\ell _{0}$-PQR and $\ell _{0}$-CQR with $\ell
_{1}$-PQR and the other four alternative penalized quantile regression
approaches (AL-SCAD, AL-MCP, QR-SCAD, QR-MCP) of Section \ref{Sec:Simulation}
via a real data application to conformal prediction of birth weights. In
particular, we employ conformalized quantile regression %
\citep{romano2019conformalized} to construct prediction intervals for birth
weights.

We now describe the split conformalized quantile regression procedure 
\citep[see Algorithm 1
of ][] {romano2019conformalized}. First, we split the data into a proper
training set, indexed by $\mathcal{I}_{1}$, and a calibration set, indexed
by $\mathcal{I}_{2}$. For each quantile regression algorithm, we use the
proper training set $\mathcal{I}_{1}$ to obtain the estimates of two
conditional quantile functions $\widehat{Q}_{\alpha /2}(Y|X=x)$ and $%
\widehat{Q}_{1-\alpha /2}(Y|X=x)$ for a given level $\alpha \in (0,0.5)$.
Then, the following scores are evaluated on the calibration set $\mathcal{I}%
_{2}$ as 
\begin{equation*}
E_{i}\equiv\max \{\widehat{Q}_{\alpha /2}(Y|X=X_{i})-Y_{i},Y_{i}-\widehat{Q}%
_{1-\alpha /2}(Y|X=X_{i})\}
\end{equation*}%
for each $i\in \mathcal{I}_{2}$. Finally, given new covariates $X_{n+1}$,
construct the prediction interval for $Y_{n+1}$ as\ 
\begin{equation}
C(X_{n+1})\equiv\left[ \widehat{Q}_{\alpha /2}(Y|X=X_{n+1})-Q_{1-\alpha }(E,%
\mathcal{I}_{2}),\widehat{Q}_{1-\alpha /2}(Y|X=X_{n+1})+Q_{1-\alpha }(E,%
\mathcal{I}_{2})\right]  \label{prediction interval}
\end{equation}%
where $Q_{1-\alpha }(E,\mathcal{I}_{2})$ is the $(1-\alpha )(1+1/|\mathcal{I}%
_{2}|)$-th empirical quantile of $\{E_{i}:i\in \mathcal{I}_{2}\}$.
Remarkably, Theorem 1 of \citet{romano2019conformalized} guarantees that the
prediction interval \eqref{prediction interval} satisfies the marginal,
distribution-free, finite-sample coverage in the sense that 
\begin{equation*}
\mathbb{P}\left\{ Y_{n+1}\in C(X_{n+1})\right\} \geq 1-\alpha ,
\end{equation*}%
provided that the data $\{(Y_{i},X_{i}):i=1,\ldots ,n+1\}$ are exchangeable.

We look at the dataset on birth weights originally analyzed by %
\citet{almond2005costs}. We use the excerpt from %
\citet{cattaneo2010efficient} available at %
\url{http://www.stata-press.com/data/r13/cattaneo2.dta}. The sample size is
4642 and the outcome of interest is infant birth weight measured in
kilograms. The basic covariates include 20 variables concerning mother's
age, mother's years of education, father's age, father's years of education,
number of prenatal care visits, trimester of first prenatal care visit,
birth order of an infant, months since last birth, an indicator variable
whether a newborn died in previous births, mother's smoking behavior during
pregnancy, mother's alcohol consumption during pregnancy, mother's marital
status, mother's and father's hispanic status and race (being white or not),
an indicator variable whether a mother was born abroad, and three dummy
variables indicating seasons of the birth.

We consider the following five different dictionary specifications. The
first one is concerned with a covariate vector of $p=21$ that includes a
regression intercept together with the aforementioned 20 basic explanatory
variables. The second specification modifies the first by discretizing both
the maternal and paternal years of education into four categories indicating
whether the schooling year is less than 12, exactly 12, between 12 and 16,
or at least 16. In addition, we replace number of prenatal care visits,
months since last birth and both parents' ages by cubic B-spline terms using
4 interior B-spline knots. These allow us to approximate smooth functions of
these variables in the quantile regression analysis. We exclude the B-spline
intercept terms so that the resulting covariate vector for the second
specification has dimension $p=49$. The third covariate specification
consists of all variables in the second specification together with those
obtained by interacting the B-spline expansion terms with the other
explanatory variables. This then renders $p=609$ in the third covariate
specification scenario. Both the fourth and fifth specifications are
constructed using the same procedure as for the third case except that we
enlarge the covariate dimensions by using respectively 12 and 16 interior
B-spline knots for these two high dimensional scenarios, each of which
comprises 1281 and 1617 covariates respectively. Finally, for each covariate
specification, to put the variables on a similar scale, all stochastic
covariates thus constructed are further standardized to have mean zero and
variance unity.

We conduct conformal prediction of infant birth weights with nominal level $%
\alpha =0.1$. We split the sample randomly into four subsets of about equal
size: $\mathcal{I}_{1}$, $\mathcal{I}_{2}$, $\mathcal{I}_{3}$ and $\mathcal{I%
}_{4}$. As described above, the set $\mathcal{I}_{1}$ is the training sample
for the estimation of conditional quantile functions. We perform this
estimation respectively using the $\ell _{0}$-PQR, $\ell _{0}$-CQR, $\ell
_{1}$-PQR, AL-SCAD, AL-MCP, QR-SCAD and QR-MCP approaches. We calibrate the
tuning parameters in these competing quantile regression approaches using
the validation sample $\mathcal{I}_{2}$. Let $\mathcal{S}%
=\{0.1,0.2,...,1.9,2\}\cup \{0.1(0.7)^{s}:s=1,...,8\}$. For penalized
estimation approaches, we calibrate the tuning constants $c_{BC}$ of (\ref%
{L1 PQR tuning parameter}), $c$ of (\ref{L0 penalty tuning parameter}), and $%
\mu $ of (\ref{adaptive Lasso}) and (\ref{fully nonconvex penalized QR})
over the set $\mathcal{S}\cup \{0\}$ for the estimation with the first two
covariate specifications where $p\in \{21,49\}$ and over the set $\mathcal{S}
$ for the high dimensional estimation case where $p\in \{609,1281,1617\}$.
Except for these modifications, implementation of all penalized quantile
regression approaches and estimation and calibration for the $\ell _{0}$-CQR
are performed in the same fashion\ as described in the simulation study.

With the calibrated optimal tuning parameter value, we use the set $\mathcal{%
I}_{3}$ to estimate the out-of-sample quantile prediction risk and
conformalize quantile regression estimates by constructing $\{E_{i}:i\in 
\mathcal{I}_{3}\}$. We then evaluate the coverage performance of the
prediction interval \eqref{prediction
interval} over the test sample $\mathcal{I}_{4}$. We carry out 10
replications of such random splitting exercises and report the averages of
estimated sparsity, out-of-sample prediction risk as well as length and
coverage of the prediction interval across these replications. To mitigate
the computational cost, every MIO computation in this empirical study is
conducted under a 5-minute computational time constraint.

\subsection{Empirical Results}

We summarize in Figures \ref{fig01} and \ref{fig02} statistical performances
under the aforementioned competing quantile regression approaches. For the
basic covariate specification with $p=21$, we also juxtapose and compare the
performance results with those obtained through standard quantile regression
(Std-QR) of \citet{Koenker1978}. As the Std-QR approach does not incur any
tuning parameter, we conduct the Std-QR based conformal prediction by
splitting the dataset evenly into three subsamples of which we use the first
for parameter estimation, the second for estimating the out-of-sample
quantile prediction risk and conformalizing the quantile regression
estimates, and the third for estimating the coverage of the conformalized
prediction interval. We also perform 10 replications of this
sample-splitting procedure and report the average performance results under
the the Std-QR approach. Moreover, for each regularized estimation method,
we report in Figure \ref{fig03} its computational performance, which is
measured by the employed CPU seconds that are averaged over the range of
tuning parameter values and across the random splitting replications. See
also online appendix of the paper for further
details on the variable selection results of this empirical study.

\begin{figure}[htbp]
\caption{Results on Estimated Sparsities and Out-of-Sample Prediction Risks}
\label{fig01}\centering
\medskip \includegraphics[scale=0.43]{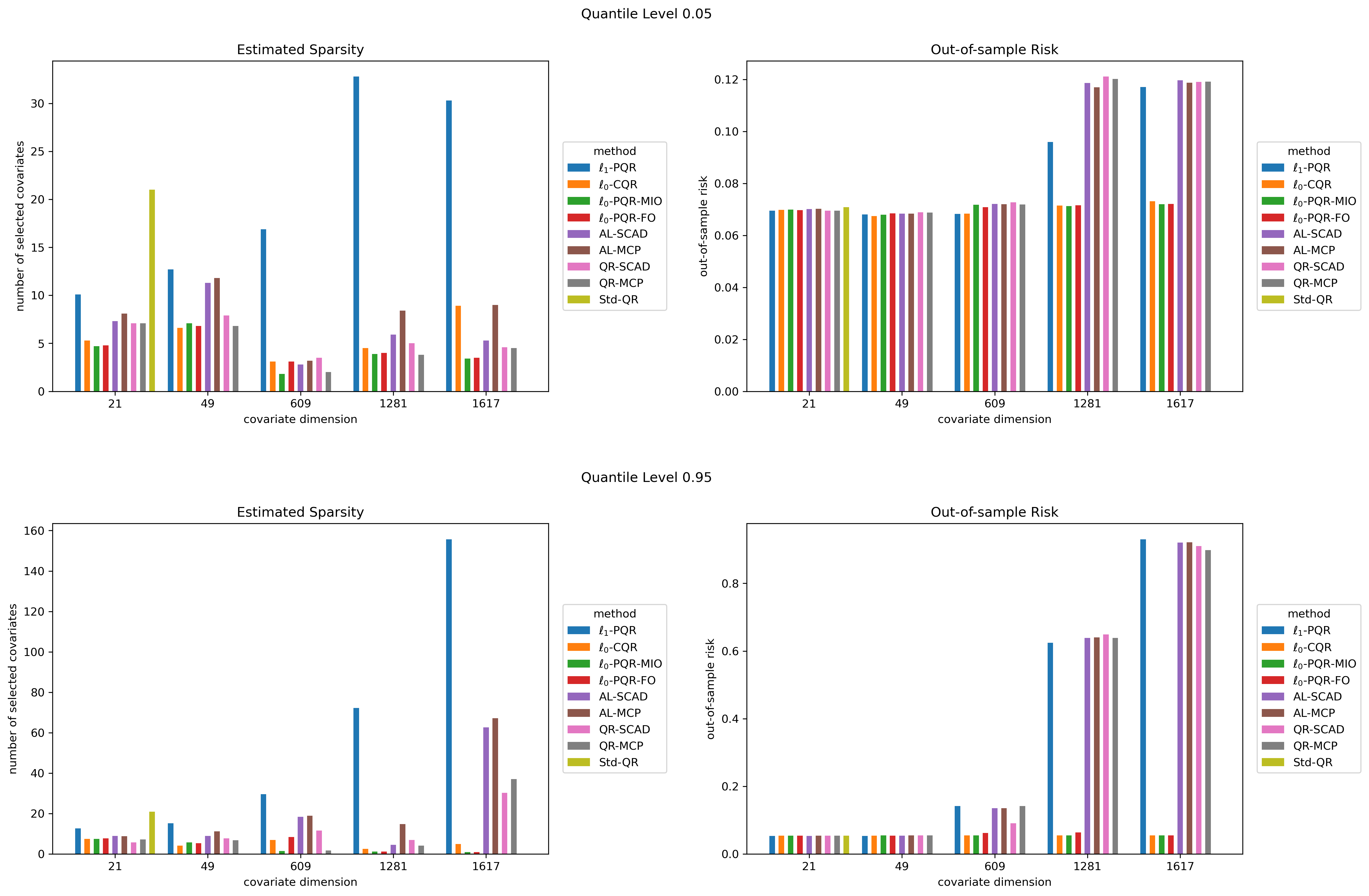}
\end{figure}

\begin{figure}[htbp]
\caption{Lengths and Coverages of Conformalized Prediction Intervals}
\label{fig02}\centering
\medskip \includegraphics[scale=0.43]{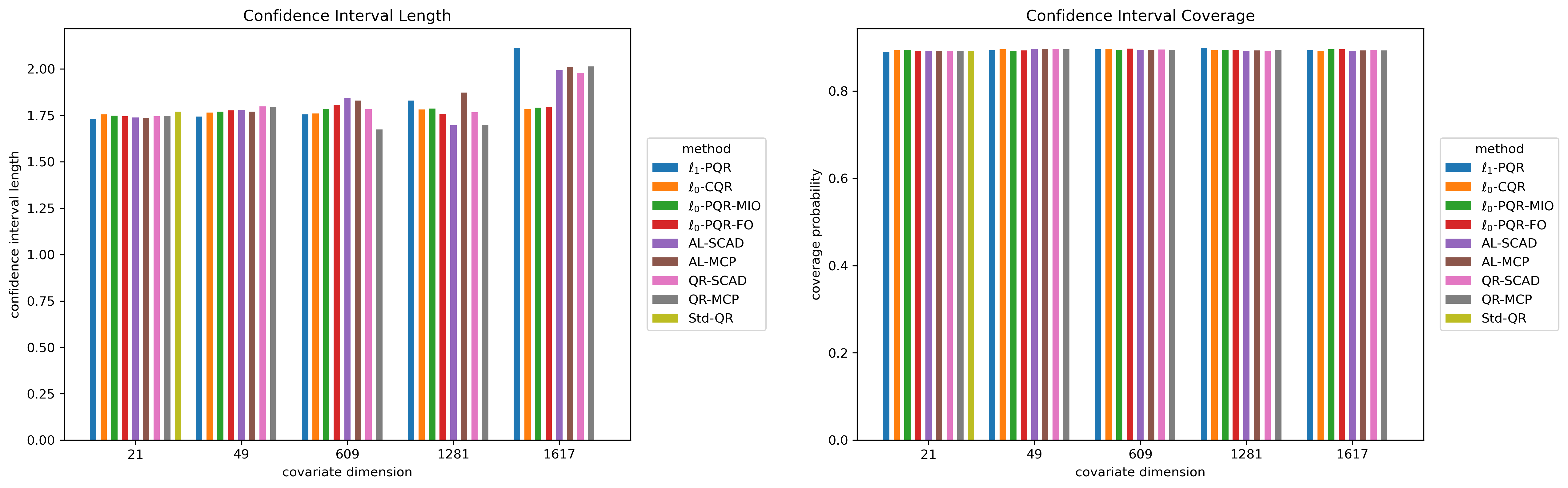}
\end{figure}

From Figure \ref{fig01}, we find that $\ell _{1}$-PQR tended to induce a far
denser estimating model than all the other competing regularized estimation
approaches across all covariate specifications. At the 5\% quantile level,
the average number of selected covariates under $\ell _{1}$-PQR could go
over 30 when $p>10^{3}$, whereas that quantity did not exceed 12 for all the
other high dimensional quantile regression approaches. At the 95\% quantile
level, the average estimated sparsity for $\ell _{1}$-PQR was around 12 when 
$p=21$. Yet, this figure rose quickly as the covariate dimension increased.
It reached around 30, then moved up to 72, and eventually escalated toward
155 as $p$ increased from 609 to 1617. While this excessive sparsity pattern
could be curbed under both our $\ell _{0}$-based approaches and the
approaches of AL-SCAD, AL-MCP, QR-SCAD and QR-MCP, we note that, for the
high dimension scenario with $p=1617$, the average estimated sparsities
under AL-SCAD and AL-MCP still went over 60 and those under QR-SCAD and
QR-MCP were smaller yet remained above 30. By contrast, none of the $\ell
_{0}$-based approaches selected more than 9 variables across all the
covariates specifications.

Concerning the predictive performance, at the lower quantile level, all the
estimation approaches exhibited commensurate out-of-sample quantile
prediction risks in the cases where $p\in\{21,49\}$. For $p=609$, both $\ell
_{0}$-CQR and $\ell _{1}$-PQR performed slightly better than the other
approaches. However, in the high dimensional scenarios where $%
p\in\{1281,1617\}$, both $\ell _{0}$-PQR and $\ell _{0}$-CQR had similar
prediction performances, which clearly dominated those of $\ell _{1}$-PQR
and the other four alternative penalized quantile regression approaches. At
the higher quantile level, all the estimation approaches also performed
comparably well in the low dimensional cases. However, except for the $\ell
_{0}$-based approaches whose out-of-sample prediction risks were all capped
below around 6.5\%, those risks under all the other high dimensional
estimation approaches went over 60\% at $p=1281$ and surged toward 90\% when 
$p$ reached 1617.

For the conformalized prediction intervals, Figure \ref{fig02} indicates
that coverages of these intervals were quite similar across all the
estimation approaches and on average dovetailed well with the nominal size.
However, lengths of the prediction intervals varied across both the
estimation methods and the covariate specifications. For $p\in\{21,49\}$,
all methods performed comparably well though QR-SCAD and QR-MCP appeared to
produce slightly wider prediction intervals at $p=49$. For the case with $%
p=609$, QR-MCP delivered the tightest prediction interval with length 1.67;
those of the other approaches had lengths ranging from 1.75 to 1.84. For $%
p=1281$, the prediction interval under AL-SCAD was the tightest whereas that
under AL-MCP was the widest. The maximal difference in these interval
lengths was about 0.18 in this covariate specification. Yet, this maximal
difference reached 0.33 and hence nearly doubled at $p=1617$ where the
prediction interval lengths under the $\ell _{0}$-based approaches were all
around 1.79 yet those under the other regularized estimation approaches were
at least around 2 and this could exceed 2.11 for the case of $\ell _{1}$%
-PQR. On the whole, the statistical performance results in Figures \ref%
{fig01} and \ref{fig02} reveal that the $\ell _{0}$-based approaches were
capable of delivering sparser solutions than all the other competing
estimation approaches whilst maintaining quite favorable performances in the
prediction accuracy.

\begin{figure}[htbp]
\caption{Results on Computational Performance}
\label{fig03}\centering
\medskip %
\includegraphics[scale=0.45]{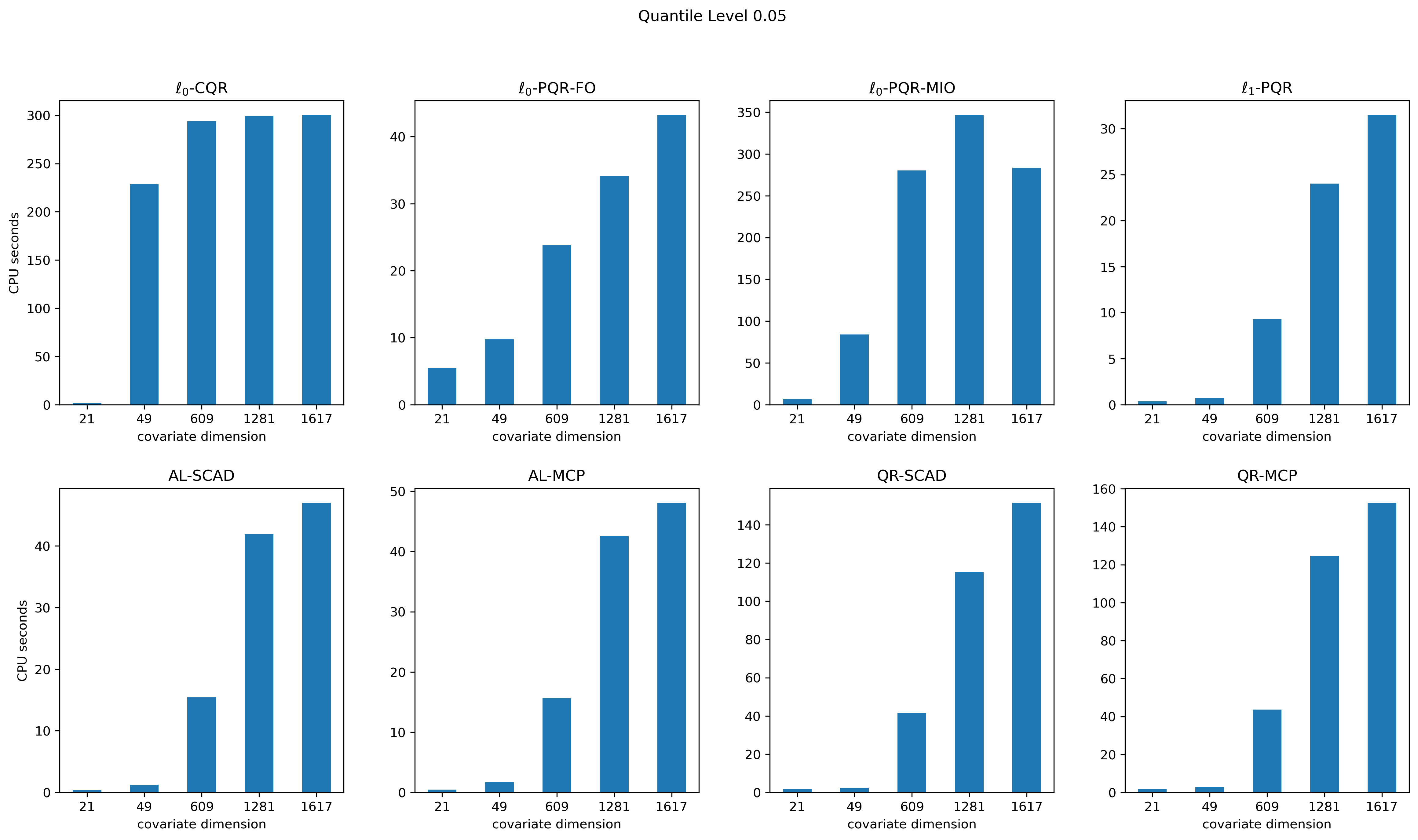}
\includegraphics[scale=0.45]{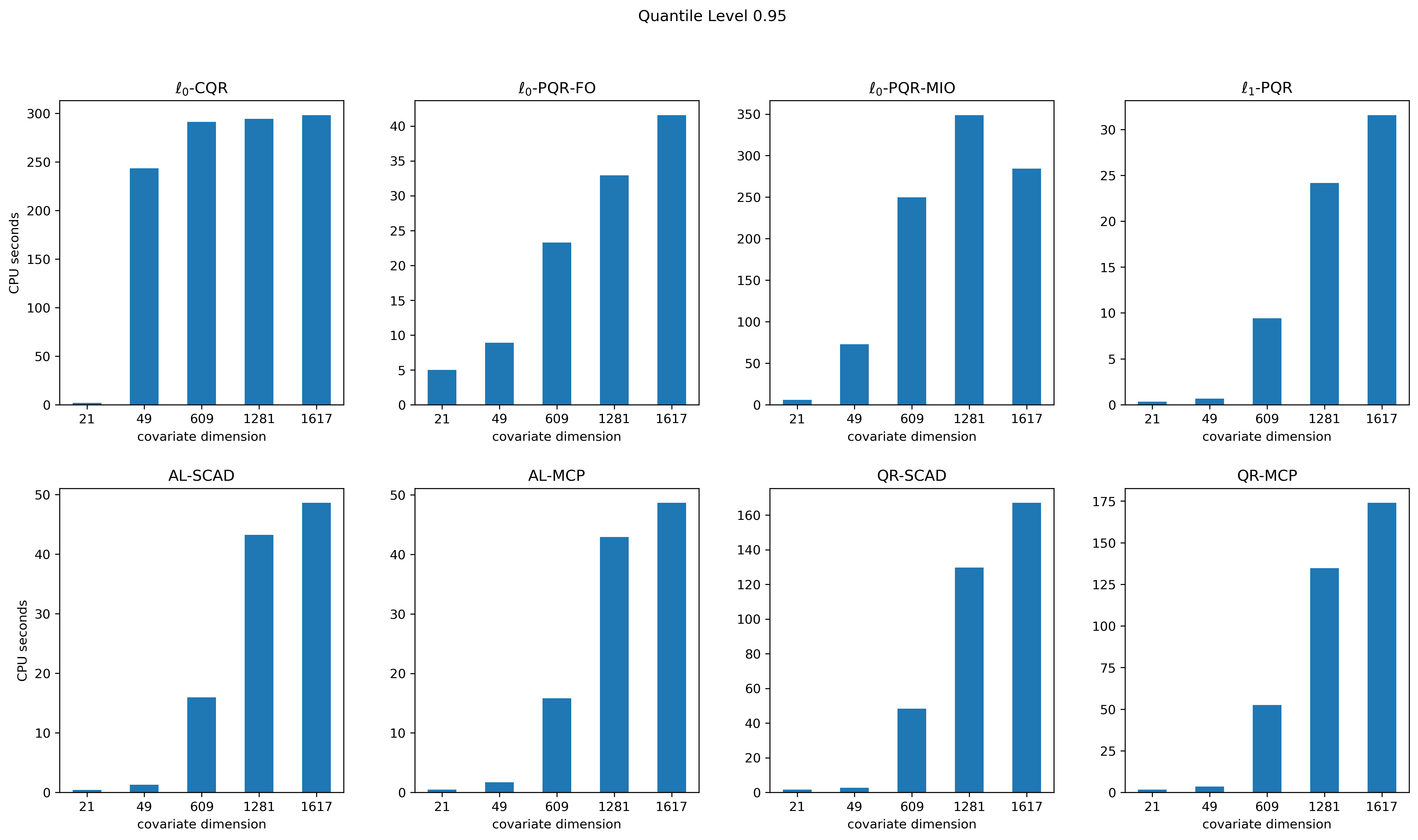}
\end{figure}

We now turn attention to computational performance of the considered
estimation approaches in this study. From Figure \ref{fig03}, it is evident
that $\ell _{1}$-PQR enjoyed the best computational performance with its
average computation time being capped below around 31 CPU seconds across all
the estimation scenarios. FO-based $\ell _{0}$-PQR also performed very well.
It was initially slower than both AL-SCAD and AL-MCP in the low dimensional
estimation cases where its computational time remained below 24 CPU seconds.
Yet, its computational performance became quite competitive relatively to
those of the adaptive Lasso based approaches in the high dimensional cases
with $p\in\{1281,1617\}$ where the average computation time for both QR-SCAD
and QR-MCP could go well over 110 seconds.

Relative to the other approaches, the two methods, $\ell _{0}$-CQR and
MIO-based $\ell _{0}$-PQR, were observed to be far more computationally
intensive. Across all estimation scenarios, Figure \ref{fig03} indicates a
substantial computational performance difference between the MIO and
non-MIO-based approaches. This could be anticipated because of the high
computational complexity in the estimation problems for both $\ell _{0}$-CQR
and MIO-based $\ell _{0}$-PQR. Based on these empirical results, we find
that the FO-based implementation of $\ell _{0}$-PQR could strike a good
balance between statistical and computational performances and thus be a
valuable standalone estimation approach. On the whole, we note that the $%
\ell _{0}$-based approaches could be competitive alternatives to $\ell _{1}$%
-PQR, adaptive Lasso based and the other nonconvex penalized estimation
approaches employed in this numerical study.


\section{Conclusions\label{Sec:Conclusions}}

In this paper, we study estimation of a sparse high dimensional quantile
regression model. The main contributions of this paper are twofold. First,
we derive non-asymptotic expectation bounds on the excess quantile
prediction risk as well as the mean-square parameter and regression function
estimation errors of both the $\ell _{0}$-PQR and $\ell _{0}$-CQR
estimators. These theoretical results imply the near minimax optimal rates
of convergence. Moreover, we characterize expected Hamming loss for the $%
\ell _{0}$-penalized estimator. The second contribution is computational. We
provide an exact computation approach for $\ell _{0}$-PQR through the method
of mixed integer optimization. We also develop a first-order approximation
algorithm for solving large scale $\ell _{0}$-PQR problems. Through Monte
Carlo simulations and a real-data application, we find that both $\ell _{0}$%
-PQR and $\ell _{0}$-CQR perform fairly well and produce much sparser
solutions than $\ell _{1}$-PQR does and also outperform the adaptive Lasso and non-convex penalized quantile regression approaches. Our theoretical and numerical results
suggest that the $\ell _{0}$-based approaches are worthy competitors to 
the $\ell _{1}$-based and non-convex penalized estimation methods in
sparse quantile regression. Recently, \citet{Hazimeh:Mazumder:2020}
developed fast computational methods for $\ell _{0}$-penalized least squares
with an additional $\ell _{1}$- or $\ell _{2}$-penalty term. It is an
interesting future research topic to extend their approach to quantile
regression and investigate its statistical properties.

\section*{Acknowledgements}

We are indebted to the editor, Elie Tamer, an associate editor and two
anonymous referees for constructive comments and suggestions. We would like
to thank Roger Koenker, Rahul Mazumder, Guillaume Pouliot and participants
at 2019 Optimization-Conscious Econometrics Conference in Chicago, 2020
Econometric Society World Congress, and 2021 ASSA Annual Meeting for helpful
comments. We are also grateful to Rahul Mazumder for providing us his code.
This work was supported in part by the Ministry of Science and Technology,
Taiwan (MOST109-2410-H-001-027-MY2), Academia Sinica (AS-CDA-106-H01), the
European Research Council (ERC-2014-CoG-646917-ROMIA), and the UK Economic
and Social Research Council (ESRC) through research grant (ES/P008909/1) to
the CeMMAP.

\appendix%



\section{Proofs\label{Proofs}}

\subsection{Lemmas}

For any $\theta ,\theta _{1},\theta _{2}\in \Theta $, define $\overline{S}%
_{n}(\theta )\equiv S_{n}(\theta )-S(\theta )$, $\Delta (\theta _{1},\theta
_{2})\equiv S(\theta _{1})-S(\theta _{2})$, and $\overline{\Delta }%
_{n}(\theta _{1},\theta _{2})\equiv \overline{S}_{n}(\theta _{1})-\overline{S%
}_{n}(\theta _{2})$. By Assumption \ref{sep-cond}, we have that, for some
given scalar $\varepsilon _{\ast }^{2}$, which will be chosen later, we can
find a point $\theta _{\ast }^{\prime }$ in $\Theta ^{\prime }$ such that $%
\Vert \theta _{\ast }^{\prime }\Vert _{0}=\Vert \theta _{\ast }\Vert _{0}$
and $\Delta (\theta _{\ast }^{\prime },\theta _{\ast })\leq \varepsilon
_{\ast }^{2}$. We start with the following basic inequality.

\begin{lemma}[Basic inequality]
\label{sparsity-basic-new}For $k_{0}\geq s$, 
\begin{equation*}
\Delta (\widehat{\theta },\theta _{\ast }^{\prime })+\lambda \Vert \widehat{%
\theta }-\theta _{\ast }^{\prime }\Vert _{0}\leq \overline{\Delta }%
_{n}(\theta _{\ast }^{\prime },\widehat{\theta })+2\lambda s.
\end{equation*}
\end{lemma}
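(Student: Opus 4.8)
The plan is to combine the optimality of $\widehat{\theta}$ as a minimizer of the penalized empirical criterion with a bookkeeping identity relating $\Delta$, $\overline{\Delta}_n$, and the increments of $S_n$, and then to clean up the $\ell_0$ terms with the subadditivity of $\|\cdot\|_0$. First I would note that $\theta_\ast^\prime$ is feasible for the minimization problem \eqref{penalized QR}: it lies in $\Theta^\prime\subset\Theta$ and satisfies $\|\theta_\ast^\prime\|_0=\|\theta_\ast\|_0=s\le k_0$, so $\theta_\ast^\prime\in\mathbb{B}(k_0)$. Hence the defining property of $\widehat{\theta}$ gives the ``raw'' basic inequality
\begin{equation*}
S_n(\widehat{\theta})+\lambda\|\widehat{\theta}\|_0\le S_n(\theta_\ast^\prime)+\lambda\|\theta_\ast^\prime\|_0 = S_n(\theta_\ast^\prime)+\lambda s,
\end{equation*}
i.e. $S_n(\widehat{\theta})-S_n(\theta_\ast^\prime)\le \lambda s-\lambda\|\widehat{\theta}\|_0$.

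Next I would unwind the definitions. Directly from $\overline{S}_n=S_n-S$, $\Delta(\theta_1,\theta_2)=S(\theta_1)-S(\theta_2)$, and $\overline{\Delta}_n(\theta_1,\theta_2)=\overline{S}_n(\theta_1)-\overline{S}_n(\theta_2)$, one checks the algebraic identity
\begin{equation*}
\Delta(\widehat{\theta},\theta_\ast^\prime)=\overline{\Delta}_n(\theta_\ast^\prime,\widehat{\theta})+\bigl(S_n(\widehat{\theta})-S_n(\theta_\ast^\prime)\bigr).
\end{equation*}
Substituting the bound on $S_n(\widehat{\theta})-S_n(\theta_\ast^\prime)$ from the previous step yields $\Delta(\widehat{\theta},\theta_\ast^\prime)\le\overline{\Delta}_n(\theta_\ast^\prime,\widehat{\theta})+\lambda s-\lambda\|\widehat{\theta}\|_0$, and adding $\lambda\|\widehat{\theta}-\theta_\ast^\prime\|_0$ to both sides leaves me with
\begin{equation*}
\Delta(\widehat{\theta},\theta_\ast^\prime)+\lambda\|\widehat{\theta}-\theta_\ast^\prime\|_0\le\overline{\Delta}_n(\theta_\ast^\prime,\widehat{\theta})+\lambda s+\lambda\bigl(\|\widehat{\theta}-\theta_\ast^\prime\|_0-\|\widehat{\theta}\|_0\bigr).
\end{equation*}

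The final step is to bound the last parenthetical term. Since the support of $\widehat{\theta}-\theta_\ast^\prime$ is contained in the union of the supports of $\widehat{\theta}$ and $\theta_\ast^\prime$, the $\ell_0$ functional is subadditive, so $\|\widehat{\theta}-\theta_\ast^\prime\|_0\le\|\widehat{\theta}\|_0+\|\theta_\ast^\prime\|_0=\|\widehat{\theta}\|_0+s$; hence $\|\widehat{\theta}-\theta_\ast^\prime\|_0-\|\widehat{\theta}\|_0\le s$, and the claimed inequality follows. I do not anticipate a serious obstacle here — the argument is essentially the standard basic-inequality manipulation — the only points needing care are verifying feasibility of $\theta_\ast^\prime$ (which is where the hypothesis $k_0\ge s$ is used) and getting the signs right when trading the penalty on $\widehat{\theta}$ against the penalty on $\widehat{\theta}-\theta_\ast^\prime$ via $\ell_0$-subadditivity.
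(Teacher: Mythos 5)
Your proof is correct and follows essentially the same route as the paper's: the optimality (basic) inequality for $\widehat{\theta}$ against the feasible point $\theta_{\ast}^{\prime}$, the algebraic identity converting $S_n$ differences into $\Delta$ and $\overline{\Delta}_n$, and the triangle inequality (subadditivity) for $\Vert\cdot\Vert_0$ to trade $\Vert\widehat{\theta}\Vert_0$ for $\Vert\widehat{\theta}-\theta_{\ast}^{\prime}\Vert_0$ at the cost of an extra $\lambda s$. Your write-up is simply a more explicit version of the paper's two-line argument, including the feasibility check that justifies using $k_0\geq s$.
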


\begin{proof}[Proof of Lemma~\protect\ref{sparsity-basic-new}]
Using (\ref{penalized QR}), we have that, for $k_{0}\geq s$, 
\begin{equation}
S_{n}(\widehat{\theta })+\lambda \Vert \widehat{\theta }\Vert _{0}\leq
S_{n}(\theta _{\ast }^{\prime })+\lambda \Vert \theta _{\ast }^{\prime
}\Vert _{0}.  \label{basic-ineq-ell0}
\end{equation}%
Using \eqref{basic-ineq-ell0}, we can deduce that 
\begin{equation}
\Delta (\widehat{\theta },\theta _{\ast }^{\prime })+\lambda \Vert \widehat{%
\theta }\Vert _{0}\leq \overline{\Delta }_{n}(\theta _{\ast }^{\prime },%
\widehat{\theta })+\lambda s.  \label{ineq}
\end{equation}%
Then, the lemma follows from (\ref{ineq}) together with an application of
triangle inequality.
\end{proof}

Let 
\begin{equation}
V_{x}\equiv \sup_{\theta \in \mathbb{B}(k_{0})}\frac{\overline{\Delta }%
_{n}(\theta _{\ast }^{\prime },\theta )}{\Delta (\theta ,\theta _{\ast
})+\varepsilon _{\ast }^{2}+x^{2}}.  \label{Vx}
\end{equation}

\begin{lemma}[Preliminary Probability Bounds]
\label{lemma-prob-bound} Let Assumptions \ref{iden-cond} and \ref{sep-cond}
hold. Suppose $k_{0}\geq s$. Then for a constant $0<\eta <1$, 
\begin{align}
& \mathbb{P}\left[ \Delta (\widehat{\theta },\theta _{\ast })\geq \frac{2}{%
1-\eta }\lambda s+\frac{1+\eta }{1-\eta }\varepsilon _{\ast }^{2}+\frac{\eta 
}{1-\eta }x^{2}\right] \leq \mathbb{P}\left( V_{x}\geq \eta \right) ,
\label{a3} \\
& \mathbb{P}\left[ \Vert \widehat{\theta }-\theta _{\ast }\Vert _{0}\geq 
\frac{4-2\eta }{1-\eta }s+\lambda ^{-1}\left\{ \frac{1+\eta }{1-\eta }%
\varepsilon _{\ast }^{2}+\frac{\eta }{1-\eta }x^{2}\right\} \right] \leq 
\mathbb{P}\left( V_{x}\geq \eta \right) .  \label{a3-hamming-loss}
\end{align}
\end{lemma}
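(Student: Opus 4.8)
The plan is to derive both tail bounds from the basic inequality (Lemma~\ref{sparsity-basic-new}) by a peeling/localization argument, comparing the empirical fluctuation term $\overline{\Delta}_n(\theta_\ast^\prime,\widehat\theta)$ against the deterministic gap $\Delta(\widehat\theta,\theta_\ast) + \varepsilon_\ast^2 + x^2$ through the supremum $V_x$ defined in \eqref{Vx}. The key observation is that on the event $\{V_x < \eta\}$ we have, for \emph{every} $\theta\in\mathbb{B}(k_0)$ and in particular for $\widehat\theta$,
\begin{equation*}
\overline{\Delta}_n(\theta_\ast^\prime,\widehat\theta) < \eta\bigl(\Delta(\widehat\theta,\theta_\ast) + \varepsilon_\ast^2 + x^2\bigr).
\end{equation*}
First I would combine this with Lemma~\ref{sparsity-basic-new} to get
\begin{equation*}
\Delta(\widehat\theta,\theta_\ast^\prime) + \lambda\Vert\widehat\theta - \theta_\ast^\prime\Vert_0 \leq \eta\bigl(\Delta(\widehat\theta,\theta_\ast) + \varepsilon_\ast^2 + x^2\bigr) + 2\lambda s.
\end{equation*}

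Next I would convert the left-hand side from ``distance to $\theta_\ast^\prime$'' into ``distance to $\theta_\ast$'': since $\Delta(\widehat\theta,\theta_\ast) = \Delta(\widehat\theta,\theta_\ast^\prime) + \Delta(\theta_\ast^\prime,\theta_\ast)$ and $\Delta(\theta_\ast^\prime,\theta_\ast)\le\varepsilon_\ast^2$ by the choice of $\theta_\ast^\prime$ from Assumption~\ref{sep-cond} (and noting $\Delta(\widehat\theta,\theta_\ast)\ge 0$ by Assumption~\ref{iden-cond}), I can write $\Delta(\widehat\theta,\theta_\ast^\prime) \ge \Delta(\widehat\theta,\theta_\ast) - \varepsilon_\ast^2$. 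Substituting and collecting the $\Delta(\widehat\theta,\theta_\ast)$ terms on one side yields
\begin{equation*}
(1-\eta)\Delta(\widehat\theta,\theta_\ast) \leq 2\lambda s + (1+\eta)\varepsilon_\ast^2 + \eta x^2,
\end{equation*}
which rearranges exactly to the event complement in \eqref{a3}. Hence $\{V_x<\eta\}$ is contained in the complement of the event in \eqref{a3}, i.e. the event in \eqref{a3} is contained in $\{V_x\ge\eta\}$, giving \eqref{a3}.

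For the Hamming-loss bound \eqref{a3-hamming-loss}, I would instead retain the $\lambda\Vert\widehat\theta - \theta_\ast^\prime\Vert_0$ term on the left, drop the nonnegative $\Delta(\widehat\theta,\theta_\ast^\prime)$ term, and use $\Delta(\widehat\theta,\theta_\ast)\ge 0$ to bound $\eta\bigl(\Delta(\widehat\theta,\theta_\ast)+\varepsilon_\ast^2+x^2\bigr)$ — but this last step needs care because $\Delta(\widehat\theta,\theta_\ast)$ reappears. The cleaner route: on $\{V_x<\eta\}$, from the displayed inequality above we already established $(1-\eta)\Delta(\widehat\theta,\theta_\ast)\le 2\lambda s + (1+\eta)\varepsilon_\ast^2+\eta x^2$, hence $\eta\Delta(\widehat\theta,\theta_\ast) \le \tfrac{\eta}{1-\eta}\bigl(2\lambda s + (1+\eta)\varepsilon_\ast^2 + \eta x^2\bigr)$; feeding this back into $\lambda\Vert\widehat\theta-\theta_\ast^\prime\Vert_0 \le \eta\Delta(\widehat\theta,\theta_\ast) + \eta\varepsilon_\ast^2+\eta x^2 + 2\lambda s$ and then using $\Vert\widehat\theta-\theta_\ast\Vert_0 \le \Vert\widehat\theta-\theta_\ast^\prime\Vert_0 + \Vert\theta_\ast^\prime-\theta_\ast\Vert_0 \le \Vert\widehat\theta-\theta_\ast^\prime\Vert_0 + 2s$ (since $\Vert\theta_\ast^\prime\Vert_0=\Vert\theta_\ast\Vert_0=s$, so their symmetric difference has at most $2s$ indices) produces, after dividing by $\lambda$ and simplifying the constants, the bound $\Vert\widehat\theta-\theta_\ast\Vert_0 \le \tfrac{4-2\eta}{1-\eta}s + \lambda^{-1}\bigl(\tfrac{1+\eta}{1-\eta}\varepsilon_\ast^2 + \tfrac{\eta}{1-\eta}x^2\bigr)$. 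The main obstacle is purely bookkeeping: tracking the $\varepsilon_\ast^2$ and $x^2$ coefficients through the rearrangement so that the constants match \eqref{a3-hamming-loss} exactly, and confirming the combinatorial bound $\Vert\theta_\ast^\prime-\theta_\ast\Vert_0\le 2s$ (which is where the constant $4$ rather than $2$ enters). No concentration inequality is invoked here — that is deferred to the bound on $\mathbb{P}(V_x\ge\eta)$ — so this lemma is entirely a deterministic reduction, and the argument should be short.
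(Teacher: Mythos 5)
Your proposal is correct and follows essentially the same route as the paper's proof: a purely deterministic reduction on the event $\{V_{x}<\eta\}$ using the basic inequality, the decomposition $\Delta(\widehat{\theta},\theta_{\ast})=\Delta(\widehat{\theta},\theta_{\ast}^{\prime})+\Delta(\theta_{\ast}^{\prime},\theta_{\ast})$, feeding the resulting risk bound back into the sparsity bound, and the triangle inequality $\Vert\theta_{\ast}^{\prime}-\theta_{\ast}\Vert_{0}\leq 2s$ to pass from $\theta_{\ast}^{\prime}$ to $\theta_{\ast}$. The one small slip is treating $\Delta(\widehat{\theta},\theta_{\ast}^{\prime})$ as nonnegative when dropping it --- it is only bounded below by $-\varepsilon_{\ast}^{2}$ --- but carrying that extra $\lambda^{-1}\varepsilon_{\ast}^{2}$ (as the paper does) is exactly what produces the stated coefficient $\frac{1+\eta}{1-\eta}$ on $\varepsilon_{\ast}^{2}$, so the conclusion is unaffected.
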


\begin{proof}[Proof of Lemma~\protect\ref{lemma-prob-bound}]
First, since $\lambda \Vert \widehat{\theta }-\theta _{\ast }^{\prime }\Vert
_{0}$ is always non-negative, it follows from Lemma~\ref{sparsity-basic-new}
that 
\begin{equation}
\Delta (\widehat{\theta },\theta _{\ast })=\Delta (\widehat{\theta },\theta
_{\ast }^{\prime })+\Delta (\theta _{\ast }^{\prime },\theta _{\ast })\leq
\varepsilon _{\ast }^{2}+2\lambda s+\overline{\Delta }_{n}(\theta _{\ast
}^{\prime },\widehat{\theta }).  \label{a2-0}
\end{equation}%
Using (\ref{Vx}) and (\ref{a2-0}), we have that, if $V_{x}<\eta $ for some
positive constant $\eta <1$, then%
\begin{align}
\Delta (\widehat{\theta },\theta _{\ast })& \leq 2\lambda s+\varepsilon
_{\ast }^{2}+V_{x}\left[ \Delta (\widehat{\theta },\theta _{\ast
})+\varepsilon _{\ast }^{2}+x^{2}\right]  \notag \\
& <\frac{2}{1-\eta }\lambda s+\frac{1+\eta }{1-\eta }\varepsilon _{\ast
}^{2}+\frac{\eta }{1-\eta }x^{2}.  \label{lem-Vx-eq-ref}
\end{align}%
Furthermore, note that 
\begin{equation*}
\Delta (\widehat{\theta },\theta _{\ast })+\lambda \Vert \widehat{\theta }%
-\theta _{\ast }^{\prime }\Vert _{0}\leq \varepsilon _{\ast }^{2}+\Delta (%
\widehat{\theta },\theta _{\ast }^{\prime })+\lambda \Vert \widehat{\theta }%
-\theta _{\ast }^{\prime }\Vert _{0}.
\end{equation*}

By Assumption \ref{iden-cond}, $\Delta (\widehat{\theta },\theta _{\ast
})\geq 0$. Therefore, using Lemma~\ref{sparsity-basic-new}, it follows that $%
\Vert \widehat{\theta }-\theta _{\ast }^{\prime }\Vert _{0}\leq 2s+\lambda
^{-1}[\overline{\Delta }_{n}(\theta _{\ast }^{\prime },\widehat{\theta }%
)+\varepsilon _{\ast }^{2}].$ Thus, if $V_{x}<\eta $, we have 
\begin{equation*}
\begin{split}
\Vert \widehat{\theta }-\theta _{\ast }^{\prime }\Vert _{0}& \leq 2s+\lambda
^{-1}V_{x}\left[ \Delta (\widehat{\theta },\theta _{\ast })+\varepsilon
_{\ast }^{2}+x^{2}\right] +\lambda ^{-1}\varepsilon _{\ast }^{2} \\
& <2s+\lambda ^{-1}\eta \left[ \frac{2}{1-\eta }\lambda s+\frac{1+\eta }{%
1-\eta }\varepsilon _{\ast }^{2}+\frac{\eta }{1-\eta }x^{2}+\varepsilon
_{\ast }^{2}+x^{2}\right] +\lambda ^{-1}\varepsilon _{\ast }^{2}\;\;\text{by %
\eqref{lem-Vx-eq-ref}.}
\end{split}%
\end{equation*}%
Arranging the terms, we therefore have that 
\begin{equation*}
\mathbb{P}\left[ \Vert \widehat{\theta }-\theta _{\ast }^{\prime }\Vert
_{0}\geq \frac{2}{1-\eta }s+\lambda ^{-1}\left\{ \frac{1+\eta }{1-\eta }%
\varepsilon _{\ast }^{2}+\frac{\eta }{1-\eta }x^{2}\right\} \right] \leq 
\mathbb{P}\left( V_{x}\geq \eta \right) ,
\end{equation*}%
which yields (\ref{a3-hamming-loss}) by an application of triangle
inequality.
\end{proof}

For $q\geq 0$, let $\mathbb{B}^{\prime
}(q)\equiv \{\theta \in \Theta ^{\prime }:\Vert \theta \Vert _{0}\leq q\}$. By Assumption~\ref{sep-cond}, 
\begin{equation*}
V_{x}=\sup_{\theta \in \mathbb{B}^{\prime }(k_{0})}\frac{\overline{\Delta }%
_{n}(\theta _{\ast }^{\prime },\theta )}{\Delta (\theta ,\theta _{\ast
})+\varepsilon _{\ast }^{2}+x^{2}}.
\end{equation*}%
In other words, it suffices to take the supremum over the countable subset $%
\mathbb{B}^{\prime }(k_{0})$.

To bound $\mathbb{P}(V_{x}\geq \eta )$ in (\ref{a3}) and (\ref%
{a3-hamming-loss}), we will use Bousquet's inequality and a technical lemma
from \citet{massart2006}. For the sake of easy referencing, these results
are reproduced below.

\begin{lemma}[Bousquet's inequality]
\label{lemma-Bousquet} Suppose that $\mathcal{F}$ is a countable family of
measurable functions such that for every $f\in \mathcal{F},P(f^{2})\leq v$
and $\Vert f\Vert _{\infty }\leq b$ for some positive constants $v$ and $b$.
Define $Z\equiv \sup_{f\in \mathcal{F}}(P_{n}-P)(f)$. Then for every  $y>0$, 
\begin{equation}
\mathbb{P}\left[ Z-\mathbb{E}\left[ Z\right] \geq \sqrt{2\frac{\left( v+4b%
\mathbb{E}[Z]\right) y}{n}}+\frac{2by}{3n}\right] \leq e^{-y}.
\label{Bousquet inequality}
\end{equation}
\end{lemma}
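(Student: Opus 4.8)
The plan is, in essence, to cite: Lemma~\ref{lemma-Bousquet} is the concentration inequality of \citet{bousquet2002}, and the statement here merely reproduces it (in the normalized form with $Z=\sup_{f\in\mathcal{F}}(P_n-P)(f)$ rather than the un-normalized supremum of sums), so the formal ``proof'' is a pointer to that reference, exactly as in \citet{massart2006}. The only hypothesis that must be checked before invoking it is that $\mathcal{F}$ is countable, which is assumed; this guarantees that $Z$ is measurable and that $\mathbb{E}[Z]$, $v$, and $b$ are well defined.

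Were a self-contained derivation wanted, I would reconstruct Bousquet's argument via the entropy (Herbst) method. First I would pass to the un-normalized process $\bar{Z}\equiv n Z=\sup_{f\in\mathcal{F}}\sum_{i=1}^{n}\big(f(X_i)-Pf\big)$ and study the log-moment generating function $\psi(\lambda)\equiv\log\mathbb{E}\exp\{\lambda(\bar{Z}-\mathbb{E}\bar{Z})\}$. Because $X_1,\dots,X_n$ are independent, the tensorization inequality for entropy bounds $\mathrm{Ent}\big(e^{\lambda\bar{Z}}\big)$ by the sum over $i$ of the conditional entropies obtained by resampling the $i$-th coordinate. The second step is to control each conditional entropy by a modified logarithmic Sobolev inequality, exploiting the self-bounding structure of a supremum of independent sums: replacing $X_i$ by an independent copy perturbs $\bar{Z}$ by at most the oscillation of the maximizing $f$ in the $i$-th coordinate, the sum of the squared perturbations is controlled by $v$, and each single perturbation is controlled by $b$. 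Combining these gives a Bernstein-type differential inequality for $\psi$.

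The third step is the Herbst-type integration of that differential inequality to get an explicit bound on $\psi(\lambda)$, followed by the Chernoff argument $\mathbb{P}(\bar{Z}-\mathbb{E}\bar{Z}\ge t)\le\exp\{-\sup_{\lambda}(\lambda t-\psi(\lambda))\}$; optimizing over $\lambda$ and rescaling back to $Z$ yields the stated form with the $\sqrt{2(v+4b\,\mathbb{E}[Z])y/n}$ and $2by/3n$ terms. The main obstacle, and the reason this is a genuine refinement of Talagrand's inequality rather than a routine exercise, is obtaining the \emph{sharp} constants: a crude version of the log-Sobolev step reproduces the same inequality with unspecified multiplicative constants, whereas the precise coefficients require the exact self-bounding estimate and the optimal constant in the modified log-Sobolev inequality. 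Since all of this is carried out in full in \citet{bousquet2002}, for the purposes of this paper it suffices to invoke that reference.
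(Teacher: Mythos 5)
Your proposal matches the paper exactly: the paper offers no proof of Lemma~\ref{lemma-Bousquet}, stating only that it is reproduced from \citet{bousquet2002} for ease of reference, and your treatment of it as a cited external result (with countability of $\mathcal{F}$ ensuring measurability of $Z$) is precisely what the paper does. The additional sketch of the entropy-method derivation is accurate but not required.
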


\begin{lemma}[Lemma A.5 of Massart and N\'{e}d\'{e}lec (2006)]
\label{lemma-weighted-emp-process} Let $\mathcal{S}$ be a countable set, $%
u\in \mathcal{S}$ and $a:\mathcal{S}\rightarrow \mathbb{R}_{+}$ such that $%
a(u)=\inf_{t\in \mathcal{S}}a(t)$. Define $\mathcal{B}(\varepsilon )=\{t\in 
\mathcal{S}:a(t)\leq \varepsilon \}$. Let $Z$ be a process indexed by $%
\mathcal{S}$ and assume that the nonnegative random variable $\sup_{t\in 
\mathcal{B}(\varepsilon )}[Z(u)-Z(t)]$ has finite expectation for any
positive number $\varepsilon $. Let $\psi $ be a nonnegative function on $%
\mathbb{R}_{+}$ such that $\psi (x)/x$ is nonincreasing on $\mathbb{R}_{+}$
and satisfies for some positive $\varepsilon _{\ast }:$ 
\begin{equation*}
\mathbb{E}\left[ \sup_{t\in \mathcal{B}(\varepsilon )}[Z(u)-Z(t)]\right]
\leq \psi (\varepsilon )\ \ \ \text{for any $\varepsilon \geq \varepsilon
_{\ast }$}.
\end{equation*}%
Then, one has, for any positive number $x\geq \varepsilon _{\ast }$, 
\begin{equation*}
\mathbb{E}\left[ \sup_{t\in \mathcal{S}}\left( \frac{Z(u)-Z(t)}{%
a^{2}(t)+x^{2}}\right) \right] \leq 4x^{-2}\psi (x).
\end{equation*}
\end{lemma}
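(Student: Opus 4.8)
The plan is to prove the bound by a dyadic peeling of the index set $\mathcal{S}$ according to the size of $a^{2}(t)+x^{2}$, followed by summing a geometric series. Since $a(u)=\inf_{t\in\mathcal{S}}a(t)$, choosing $t=u$ shows $\sup_{t\in\mathcal{S}}[Z(u)-Z(t)]/[a^{2}(t)+x^{2}]\geq 0$, so it is enough to bound positive parts. For each integer $j\geq 0$ I would set $\mathcal{S}_{j}\equiv\{t\in\mathcal{S}:4^{j}x^{2}\leq a^{2}(t)+x^{2}<4^{j+1}x^{2}\}$. Because $a^{2}(t)+x^{2}\geq x^{2}=4^{0}x^{2}$ for every $t$, the sets $\mathcal{S}_{j}$, $j\geq 0$, partition $\mathcal{S}$; and on $\mathcal{S}_{j}$ one has both $a^{2}(t)+x^{2}\geq 4^{j}x^{2}$ and $a^{2}(t)<4^{j+1}x^{2}$, the latter giving $a(t)<2^{j+1}x$, i.e. $\mathcal{S}_{j}\subseteq\mathcal{B}(2^{j+1}x)$.

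The key estimate is then $\sup_{t\in\mathcal{S}}\frac{Z(u)-Z(t)}{a^{2}(t)+x^{2}}\leq\sum_{j\geq 0}\bigl(\sup_{t\in\mathcal{S}_{j}}\frac{Z(u)-Z(t)}{a^{2}(t)+x^{2}}\bigr)^{+}\leq\sum_{j\geq 0}\frac{1}{4^{j}x^{2}}\sup_{t\in\mathcal{B}(2^{j+1}x)}(Z(u)-Z(t))$, where the last step uses the denominator lower bound $4^{j}x^{2}$ on $\mathcal{S}_{j}$, the inclusion $\mathcal{S}_{j}\subseteq\mathcal{B}(2^{j+1}x)$, and the fact that $\sup_{\mathcal{B}(2^{j+1}x)}(Z(u)-Z(t))\geq 0$ whenever that ball is nonempty (the only case that matters, since an empty shell contributes zero). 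Taking expectations term by term (Tonelli, all summands nonnegative), then applying the hypothesis $\mathbb{E}[\sup_{\mathcal{B}(\varepsilon)}(Z(u)-Z(t))]\leq\psi(\varepsilon)$ with $\varepsilon=2^{j+1}x\geq x\geq\varepsilon_{\ast}$, and finally $\psi(2^{j+1}x)\leq 2^{j+1}\psi(x)$ (because $\psi(\lambda y)/(\lambda y)\leq\psi(y)/y$ for $\lambda\geq 1$), I obtain $\mathbb{E}[\sup_{t\in\mathcal{S}}\frac{Z(u)-Z(t)}{a^{2}(t)+x^{2}}]\leq\sum_{j\geq 0}\frac{2^{j+1}\psi(x)}{4^{j}x^{2}}=\frac{2\psi(x)}{x^{2}}\sum_{j\geq 0}2^{-j}=\frac{4\psi(x)}{x^{2}}$, as claimed.

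The one point that genuinely requires care --- the ``main obstacle'' --- is the geometric ratio in the peeling: the shells must be defined so that $a^{2}(t)+x^{2}$ grows by a factor of $4$ per step, since then the enclosing ball radius only doubles and the per-shell weight $2^{j+1}/4^{j}=2\cdot 2^{-j}$ sums to exactly $4$; a larger ratio makes the balls grow too fast and a smaller one makes the series decay too slowly, so neither yields the sharp constant. The rest is routine bookkeeping: measurability and finiteness of the suprema (supplied by the finite-expectation hypothesis and countability of $\mathcal{S}$), the elementary identity $\sup_{t}f(t)^{+}=(\sup_{t}f(t))^{+}$ used to pass to positive parts, the convention that empty shells contribute zero, and the monotonicity property $\psi(\lambda y)\leq\lambda\psi(y)$ for $\lambda\geq 1$ invoked in the final summation.
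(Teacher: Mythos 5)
Your proof is correct. Note, however, that the paper itself does not prove this statement: it is quoted verbatim (as Lemma A.5 of Massart and N\'{e}d\'{e}lec, 2006) purely for ease of reference, so there is no in-paper argument to compare against. Your dyadic peeling --- shells on which $a^{2}(t)+x^{2}$ grows by a factor of $4$, each contained in a ball of doubled radius, combined with $\psi(\lambda y)\leq\lambda\psi(y)$ for $\lambda\geq 1$ and the geometric sum $\sum_{j\geq 0}2^{j+1}/4^{j}=4$ --- is essentially the standard proof of that cited lemma, and you handle the one genuinely delicate point correctly: the ball-supremum is nonnegative whenever the ball is nonempty because $a(u)=\inf_{t}a(t)$ forces $u$ into every nonempty $\mathcal{B}(\varepsilon)$.
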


\subsection{Proof of Theorem \protect\ref{thm-main-1}}

\begin{proof}[Proof of Theorem \protect\ref{thm-main-1}]
We prove Theorem \ref{thm-main-1} by adopting the ideas behind the proof of
Theorem 2 in \citet{massart2006}. By Assumption \ref{iden-cond}, $\Delta (\theta ,\theta _{\ast })\geq 0$ for
all $\theta \in \mathbb{B}^{\prime }(k_{0})$. Using this fact and
Assumptions \ref{L-cond} and \ref{design-cond-1}, we have that, for all $%
i=1,\ldots ,n$ and for all $\theta \in \mathbb{B}^{\prime }(k_{0})$,%
\begin{equation*}
\left\vert \frac{\rho (Y_{i},X_{i}^{\top }\theta _{\ast }^{\prime })-\rho
(Y_{i},X_{i}^{\top }\theta )}{\Delta (\theta ,\theta _{\ast })+\varepsilon
_{\ast }^{2}+x^{2}}\right\vert \leq \frac{LB^{2}\left( s+k_{0}\right) }{%
x^{2}}\equiv b_{x}.
\end{equation*}%
Moreover, by Assumptions \ref{L-cond}, \ref{design-cond-1}, \ref{key-cond}
and \ref{design-cond-2}, for all $\theta \in \mathbb{B}^{\prime }(k_{0})$,%
\begin{eqnarray*}
\mathbb{E}\left( \left[ \rho (Y_{i},X_{i}^{\top }\theta _{\ast }^{\prime
})-\rho (Y_{i},X_{i}^{\top }\theta )\right] ^{2}\right) &\leq
&L^{2}B^{2}\left\Vert \theta -\theta _{\ast }^{\prime }\right\Vert _{1}^{2}
\\
&\leq &L^{2}B^{2}\Vert \theta -\theta _{\ast }^{\prime }\Vert _{0}\left(
\Vert \theta -\theta _{\ast }\Vert _{2}+\Vert \theta _{\ast }^{\prime
}-\theta _{\ast }\Vert _{2}\right) ^{2} \\
&\leq &L^{2}B^{2}\left( s+k_{0}\right) \kappa _{1}^{-2}\kappa
_{0}^{-2}\left( \sqrt{\Delta (\theta ,\theta _{\ast })}+\varepsilon _{\ast
}\right) ^{2} \\
&\leq &2L^{2}B^{2}\left( s+k_{0}\right) \kappa _{1}^{-2}\kappa
_{0}^{-2}\left( \Delta (\theta ,\theta _{\ast })+\varepsilon _{\ast
}^{2}\right)
\end{eqnarray*}%
and therefore%
\begin{eqnarray*}
\mathbb{E}\left( \left[ \frac{\rho (Y_{i},X_{i}^{\top }\theta _{\ast
}^{\prime })-\rho (Y_{i},X_{i}^{\top }\theta )}{\Delta (\theta ,\theta
_{\ast })+\varepsilon _{\ast }^{2}+x^{2}}\right] ^{2}\right) &\leq &\frac{%
2L^{2}B^{2}\left( s+k_{0}\right) \kappa _{1}^{-2}\kappa _{0}^{-2}\left(
\Delta (\theta ,\theta _{\ast })+\varepsilon _{\ast }^{2}\right) }{\left[
\Delta (\theta ,\theta _{\ast })+\varepsilon _{\ast }^{2}+x^{2}\right] ^{2}}
\\
&\leq &\frac{2L^{2}B^{2}\left( s+k_{0}\right) \kappa _{1}^{-2}\kappa
_{0}^{-2}}{x^{2}}\sup_{\varepsilon \geq 0}\frac{\varepsilon }{\varepsilon
+x^{2}} \\
&\leq &\frac{2L^{2}B^{2}\left( s+k_{0}\right) \kappa _{1}^{-2}\kappa
_{0}^{-2}}{x^{2}}\equiv v_{x}.
\end{eqnarray*}%
Choose $b=b_{x}$ and $v=v_{x}$ in Lemma~\ref{lemma-Bousquet}. By (\ref%
{Bousquet inequality}), we then have that for every positive $y$, 
\begin{equation}
\mathbb{P}\left[ V_{x}-\mathbb{E}[V_{x}]\geq \sqrt{2\frac{(v_{x}+4b_{x}%
\mathbb{E}[V_{x}])y}{n}}+\frac{2b_{x}y}{3n}\right] \leq \exp (-y).
\label{Bousquet-ineq}
\end{equation}%
We now bound $\mathbb{E}[V_{x}]$ using Lemma~\ref{lemma-weighted-emp-process}%
. Let $a^{2}(\theta )\equiv \Delta (\theta _{\ast }^{\prime },\theta _{\ast
})\vee \Delta (\theta ,\theta _{\ast })$ for any $\theta \in \Theta $. Then $%
\Delta (\theta ,\theta _{\ast })\leq a^{2}(\theta )\leq \Delta (\theta
,\theta _{\ast })+\varepsilon _{\ast }^{2}$. Therefore, 
\begin{equation}
\mathbb{E}[V_{x}]\leq \mathbb{E}\left[ \sup_{\theta \in \mathbb{B}^{\prime
}(k_{0})}\frac{\overline{\Delta }_{n}(\theta _{\ast }^{\prime },\theta )}{%
a^{2}(\theta )+x^{2}}\right]
\end{equation}%
and for every $\varepsilon \geq \varepsilon _{\ast }$, 
\begin{equation*}
\mathbb{E}\left[ \sup_{\theta \in \mathbb{B}^{\prime }(k_{0}):a(\theta )\leq
\varepsilon }\overline{\Delta }_{n}(\theta _{\ast }^{\prime },\theta )\right]
\leq \mathbb{E}\left[ \sup_{\theta \in \mathbb{B}^{\prime }(k_{0}):\Delta
(\theta ,\theta _{\ast })\leq \varepsilon ^{2}}\overline{\Delta }_{n}(\theta
_{\ast }^{\prime },\theta )\right] .
\end{equation*}%
The next step is to find a function $\psi $ such that 
\begin{equation}
\mathbb{E}\left[ \sup_{\theta \in \mathbb{B}^{\prime }(k_{0}):\Delta (\theta
,\theta _{\ast })\leq \varepsilon ^{2}}\overline{\Delta }_{n}(\theta _{\ast
}^{\prime },\theta )\right] \leq \psi (\varepsilon )\ \ \ \text{for any $%
\varepsilon \geq \varepsilon _{\ast }$}.
\end{equation}%
Let $\epsilon _{1},\ldots ,\epsilon _{n}$ denote a Rademacher sequence that
is independent of $\{(Y_{i},X_{i}):i=1,\ldots ,n\}$. By the symmetrization
and contraction theorems (e.g., Theorems 14.3 and 14.4 of %
\citet{buhlmann2011}), 
\begin{align*}
& \mathbb{E}\left[ \sup_{\theta \in \mathbb{B}^{\prime }(k_{0}):\Delta
(\theta ,\theta _{\ast })\leq \varepsilon ^{2}}\overline{\Delta }_{n}(\theta
_{\ast }^{\prime },\theta )\right] \\
& \leq \mathbb{E}\left[ \sup_{\theta \in \mathbb{B}^{\prime }(k_{0}):\Delta
(\theta ,\theta _{\ast })\leq \varepsilon ^{2}}|\overline{\Delta }%
_{n}(\theta _{\ast }^{\prime },\theta )|\right] \\
& \leq 2\mathbb{E}\left[ \sup_{\theta \in \mathbb{B}^{\prime }(k_{0}):\Delta
(\theta ,\theta _{\ast })\leq \varepsilon ^{2}}\left\vert
n^{-1}\sum_{i=1}^{n}\epsilon _{i}\left\{ \rho (Y_{i},X_{i}^{\top }\theta
)-\rho (Y_{i},X_{i}^{\top }\theta _{\ast }^{\prime })\right\} \right\vert %
\right] \\
& \leq 4L\mathbb{E}\left[ \sup_{\theta \in \mathbb{B}^{\prime
}(k_{0}):\Delta (\theta ,\theta _{\ast })\leq \varepsilon ^{2}}\left\vert
n^{-1}\sum_{i=1}^{n}\epsilon _{i}X_{i}^{\top }(\theta -\theta _{\ast
}^{\prime })\right\vert \right] .
\end{align*}%
By H\"{o}lder's inequality%
\begin{equation*}
\left\vert n^{-1}\sum_{i=1}^{n}\epsilon _{i}X_{i}^{\top }(\theta -\theta
_{\ast }^{\prime })\right\vert \leq \Vert \theta -\theta _{\ast }^{\prime
}\Vert _{1}\max_{1\leq j\leq p}\left\vert n^{-1}\sum_{i=1}^{n}\epsilon
_{i}X_{i}^{(j)}\right\vert ,
\end{equation*}%
where $X_{i}^{(j)}$ denotes the $j$-th component of the covariate vector $%
X_{i}$.

For $\theta \in \mathbb{B}^{\prime }(k_{0})$ that satisfies $\Delta (\theta
,\theta _{\ast })\leq \varepsilon ^{2}$, we have that, by Assumptions \ref%
{key-cond} and \ref{design-cond-2},%
\begin{eqnarray*}
\Vert \theta -\theta _{\ast }^{\prime }\Vert _{1} &\leq &\sqrt{\Vert \theta
-\theta _{\ast }^{\prime }\Vert _{0}}\left( \Vert \theta -\theta _{\ast
}\Vert _{2}+\Vert \theta _{\ast }^{\prime }-\theta _{\ast }\Vert _{2}\right)
\\
&\leq &\left( s+k_{0}\right) ^{1/2}\kappa _{1}^{-1}\kappa _{0}^{-1}\left( 
\sqrt{\Delta \left( \theta ,\theta _{\ast }\right) }+\sqrt{\Delta \left(
\theta _{\ast }^{\prime },\theta _{\ast }\right) }\right) \\
&\leq &\left( s+k_{0}\right) ^{1/2}\kappa _{1}^{-1}\kappa _{0}^{-1}\left(
\varepsilon +\varepsilon _{\ast }\right).
\end{eqnarray*}%
Therefore, we have that for any $\varepsilon \geq \varepsilon _{\ast }$, 
\begin{align*}
\mathbb{E}\left[ \sup_{\theta \in \mathbb{B}^{\prime }(k_{0}):\Delta (\theta
,\theta _{\ast })\leq \varepsilon ^{2}}\overline{\Delta }_{n}(\theta _{\ast
}^{\prime },\theta )\right] & \leq 8L\left( s+k_{0}\right) ^{1/2}\kappa
_{1}^{-1}\kappa _{0}^{-1}\varepsilon \mathbb{E}\left[ \max_{1\leq j\leq
p}\left\vert n^{-1}\sum_{i=1}^{n}\epsilon _{i}X_{i}^{(j)}\right\vert \right]
\\
& \leq C\left( s+k_{0}\right) ^{1/2}\varepsilon \sqrt{\frac{2\ln (2p)}{n}},
\end{align*}%
where the last inequality follows from Hoeffding's inequality (e.g., Lemma
14.14 of \citet{buhlmann2011}) together with (\ref{C0}) and Assumption \ref%
{design-cond-1}. Hence, we set 
\begin{equation}
\psi (x)\equiv C\left( s+k_{0}\right) ^{1/2}x\sqrt{{2\ln (2p)}/{n}}.
\label{form of psi}
\end{equation}%
Thus, by Lemma~\ref{lemma-weighted-emp-process}, for any $x\geq \varepsilon
_{\ast }$, 
\begin{equation*}
\mathbb{E}[V_{x}]\leq \frac{4C}{x}(s+k_{0})^{1/2}\sqrt{\frac{2\ln (2p)}{n}}.
\end{equation*}%
For every $y\geq 1$, set $x=\sqrt{My}\varepsilon _{\ast }$ for some constant 
$M\geq 1$, which will be chosen below, and 
\begin{equation}
\varepsilon _{\ast }=4C(s+k_{0})^{1/2}\sqrt{\frac{2\ln (2p)}{n}}.
\label{epsilon_star}
\end{equation}%
By (\ref{C0}) and (\ref{condition on p}), we have that $C^{2}\ln (2p)\geq
LB^{2}$ and $\ln (2p)\geq 1$. Therefore,%
\begin{align*}
\mathbb{E}[V_{x}]& \leq \frac{1}{\sqrt{My}}\leq \frac{1}{M^{1/2}}, \\
\frac{b_{x}y}{n}& =\frac{LB^{2}}{32MC^{2}\ln (2p)}\leq \frac{1}{32M}, \\
\frac{v_{x}y}{n}& =\frac{1}{1024M\ln (2p)}\leq \frac{1}{1024M},
\end{align*}%
which implies that%
\begin{eqnarray*}
&&\mathbb{E}[V_{x}]+\sqrt{2\frac{(v_{x}+4b_{x}\mathbb{E}[V_{x}])y}{n}}+\frac{%
2b_{x}y}{3n} \\
&\leq &\frac{1}{M^{1/2}}+\sqrt{\frac{1}{512M}+\frac{1}{4M^{3/2}}}+\frac{1}{48M%
} \\
&\leq &\frac{3}{M^{1/2}}.
\end{eqnarray*}%
It thus follows from (\ref{Bousquet-ineq}) that%
\begin{equation}
\mathbb{P}\left[ V_{x}\geq 3M^{-1/2}\right] \leq \exp (-y)
\label{probability inequality for Vx}
\end{equation}%
for $y\geq 1$. Now choose a sufficiently large $M$ that satisfies 
\begin{equation}
\eta \geq 3M^{-1/2}  \label{condition on M}
\end{equation}%
for any given positive constant $\eta <1$. Putting together (\ref{C0}), (\ref%
{a3}), (\ref{a3-hamming-loss}), (\ref{epsilon_star}), (\ref{probability
inequality for Vx}) and (\ref{condition on M}), we then have that%
\begin{align*}
& \mathbb{P}\left[ \Delta (\widehat{\theta },\theta _{\ast })\geq \frac{2}{%
1-\eta }\lambda s+32C^{2}(s+k_{0})\left( \frac{1+\eta +M\eta y}{1-\eta }%
\right) \frac{\ln (2p)}{n}\right] \leq \exp (-y), \\
& \mathbb{P}\left[ \Vert \widehat{\theta }-\theta _{\ast }\Vert _{0}\geq 
\frac{4-2\eta }{1-\eta }s+32\lambda ^{-1}C^{2}(s+k_{0})\left( \frac{1+\eta
+M\eta y}{1-\eta }\right) \frac{\ln (2p)}{n}\right] \leq \exp (-y)
\end{align*}%
for $y\geq 1$.
\end{proof}

\subsection{Proof of Theorem \protect\ref{mean excess risk and mean squared
estimation error}}

\begin{proof}[Proof of Theorem \protect\ref{mean excess risk and mean
squared estimation error}]
By (\ref{probability bound}) of Theorem \ref{thm-main-1} with the choice of $%
\eta =1/2$, we have that, for every $y\geq 1$,%
\begin{equation}
\mathbb{P}\left[ S(\widehat{\theta })-S\left( \theta _{\ast }\right) \geq
A+By\right] \leq \exp (-y),  \label{tail probability inequality}
\end{equation}%
where%
\begin{eqnarray*}
A &\equiv &4\lambda s+96C^{2}(s+k_{0})\frac{\ln (2p)}{n}, \\
B &\equiv &32MC^{2}(s+k_{0})\frac{\ln (2p)}{n}.
\end{eqnarray*}%
and the constant $C$ is given by (\ref{C0}). Since $S(\widehat{\theta })\geq
S\left( \theta _{\ast }\right) $, result (\ref{mean excess risk}) thus
follows by noting that 
\begin{align*}
& \mathbb{E}\left[ S(\widehat{\theta })-S\left( \theta _{\ast }\right) %
\right] \\
& =\int_{0}^{\infty }\mathbb{P}\left[ S(\widehat{\theta })-S\left( \theta
_{\ast }\right) \geq t\right] dt \\
& =B\int_{-A/B}^{\infty }\mathbb{P}\left[ S(\widehat{\theta })-S\left(
\theta _{\ast }\right) \geq A+By\right] dy \\
& =B\int_{-A/B}^{1}\mathbb{P}\left[ S(\widehat{\theta })-S\left( \theta
_{\ast }\right) \geq A+By\right] dy+B\int_{1}^{\infty }\mathbb{P}\left[ S(%
\widehat{\theta })-S\left( \theta _{\ast }\right) \geq A+By\right] dy \\
& \leq A+2B,
\end{align*}%
where the last inequality above follows from applying (\ref{tail probability
inequality}) for $y\in \lbrack 1,\infty )$. Moreover, using (\ref{margin
condition}) and (\ref{sparse-eigen}), we can deduce that%
\begin{eqnarray*}
\mathbb{E}\left[ R(\widehat{\theta })\right] &\leq &\kappa _{0}^{-2}\mathbb{E%
}\left[ S(\widehat{\theta })-S\left( \theta _{\ast }\right) \right] , \\
\mathbb{E}\left[ \left\Vert \widehat{\theta }-\theta _{\ast }\right\Vert
_{2}^{2}\right] &\leq &\kappa _{1}^{-2}\kappa _{0}^{-2}\mathbb{E}\left[ S(%
\widehat{\theta })-S\left( \theta _{\ast }\right) \right] ,
\end{eqnarray*}%
which, given (\ref{mean excess risk}), therefore imply (\ref{mean squared
prediction error}) and (\ref{mean squared estimation error}).
\end{proof}

\subsection{Proof of Theorem \protect\ref{thm:sparsity}}

\begin{proof}[Proof of Theorem \protect\ref{thm:sparsity}]
By (\ref{condition on p}), $p\geq 2$ so that $\ln (2p)\leq 2\ln p$. Thus
using \eqref{probability bound hamming} in Theorem \ref{thm-main-1}, we have
that 
\begin{equation*}
\mathbb{P}\left[ \Vert \widehat{\theta }-\theta _{\ast }\Vert _{0}\geq \frac{%
4-2\eta }{1-\eta }s+64\lambda ^{-1}C^{2}(s+k_{0})\left( \frac{1+\eta +M\eta y%
}{1-\eta }\right) \frac{\ln (p)}{n}\right] \leq \exp (-y)
\end{equation*}%
for $y\geq 1$. Choose the smallest $M$ that satisfies \eqref{condition on M}%
, i.e., $\eta =3M^{-1/2}$. Because it is assumed that $k_{0}\leq C_{k}s$ for
a fixed constant $C_{k}$ and $\lambda =C_{\lambda }{\ln p}/{n}$, taking $%
C_{\lambda }=64\zeta _{\lambda }C^{2}(C_{k}+1)$ for some constant $\zeta
_{\lambda }\geq 1$, we have that for $y\geq 1$, 
\begin{equation}
\mathbb{P}\left[ \Vert \widehat{\theta }-\theta _{\ast }\Vert _{0}\geq As+Bsy%
\right] \leq \exp (-y),  \label{key-step-derived-sparsity-more}
\end{equation}%
where 
\begin{equation*}
A=\frac{4-2\eta }{1-\eta }+\zeta _{\lambda }^{-1}\frac{1+\eta }{1-\eta }\ \
\ \text{ and }\ \ \ B=\zeta _{\lambda }^{-1}\frac{9}{\eta (1-\eta )}.
\end{equation*}%
Using the integrated tail probability expectation formula for nonnegative
integer valued random variables (see e.g.\thinspace \citet{lo2019}) and
following similar steps in the Proof of Theorem \ref{mean excess risk and
mean squared estimation error}, we have that 
\begin{equation*}
\mathbb{E}\left[ \Vert \widehat{\theta }-\theta _{\ast }\Vert _{0}\right]
\leq (A+2B)s.
\end{equation*}%
The conclusion of the theorem follows by first choosing a sufficiently small 
$\eta $ and then selecting a sufficiently large $\zeta _{\lambda }$.
\end{proof}

\subsection{Proof of Corollary \protect\ref{thm-cor-1}}

\begin{proof}[Proof of Corollary \protect\ref{thm-cor-1}]
As discussed in Section \ref{theory:assump}, Assumptions \ref{iden-cond}--%
\ref{design-cond-2} are satisfied for quantile regression with the Lipschitz
constant $L=1$, Assumption \ref{sep-cond} holds by \eqref{boundedness} and
the presumption on the finiteness of $\mathbb{E}|Y|$, and Assumption \ref%
{design-cond-2} holds with $\kappa _{1}=\sqrt{\omega }$.

Note that, for any $\theta \in \mathbb{B}(k_{0})$, we have that $\left\vert
x^{\top }(\theta -\theta _{\ast })\right\vert \leq B^{2}\left(
k_{0}+s\right) $ by \eqref{boundedness}. Using (\ref{difference in QR
objective functions}), it hence follows from assumption (iv) of this
corollary that%
\begin{equation*}
S(\theta )-S(\theta _{\ast })\geq \frac{c_{u}}{2}\mathbb{E}\left[ \left\vert
X^{\top }(\theta -\theta _{\ast })\right\vert ^{2}\right] \text{ for all }%
\theta \in \mathbb{B}(k_{0}).
\end{equation*}%
Thus, Assumption \ref{key-cond} of Theorem \ref{thm-main-1} also holds with $%
\kappa _{0}=\sqrt{c_{u}/2}$. As a result, the corollary follows from Theorem %
\ref{mean excess risk and mean squared estimation error}.
\end{proof}

\subsection{Proof of Corollary \protect\ref{thm-cor-2}}

\begin{proof}[Proof of Corollary \protect\ref{thm-cor-2}]
Repeating arguments used in the proof of Theorem \ref{thm-main-1} with $%
\lambda =0$, we have that 
\begin{equation*}
\mathbb{P}\left[ \Delta (\tilde{\theta},\theta _{\ast })\geq 32C^{2}\left(
s+q\right) \left( \frac{1+\eta +M\eta y}{1-\eta }\right) \frac{\ln (2p)}{n}%
\right] \leq \exp (-y)
\end{equation*}%
for $y\geq 1$. Then we can proceed as in the proof of Corollary \ref%
{thm-cor-1}.
\end{proof}

\subsection{Proof of Proposition \protect\ref{stationary point}}

\begin{proof}[Proof of Proposition \protect\ref{stationary point}]
(a) Let $\widehat{t}$ denote a point in $H_{\delta ,l}(\widehat{\theta }%
_{\delta })$. By (\ref{H(t)}) and (\ref{H(t) and Q(b)}), 
\begin{equation}
\widetilde{Q}_{n}(\widehat{t};\widehat{\theta }_{\delta },\delta ,l)\leq 
\widetilde{Q}_{n}(\widehat{\theta }_{\delta };\widehat{\theta }_{\delta
},\delta ,l).  \label{a}
\end{equation}%
Using (\ref{quadratic envelope}), we have that%
\begin{equation}
\widetilde{Q}_{n}(\widehat{\theta }_{\delta };\widehat{\theta }_{\delta
},\delta ,l)=Q_{n}(\widehat{\theta }_{\delta };\delta )\leq Q_{n}(\widehat{t}%
;\delta )\leq \widetilde{Q}_{n}(\widehat{t};\widehat{\theta }_{\delta
},\delta ,l).  \label{b}
\end{equation}%
Putting (\ref{a}) and (\ref{b}) together, we have that%
\begin{equation*}
\widetilde{Q}_{n}(\widehat{t};\widehat{\theta }_{\delta },\delta ,l)=%
\widetilde{Q}_{n}(\widehat{\theta }_{\delta };\widehat{\theta }_{\delta
},\delta ,l)
\end{equation*}%
so that $\widehat{\theta }_{\delta }$ is also a minimizer to the problem (%
\ref{H(t)}) with $t=\widehat{\theta }_{\delta }$. It thus follows that $%
\widehat{\theta }_{\delta }\in H_{\delta ,l}(\widehat{\theta }_{\delta })$.

(b) Proof of part (b) follows closely that of Theorem 3.1 of %
\citet{bertsimas2016}. Note that, for any $l\geq h$, if $t^{\prime }\in
H_{\delta ,l}(t)$, then 
\begin{eqnarray}
Q_{n}(t;\delta ) &=&\widetilde{Q}_{n}(t;t,\delta ,l)  \notag \\
&\geq &\widetilde{Q}_{n}(t^{\prime };t,\delta ,l)  \notag \\
&=&\frac{l-h}{2}\left\Vert t^{\prime }-t\right\Vert _{2}^{2}+\widetilde{Q}%
_{n}(t^{\prime };t,\delta ,h)  \notag \\
&\geq &\frac{l-h}{2}\left\Vert t^{\prime }-t\right\Vert
_{2}^{2}+Q_{n}(t^{\prime };\delta )  \label{c}
\end{eqnarray}%
so that%
\begin{equation}
\left\Vert t^{\prime }-t\right\Vert _{2}^{2}\leq \frac{2\left(
Q_{n}(t;\delta )-Q_{n}(t^{\prime };\delta )\right) }{l-h}.  \label{d}
\end{equation}%
Now let $l>h$ and consider the sequence $t_{m}$ satisfying $t_{m+1}\in
H_{\delta ,l}(t_{m})$. Since the parameter space $\Theta $ is compact,
inequality (\ref{c}) implies that $Q_{n}(t_{m};\delta )$ decreases with $m$
and thus converges to a limit $Q^{\ast }$ as $m\longrightarrow \infty $.
Using this fact together with (\ref{d}), it follows that $t_{m}$ also
converges to a limit $t^{\ast }$. Applying (\ref{d}) with $t^{\prime
}=t_{m+1}$ and $t=t_{m}$, we can deduce (\ref{convergence}) by noting that%
\begin{equation*}
\min_{m=1,...,N}\Vert t_{m+1}-t_{m}\Vert _{2}^{2}\leq \frac{1}{N}%
\sum\nolimits_{m=1}^{N}\Vert t_{m+1}-t_{m}\Vert _{2}^{2}\leq \frac{2\left(
Q_{n}(t_{1};\delta )-Q^{\ast }\right) }{N\left( l-h\right) }.
\end{equation*}
\end{proof}

\subsection{Proof of Proposition \protect\ref{Approximation}}

\begin{proof}[Proof of Proposition \protect\ref{Approximation}]
By (\ref{Sn(theta;delta)}) and (\ref{Qn(theta;delta)}), we have that $%
Q_{n}(\theta ;0)=S_{n}(\theta )+\lambda \Vert \theta \Vert _{0}$.

For each $\theta \in \Theta $, 
\begin{equation*}
Q_{n}(\theta ;\delta )\leq Q_{n}(\theta ;0)\leq Q_{n}(\theta ;\delta )+\frac{%
\delta c_{\tau }}{2}.
\end{equation*}%
Therefore, we can deduce that%
\begin{eqnarray*}
Q_{n}(\widehat{\theta }_{\delta };\delta ) &\leq &\min_{\theta \in \mathbb{B}%
(k_{0})}Q_{n}(\theta ;0) \\
&\leq &Q_{n}(\widehat{\theta }_{\delta };0) \\
&\leq &Q_{n}(\widehat{\theta }_{\delta };\delta )+\frac{\delta c_{\tau }}{2}
\\
&\leq &\min_{\theta \in \mathbb{B}(k_{0})}Q_{n}(\theta ;0)+\frac{\delta
c_{\tau }}{2}.
\end{eqnarray*}
\end{proof}

\section*{\LARGE Online Appendix to ``Sparse Quantile Regression'' by Le-Yu Chen and Sokbae Lee}

\section{Additional Results of the Empirical Application\label{empirical_application_appendix}}

In this online appendix, we provide the variable selection results obtained under $\ell _{0}$-PQR, $\ell _{0}$-CQR, $\ell _{1}$-PQR and the other four
alternative penalized quantile regression approaches (AL-SCAD, AL-MCP,
QR-SCAD, QR-MCP) in our empirical study of Section 6. 
 To facilitate presentation of these results, we give below further details on
the five covariate specifications used in this study.

Table \ref{basic covariate specification} lists names and definitions of the
basic covariates. These variables constitute the first covariate
specification where $p=21$. The second specification builds on and modifies
the first as follows. Let \texttt{medu1}, \texttt{medu2}, \texttt{medu3} be
the binary variables indicating respectively whether mother's years of
education were exactly 12, between 12 and 16, or at least 16. Define
analogously \texttt{fedu1}, \texttt{fedu2}, \texttt{fedu3} to be the
corresponding indicator variables that discretize father's years of
education. For $j=1,...,m+3$, let \texttt{$B_{j}$(mage)}, \texttt{$B_{j}$%
(fage)}, \texttt{$B_{j}$(npre)} and \texttt{$B_{j}$(mslb)} denote the cubic
B-spline series terms for approximating functions of the variables \texttt{%
mage}, \texttt{fage}, \texttt{nprenatal} and \texttt{monthslb} respectively
using $m$ interior knots where these approximations do not include B-spline
intercept terms.

The second specification consists of all variables of the first
specification except that the covariates \texttt{medu} and \texttt{fedu} are
replaced by the six indicator variables that discretize both parents' years
of educations as defined above, and moreover the variables \texttt{mage}, 
\texttt{fage}, \texttt{nprenatal} and \texttt{monthslb} are replaced by
their corresponding cubic B-spline terms using 4 interior knots. The third
covariate specification consists of all variables in the second
specification together with those obtained by interacting the B-spline
expansion terms with the other explanatory variables. Both the fourth and
fifth specifications are constructed using the same procedure as for the
third case except that we increase numbers of interior B-spline knots to be
12 and 16 respectively for these two specifications. Accordingly, the
covariate vector under the second, third, fourth and fifth specifications
has dimension 49, 609, 1281, and 1617 respectively. Finally, for each
covariate specification, all stochastic covariates thus constructed are
further standardized to have mean zero and variance unity.

We now discuss the variable selection results of our empirical study. Tables %
\ref{sel_lower_quantile_21} - \ref{sel_upper_quantile_1617} report results
of the top 10 most often selected covariates as well as their proportions of
being selected and the corresponding average estimated regression
coefficient values. From these tables, we note that the regression intercept
was always selected under every estimation approach and across all the
covariate specifications. In addition, its average estimated value was quite
similar in most estimation scenarios. At 5\% quantile level, for the case
with $p=21$, Table \ref{sel_lower_quantile_21} indicates that the variable
for number of prenatal care visits (\texttt{nprenatal}) was also always
selected and other important predictors were mother's smoking behavior
during pregnancy (\texttt{msmoke}) and her race (\texttt{mrace}), both being
selected with at least 60\% incidence rate across all estimation approaches.
For each of these three variables, the corresponding estimated coefficient
was also of the same sign and had similar magnitude across all the methods.
At 95\% quantile level, analogous variable selection results emerged in
Table \ref{sel_upper_quantile_21} though \texttt{mrace} was no longer listed
among the top 4 most often selected variables under some of the estimation
approaches.

For the covariate specification with $p=49$, Table \ref%
{sel_lower_quantile_49} shows that the B-spline expansion terms \texttt{$%
B_{1}$(npre)}, \texttt{$B_{2}$(npre)} and \texttt{$B_{4}$(npre)} were among
the top 4 most often selected variables across most of the quantile
regression approaches for the estimation at 5\% quantile level. This
indicates that the true 5\% level conditional quantile function could be
nonlinear in the variable \texttt{nprenatal}. Yet, at 95\% quantile level,
we find that, except for \texttt{$B_{5}$(npre)}, which was selected in at
least 60\% of the conformal prediction replications under AL-SCAD and
AL-MCP, covariates of the B-spline series terms appeared to be less
important under the other estimation approaches. By contrast, maternal
smoking behavior (\texttt{msmoke}) was the most often selected stochastic
variable across all the estimation approaches in this setting.

Finally, for higher dimensional cases with $p\in\{609,1281,1617\}$, it is
evident from Tables \ref{sel_lower_quantile_609} - \ref%
{sel_upper_quantile_1617} that, except for the $\ell _{1}$-PQR cases, for
each of the stochastic covariates, its incidence of selection was well
capped below 60\% under all the other estimation approaches. Specifically,
at 95\% quantile level in the case with $p=1617$, regression intercept was
the only selected covariate under both MIO and FO based implementations of $%
\ell _{0}$-PQR. On the whole, while all estimation approaches agreed to the
selection of regression intercept, we note that the variable selection
results generally appeared to vary to a much larger extent across methods
under the high dimensional covariate specifications.

\begin{center}
\begin{table}[tbh]
\caption{Names and definitions of basic covariates }
\label{basic covariate specification}%
\begin{tabular}{l|l}
\hline\hline
Variable name & Definition \\ \hline
intercept & regression intercept \\ 
married & marital status \\ 
mage & mother's age \\ 
medu & mother's years of education \\ 
mhisp & whether mother is hispanic \\ 
mrace & whether mother's race is white \\ 
fage & father's age \\ 
fedu & father's years of education \\ 
fhisp & whether father is hispanic \\ 
frace & whether father's race is white \\ 
foreign & whether mother was born abroad \\ 
alcohol & whether mother drank alcohol during pregnancy \\ 
msmoke & whether mother smoked during pregnancy \\ 
deadkids & whether a newborn died in previous births \\ 
monthslb & number of months since last birth \\ 
nprenatal & number of prenatal care visits \\ 
trimester1 & whether the first prenatal care visit was in the first trimester
\\ 
fbaby & whether the infant was the first born child \\ 
season1 & whether the infant was born in the winter \\ 
season2 & whether the infant was born in the spring \\ 
season3 & whether the infant was born in the summer \\ \hline\hline
\end{tabular}%
\end{table}
\end{center}

\clearpage
\begin{threeparttable}[tbh]
\caption{Top 10 most often selected variables for $p=21$ at $5\%$ quantile}
\label{sel_lower_quantile_21}
\footnotesize{
\begin{tabular}{c|cccccccc}
\hline\hline
& \multicolumn{2}{|c}{$\ell _{0}$-PQR} & $\ell _{0}$-CQR & $\ell _{1}$-PQR & 
AL-SCAD & AL-MCP & QR-SCAD & QR-MCP \\ 
~ & MIO & FO & ~ & ~ & ~ & ~ & ~ & ~ \\ \hline
1st & intercept & intercept & intercept & intercept & intercept & intercept & intercept & intercept \\ 
        ~ & (1,2.47) & (1,2.46) & (1,2.46) & (1,2.44) & (1,2.47) & (1,2.48) & (1,2.47) & (1,2.47) \\ \hline
        2nd & nprenatal & nprenatal & nprenatal & nprenatal & nprenatal & nprenatal & nprenatal & nprenatal \\ 
        ~ & (1,0.16) & (1,0.15) & (1,0.16) & (1,0.14) & (1,0.14) & (1,0.14) & (1,0.17) & (1,0.17) \\ \hline
        3rd & msmoke & msmoke & msmoke & mrace & mrace & mrace & mrace & mrace \\ 
        ~ & (0.7,-0.11) & (0.6,-0.12) & (0.7,-0.1) & (0.9,0.11) & (0.7,0.16) & (0.8,0.16) & (0.9,0.19) & (0.9,0.19) \\ \hline
        4th & mrace & mrace & mrace & msmoke & msmoke & msmoke & msmoke & msmoke \\ 
        ~ & (0.6,0.24) & (0.6,0.18) & (0.7,0.23) & (0.8,-0.06) & (0.6,-0.09) & (0.7,-0.09) & (0.6,-0.11) & (0.6,-0.11) \\ \hline
        5th & trimester1 & medu & trimester1 & foreign & trimester1 & trimester1 & trimester1 & trimester1 \\ 
        ~ & (0.3,-0.12) & (0.3,0.08) & (0.4,-0.09) & (0.6,-0.01) & (0.6,-0.05) & (0.7,-0.07) & (0.6,-0.1) & (0.6,-0.1) \\ \hline
        6th & medu & frace & medu & frace & medu & medu & mhisp & mhisp \\ 
        ~ & (0.2,0.12) & (0.3,0.15) & (0.2,0.12) & (0.6,0.06) & (0.5,0.06) & (0.6,0.06) & (0.3,0.04) & (0.3,0.04) \\ \hline
        7th & frace & trimester1 & frace & married & married & foreign & foreign & foreign \\ 
        ~ & (0.2,0.19) & (0.3,-0.11) & (0.2,0.19) & (0.5,0.03) & (0.3,0.03) & (0.4,-0.02) & (0.3,-0.07) & (0.3,-0.07) \\ \hline
        8th & season2 & season2 & season2 & mage & foreign & married & monthslb & monthslb \\ 
        ~ & (0.2,-0.07) & (0.2,-0.07) & (0.2,-0.07) & (0.5,0.01) & (0.3,-0.02) & (0.3,0.03) & (0.3,-0.02) & (0.3,-0.02) \\ \hline
        9th & foreign & married & married & season1 & monthslb & deadkids & frace & frace \\ 
        ~ & (0.1,-0.08) & (0.1,0.07) & (0.1,0.06) & (0.5,0) & (0.3,0.02) & (0.3,0.01) & (0.3,0.08) & (0.3,0.08) \\ \hline
        10th & mage & foreign & mhisp & season2 & frace & monthslb & fbaby & fbaby \\ 
        ~ & (0.1,-0.05) & (0.1,-0.08) & (0.1,0.11) & (0.5,-0.01) & (0.3,0.14) & (0.3,0.02) & (0.3,-0.02) & (0.3,-0.01) \\ \hline
\end{tabular}
}
\begin{tablenotes}
\item[]
For each parenthesized pair of values, the left value shows the proportion of the variable being selected across the sample splitting replications in the conformalized quantile regression procedure. The right value shows the corresponding averaged estimated regression coefficient value over those cases where the variable has been selected.
\end{tablenotes}
\end{threeparttable}

\begin{threeparttable}[tbh]
\caption{Top 10 most often selected variables for $p=21$ at $95\%$ quantile}
\label{sel_upper_quantile_21}
\footnotesize{
\begin{tabular}{c|cccccccc}
\hline\hline
& \multicolumn{2}{|c}{$\ell _{0}$-PQR} & $\ell _{0}$-CQR & $\ell _{1}$-PQR & 
AL-SCAD & AL-MCP & QR-SCAD & QR-MCP \\ 
~ & MIO & FO & ~ & ~ & ~ & ~ & ~ & ~ \\ \hline
1st & intercept & intercept & intercept & intercept & intercept & intercept & intercept & intercept \\ 
        ~ & (1,4.2) & (1,4.19) & (1,4.2) & (1,4.16) & (1,4.19) & (1,4.2) & (1,4.19) & (1,4.2) \\ \hline
        2nd & msmoke & msmoke & msmoke & nprenatal & msmoke & msmoke & msmoke & msmoke \\ 
        ~ & (1,-0.11) & (1,-0.1) & (1,-0.11) & (1,0.06) & (1,-0.1) & (1,-0.11) & (1,-0.11) & (1,-0.11) \\ \hline
        3rd & nprenatal & nprenatal & nprenatal & msmoke & nprenatal & mrace & nprenatal & nprenatal \\ 
        ~ & (0.5,0.09) & (0.7,0.08) & (0.7,0.08) & (1,-0.08) & (0.7,0.09) & (0.8,0.05) & (0.4,0.11) & (0.5,0.09) \\ \hline
        4th & mrace & alcohol & mrace & mrace & season1 & nprenatal & mrace & mrace \\ 
        ~ & (0.5,0.08) & (0.5,-0.04) & (0.6,0.07) & (0.9,0.03) & (0.7,0.06) & (0.7,0.08) & (0.4,0.07) & (0.5,0.08) \\ \hline
        5th & alcohol & fage & alcohol & alcohol & mrace & frace & fage & alcohol \\ 
        ~ & (0.4,-0.02) & (0.4,0.02) & (0.5,-0.03) & (0.8,-0.03) & (0.6,0.07) & (0.5,-0.01) & (0.3,-0.02) & (0.4,-0.02) \\ \hline
        6th & season1 & mrace & fbaby & medu & fbaby & fbaby & trimester1 & season1 \\ 
        ~ & (0.4,0.07) & (0.4,0.08) & (0.5,-0.06) & (0.8,0.02) & (0.5,-0.05) & (0.5,-0.05) & (0.3,-0.01) & (0.4,0.07) \\ \hline
        7th & married & fbaby & fage & foreign & trimester1 & season1 & season1 & fhisp \\ 
        ~ & (0.3,0) & (0.4,-0.08) & (0.4,0.02) & (0.7,-0.01) & (0.5,-0.02) & (0.5,0.08) & (0.3,0.08) & (0.3,-0.02) \\ \hline
        8th & medu & trimester1 & season1 & fage & married & married & married & mage \\ 
        ~ & (0.3,0.05) & (0.4,-0.03) & (0.4,0.07) & (0.7,0.01) & (0.4,-0.02) & (0.4,-0.01) & (0.2,0.03) & (0.3,0.03) \\ \hline
        9th & fage & season1 & married & fbaby & fhisp & fage & fhisp & frace \\ 
        ~ & (0.3,0.01) & (0.4,0.08) & (0.3,-0.01) & (0.7,-0.04) & (0.4,-0.03) & (0.4,0.01) & (0.2,-0.03) & (0.3,-0.06) \\ \hline
        10th & frace & foreign & trimester1 & season1 & frace & trimester1 & alcohol & trimester1 \\ 
        ~ & (0.3,-0.05) & (0.3,0) & (0.3,0) & (0.7,0.04) & (0.4,-0.01) & (0.4,-0.02) & (0.2,-0.02) & (0.3,0.01) \\ \hline
\end{tabular}
}
\begin{tablenotes} \item[]
For each parenthesized pair of values, the left value shows the proportion of the variable being selected across the sample splitting replications in the conformalized quantile regression procedure. The right value shows the corresponding averaged estimated regression coefficient value over those cases where the variable has been selected.
\end{tablenotes}
\end{threeparttable}

\begin{threeparttable}[tbh]
\caption{Top 10 most often selected variables for $p=49$ at $5\%$ quantile}
\label{sel_lower_quantile_49}
\footnotesize{
\begin{tabular}{c|cccccccc}
\hline\hline
& \multicolumn{2}{|c}{$\ell _{0}$-PQR} & $\ell _{0}$-CQR & $\ell _{1}$-PQR & 
AL-SCAD & AL-MCP & QR-SCAD & QR-MCP \\ 
~ & MIO & FO & ~ & ~ & ~ & ~ & ~ & ~ \\ \hline
1st & intercept & intercept & intercept & intercept & intercept & intercept & intercept & intercept \\ 
        ~ & (1,2.47) & (1,2.45) & (1,2.47) & (1,2.44) & (1,2.46) & (1,2.47) & (1,2.45) & (1,2.44) \\ \hline
        2nd & B4(npre) & B1(npre) & B4(npre) & B4(npre) & B4(npre) & B4(npre) & B4(npre) & B4(npre) \\ 
        ~ & (1,0.3) & (0.8,-0.26) & (1,0.38) & (1,0.11) & (1,0.25) & (1,0.22) & (0.9,0.27) & (0.8,0.27) \\ \hline
        3rd & frace & B4(npre) & frace & B1(npre) & B1(npre) & B1(npre) & B1(npre) & B1(npre) \\ 
        ~ & (0.6,0.12) & (0.8,0.21) & (0.7,0.11) & (0.9,-0.14) & (0.8,-0.26) & (0.8,-0.19) & (0.7,-0.33) & (0.7,-0.32) \\ \hline
        4th & B2(npre) & frace & trimester1 & mrace & B2(npre) & B2(npre) & B2(npre) & B2(npre) \\ 
        ~ & (0.6,0.23) & (0.5,0.1) & (0.6,-0.11) & (0.7,0.06) & (0.6,0.23) & (0.5,0.17) & (0.6,0.22) & (0.5,0.23) \\ \hline
        5th & trimester1 & msmoke & B2(npre) & B3(mslb) & mrace & mrace & B4(mslb) & mrace \\ 
        ~ & (0.5,-0.11) & (0.4,-0.11) & (0.6,0.25) & (0.7,-0.03) & (0.5,0.12) & (0.4,0.13) & (0.4,0.18) & (0.3,0.1) \\ \hline
        6th & msmoke & B1(fage) & B3(npre) & frace & fbaby & trimester1 & msmoke & B1(fage) \\ 
        ~ & (0.4,-0.12) & (0.4,-0.11) & (0.6,0.31) & (0.6,0.08) & (0.4,0.06) & (0.4,-0.06) & (0.3,-0.12) & (0.3,-0.09) \\ \hline
        7th & mrace & mrace & B1(npre) & B1(fage) & B1(fage) & B1(fage) & mrace & msmoke \\ 
        ~ & (0.4,0.14) & (0.3,0.13) & (0.4,-0.3) & (0.6,-0.06) & (0.4,-0.04) & (0.4,-0.05) & (0.3,0.16) & (0.2,-0.11) \\ \hline
        8th & B1(npre) & B3(npre) & msmoke & msmoke & B3(mslb) & B3(mslb) & fbaby & frace \\ 
        ~ & (0.4,-0.38) & (0.3,0.11) & (0.3,-0.1) & (0.5,-0.05) & (0.4,-0.07) & (0.4,-0.07) & (0.3,0.09) & (0.2,0.11) \\ \hline
        9th & B3(npre) & medu1 & B5(npre) & trimester1 & B4(mslb) & B4(mslb) & trimester1 & fbaby \\ 
        ~ & (0.4,0.28) & (0.2,-0.09) & (0.3,0.18) & (0.5,-0.03) & (0.4,0.15) & (0.4,0.13) & (0.3,-0.13) & (0.2,0.09) \\ \hline
        10th & mhisp & B4(mage) & mrace & B4(mslb) & msmoke & msmoke & B3(mslb) & trimester1 \\ 
        ~ & (0.2,0.09) & (0.2,0) & (0.2,0.15) & (0.5,0.03) & (0.3,-0.1) & (0.3,-0.09) & (0.3,-0.11) & (0.2,-0.1) \\ \hline
\end{tabular}
}
\begin{tablenotes} \item[]
For each parenthesized pair of values, the left value shows the proportion of the variable being selected across the sample splitting replications in the conformalized quantile regression procedure. The right value shows the corresponding averaged estimated regression coefficient value over those cases where the variable has been selected.
\end{tablenotes}
\end{threeparttable}

\begin{threeparttable}[tbh]
\caption{Top 10 most often selected variables for $p=49$ at $95\%$ quantile}
\label{sel_upper_quantile_49}
\footnotesize{
\begin{tabular}{c|cccccccc}
\hline\hline
& \multicolumn{2}{|c}{$\ell _{0}$-PQR} & $\ell _{0}$-CQR & $\ell _{1}$-PQR & 
AL-SCAD & AL-MCP & QR-SCAD & QR-MCP \\ 
~ & MIO & FO & ~ & ~ & ~ & ~ & ~ & ~ \\ \hline
1st & intercept & intercept & intercept & intercept & intercept & intercept & intercept & intercept \\ 
        ~ & (1,4.19) & (1,4.19) & (1,4.19) & (1,4.15) & (1,4.19) & (1,4.19) & (1,4.19) & (1,4.2) \\ \hline
        2nd & msmoke & msmoke & msmoke & msmoke & msmoke & msmoke & msmoke & msmoke \\ 
        ~ & (0.9,-0.1) & (1,-0.1) & (1,-0.11) & (1,-0.07) & (0.8,-0.1) & (0.9,-0.09) & (0.8,-0.11) & (0.7,-0.11) \\ \hline
        3rd & mrace & mrace & mrace & alcohol & B5(npre) & B5(npre) & B5(npre) & B5(npre) \\ 
        ~ & (0.4,0.07) & (0.3,0.06) & (0.3,0.08) & (0.8,-0.02) & (0.6,0.08) & (0.8,0.07) & (0.5,0.1) & (0.4,0.12) \\ \hline
        4th & B5(npre) & B1(npre) & fedu2 & mrace & season1 & fedu2 & B6(npre) & B6(mslb) \\ 
        ~ & (0.3,0.12) & (0.3,-0.06) & (0.2,0.06) & (0.8,0.03) & (0.4,0.04) & (0.5,0.04) & (0.4,-0.08) & (0.4,-0.09) \\ \hline
        5th & fbaby & season1 & B1(fage) & fedu2 & B1(fage) & mrace & fedu1 & B3(npre) \\ 
        ~ & (0.2,-0.06) & (0.2,0.07) & (0.2,-0.08) & (0.7,0.02) & (0.4,-0.06) & (0.4,0.03) & (0.3,0.08) & (0.3,0.1) \\ \hline
        6th & B6(mage) & B4(mage) & B1(npre) & B1(fage) & B5(fage) & season1 & fedu2 & B6(npre) \\ 
        ~ & (0.2,0.09) & (0.2,-0.07) & (0.2,-0.07) & (0.7,-0.02) & (0.4,0.02) & (0.4,0.03) & (0.3,0.07) & (0.3,-0.08) \\ \hline
        7th & B1(fage) & B1(fage) & B4(npre) & B1(npre) & married & fedu1 & B6(mage) & B7(mslb) \\ 
        ~ & (0.2,-0.1) & (0.2,-0.06) & (0.2,0.09) & (0.7,-0.04) & (0.3,-0.01) & (0.4,0.03) & (0.3,0.09) & (0.3,0.12) \\ \hline
        8th & B3(fage) & B5(fage) & frace & B5(npre) & season2 & B1(fage) & B3(npre) & married \\ 
        ~ & (0.2,0.01) & (0.2,0.01) & (0.1,0.08) & (0.7,0.02) & (0.3,0.02) & (0.4,-0.07) & (0.3,0.11) & (0.2,-0.02) \\ \hline
        9th & B1(npre) & married & season1 & B4(mslb) & fedu2 & B5(fage) & B4(npre) & mrace \\ 
        ~ & (0.2,-0.02) & (0.1,0.09) & (0.1,0.09) & (0.7,0.02) & (0.3,0.05) & (0.4,0.01) & (0.3,0.14) & (0.2,0.09) \\ \hline
        10th & B6(npre) & fbaby & medu2 & season1 & B3(mage) & B6(npre) & B6(mslb) & fedu1 \\ 
        ~ & (0.2,-0.07) & (0.1,-0.04) & (0.1,0.06) & (0.5,0.02) & (0.3,0.01) & (0.4,-0.07) & (0.3,-0.08) & (0.2,0.11) \\ \hline
\end{tabular}
}
\begin{tablenotes} \item[]
For each parenthesized pair of values, the left value shows the proportion of the variable being selected across the sample splitting replications in the conformalized quantile regression procedure. The right value shows the corresponding averaged estimated regression coefficient value over those cases where the variable has been selected.
\end{tablenotes}
\end{threeparttable}

\begin{landscape}
\begin{threeparttable}[tbh]
\caption{Top 10 most often selected variables for $p=609$ at $5\%$ quantile}
\label{sel_lower_quantile_609}
\scriptsize{
\begin{tabular}{c|cccccccc}
\hline\hline
& \multicolumn{2}{|c}{$\ell _{0}$-PQR} & $\ell _{0}$-CQR & $\ell _{1}$-PQR & 
AL-SCAD & AL-MCP & QR-SCAD & QR-MCP \\ 
~ & MIO & FO & ~ & ~ & ~ & ~ & ~ & ~ \\ \hline
1st & intercept & intercept & intercept & intercept & intercept & intercept & intercept & intercept \\ 
        ~ & (1,2.44) & (1,2.46) & (1,2.45) & (1,2.42) & (1,2.45) & (1,2.45) & (1,2.44) & (1,2.44) \\ \hline
        2nd & B1(npre) & trimester1*B1(npre) & trimester1*B1(npre) & trimester1*B1(npre) & foreign*B7(npre) & mrace & trimester1*B1(npre) & trimester1*B1(npre) \\ 
        ~ & (0.2,-0.45) & (0.4,-0.23) & (0.5,-0.26) & (0.8,-0.09) & (0.3,-0.22) & (0.2,0.05) & (0.2,-0.19) & (0.2,-0.26) \\ \hline
        3rd & B4(npre) & frace*B4(npre) & frace*B4(npre) & mrace & mrace & foreign*B7(npre) & season2 & B4(npre) \\ 
        ~ & (0.2,0.23) & (0.3,0.2) & (0.4,0.21) & (0.6,0.04) & (0.2,0.03) & (0.2,-0.34) & (0.1,-0.4) & (0.1,0.23) \\ \hline
        4th & married*B2(npre) & foreign*B7(npre) & B4(npre) & B4(npre) & trimester1*B1(npre) & trimester1*B1(npre) & B1(fage) & foreign*B7(npre) \\ 
        ~ & (0.1,-0.01) & (0.2,-0.75) & (0.3,0.17) & (0.6,0.06) & (0.2,-0.18) & (0.2,-0.21) & (0.1,-0.15) & (0.1,-0.69) \\ \hline
        5th & frace*B4(npre) & mrace & B1(fage) & frace*B4(npre) & B4(npre) & B1(npre) & B1(npre) & alcohol*B7(npre) \\ 
        ~ & (0.1,0.29) & (0.1,0.14) & (0.1,-0.16) & (0.6,0.06) & (0.1,0.13) & (0.1,-0.01) & (0.1,-0.47) & (0.1,0) \\ \hline
        6th & frace*B5(mslb) & frace & married*B2(npre) & frace & mhisp*B7(fage) & B4(npre) & B4(npre) & mrace*B2(npre) \\ 
        ~ & (0.1,0.04) & (0.1,0.16) & (0.1,0.15) & (0.4,0.05) & (0.1,0.01) & (0.1,0.13) & (0.1,0.23) & (0.1,0.22) \\ \hline
        7th & fedu1*B1(npre) & B1(fage) & married*B4(npre) & mrace*B4(npre) & deadkids*B1(npre) & married*B3(mslb) & B4(mslb) & mrace*B4(npre) \\ 
        ~ & (0.1,-0.35) & (0.1,-0.11) & (0.1,0.08) & (0.4,0.02) & (0.1,-0.05) & (0.1,-0.01) & (0.1,0.13) & (0.1,0.27) \\ \hline
        8th & ~ & B4(npre) & foreign*B6(npre) & fedu1*B1(npre) & msmoke*B4(mage) & mhisp*B7(fage) & married*B3(mslb) & fedu1*B1(npre) \\ 
        ~ & ~ & (0.1,0.12) & (0.1,0) & (0.4,-0.07) & (0.1,-0.09) & (0.1,0.01) & (0.1,-0.16) & (0.1,-0.35) \\ \hline
        9th & ~ & deadkids*B1(npre) & alcohol*B5(npre) & B1(fage) & mrace*B3(mage) & alcohol*B7(mslb) & fhisp*B7(mslb) & fedu1*B4(npre) \\ 
        ~ & ~ & (0.1,-0.1) & (0.1,0) & (0.3,-0.03) & (0.1,0.11) & (0.1,-0.15) & (0.1,0.31) & (0.1,0.19) \\ \hline
        10th & ~ & msmoke*B3(mage) & mrace*B4(npre) & msmoke*B4(mage) & mrace*B4(npre) & deadkids*B1(npre) & foreign*B7(npre) & fedu2*B1(npre) \\ 
        ~ & ~ & (0.1,-0.15) & (0.1,0.25) & (0.3,-0.02) & (0.1,0.13) & (0.1,-0.06) & (0.1,-0.3) & (0.1,-0.18) \\ \hline
\end{tabular}
}
\begin{tablenotes} \item[]
For each parenthesized pair of values, the left value shows the proportion of the variable being selected across the sample splitting replications in the conformalized quantile regression procedure. The right value shows the corresponding averaged estimated regression coefficient value over those cases where the variable has been selected.
\end{tablenotes}
\end{threeparttable}
\newline
\vspace{4em}
\newline

\begin{threeparttable}[tbh]       
\caption{Top 10 most often selected variables for $p=609$ at $95\%$ quantile}
\label{sel_upper_quantile_609}
\scriptsize{
\begin{tabular}{c|cccccccc}
\hline\hline
& \multicolumn{2}{|c}{$\ell _{0}$-PQR} & $\ell _{0}$-CQR & $\ell _{1}$-PQR & 
AL-SCAD & AL-MCP & QR-SCAD & QR-MCP \\ 
~ & MIO & FO & ~ & ~ & ~ & ~ & ~ & ~ \\ \hline
1st & intercept & intercept & intercept & intercept & intercept & intercept & intercept & intercept \\ 
        ~ & (1,4.22) & (1,4.21) & (1,4.2) & (1,4.03) & (1,4.17) & (1,4.17) & (1,4.18) & (1,4.16) \\ \hline
        2nd & msmoke & msmoke & msmoke & msmoke & mhisp*B7(mage) & mhisp*B7(mage) & married*B5(npre) & mhisp*B7(mage) \\ 
        ~ & (0.1,-0.08) & (0.1,-0.06) & (0.3,-0.12) & (0.6,-0.05) & (0.2,-1.55) & (0.2,-2.1) & (0.2,0.08) & (0.2,-9.11) \\ \hline
        3rd & B1(npre) & B4(mage) & married*B4(npre) & mhisp*B7(npre) & season1*B7(npre) & season1*B7(npre) & mhisp*B7(mage) & married*B5(npre) \\ 
        ~ & (0.1,-0.06) & (0.1,-0.16) & (0.2,0.06) & (0.4,0) & (0.2,-0.2) & (0.2,-0.21) & (0.2,0.89) & (0.1,0.01) \\ \hline
        4th & deadkids*B2(fage) & B6(mage) & msmoke*B3(fage) & mrace & trimester1 & trimester1 & season1*B7(npre) & frace*B4(npre) \\ 
        ~ & (0.1,-0.03) & (0.1,0.04) & (0.2,-0.15) & (0.3,0.01) & (0.1,0.09) & (0.1,0.07) & (0.2,-0.29) & (0.1,0.01) \\ \hline
        5th & season1*B5(npre) & B4(npre) & mrace*B4(mslb) & married*B4(npre) & B3(mage) & B3(mage) & fbaby & season1*B3(npre) \\ 
        ~ & (0.1,0.01) & (0.1,0.2) & (0.2,0.09) & (0.3,-0.01) & (0.1,0.12) & (0.1,0.14) & (0.1,-0.07) & (0.1,0.09) \\ \hline
        6th & season1*B7(mslb) & married*B3(fage) & frace*B4(fage) & married*B5(npre) & B4(mage) & B4(mage) & B1(mage) & season1*B7(npre) \\ 
        ~ & (0.1,-0.06) & (0.1,0.07) & (0.2,0.03) & (0.3,0.03) & (0.1,-0.11) & (0.1,-0.1) & (0.1,0.19) & (0.1,-0.38) \\ \hline
        7th & ~ & married*B5(fage) & frace*B4(npre) & mrace*B4(mslb) & B4(fage) & B4(fage) & B3(mage) & medu2*B3(mslb) \\ 
        ~ & ~ & (0.1,-0.12) & (0.2,0.1) & (0.3,0.01) & (0.1,-0.01) & (0.1,-0.03) & (0.1,0.15) & (0.1,0) \\ \hline
        8th & ~ & married*B1(npre) & fbaby*B4(npre) & season2*B3(mslb) & B5(fage) & B5(fage) & B2(fage) & ~ \\ 
        ~ & ~ & (0.1,-0.07) & (0.2,0.06) & (0.3,0.02) & (0.1,-0.04) & (0.1,0) & (0.1,-0.14) & ~ \\ \hline
        9th & ~ & married*B4(npre) & season1*B5(fage) & trimester1 & B6(fage) & B6(fage) & B3(fage) & ~ \\ 
        ~ & ~ & (0.1,-0.11) & (0.2,0.07) & (0.2,0) & (0.1,-0.04) & (0.1,-0.06) & (0.1,-0.09) & ~ \\ \hline
        10th & ~ & married*B5(npre) & B6(fage) & B1(npre) & B1(npre) & B1(npre) & B3(npre) & ~ \\ 
        ~ & ~ & (0.1,0.07) & (0.1,0.02) & (0.2,-0.04) & (0.1,0.07) & (0.1,0.07) & (0.1,0.08) \\ \hline
\end{tabular}
}
\begin{tablenotes} \item[]
For each parenthesized pair of values, the left value shows the proportion of the variable being selected across the sample splitting replications in the conformalized quantile regression procedure. The right value shows the corresponding averaged estimated regression coefficient value over those cases where the variable has been selected.
\end{tablenotes}
\end{threeparttable}

\newpage
\begin{threeparttable}[tbh]
\caption{Top 10 most often selected variables for $p=1281$ at $5\%$ quantile}
\label{sel_lower_quantile_1281}
\scriptsize{
\begin{tabular}{c|cccccccc}
\hline\hline
& \multicolumn{2}{|c}{$\ell _{0}$-PQR} & $\ell _{0}$-CQR & $\ell _{1}$-PQR & 
AL-SCAD & AL-MCP & QR-SCAD & QR-MCP \\ 
~ & MIO & FO & ~ & ~ & ~ & ~ & ~ & ~ \\ \hline
1st & intercept & intercept & intercept & intercept & intercept & intercept & intercept & intercept \\ 
        ~ & (1,2.44) & (1,2.44) & (1,2.44) & (1,2.03) & (1,2.03) & (1,2.04) & (1,2.03) & (1,2.03) \\ \hline
        2nd & trimester1*B2(npre) & trimester1*B2(npre) & trimester1*B1(npre) & mrace & fedu3*B1(fage) & fedu3*B1(fage) & fedu3*B1(fage) & fedu3*B1(fage) \\ 
        ~ & (0.3,-0.26) & (0.3,-0.26) & (0.3,-0.32) & (0.7,0.05) & (0.5,-10) & (0.5,-10) & (0.5,-10) & (0.5,-10) \\ \hline
        3rd & mrace & mrace & B1(npre) & frace & mrace & trimester1*B2(npre) & fedu3*B2(fage) & fedu3*B2(fage) \\ 
        ~ & (0.2,0.13) & (0.2,0.13) & (0.2,-0.26) & (0.7,0.04) & (0.3,0.12) & (0.4,-0.14) & (0.3,-10) & (0.3,-10) \\ \hline
        4th & trimester1*B1(npre) & trimester1*B1(npre) & B12(fage) & trimester1*B2(npre) & trimester1*B2(npre) & fedu1*B1(npre) & mrace & fhisp*B15(mage) \\ 
        ~ & (0.2,-0.28) & (0.2,-0.28) & (0.1,-0.35) & (0.7,-0.08) & (0.3,-0.16) & (0.4,-0.14) & (0.2,0.14) & (0.2,-10) \\ \hline
        5th & medu2*B15(mage) & fedu1*B1(npre) & B2(npre) & fedu3*B1(fage) & fedu3*B2(fage) & mrace & fhisp*B15(mage) & mrace \\ 
        ~ & (0.2,-0.15) & (0.2,-0.23) & (0.1,-0.24) & (0.5,-10) & (0.3,-10) & (0.3,0.12) & (0.2,-10) & (0.1,0.16) \\ \hline
        6th & frace & medu2*B15(mage) & B13(mslb) & msmoke*B7(mage) & married*B1(npre) & fedu3*B2(fage) & fhisp*B15(fage) & married*B1(npre) \\ 
        ~ & (0.1,0.11) & (0.2,-0.15) & (0.1,-0.12) & (0.4,-0.01) & (0.2,-0.1) & (0.3,-10) & (0.2,-5) & (0.1,-0.28) \\ \hline
        7th & foreign*B1(fage) & frace & married*B9(mage) & mrace*B7(mslb) & fhisp*B15(mage) & married*B1(npre) & trimester1*B2(npre) & mhisp*B14(mage) \\ 
        ~ & (0.1,-1.3) & (0.1,0.11) & (0.1,0.07) & (0.4,-0.04) & (0.2,-10) & (0.2,-0.13) & (0.2,-0.27) & (0.1,5.41) \\ \hline
        8th & foreign*B15(npre) & foreign*B1(fage) & mhisp*B6(npre) & trimester1*B1(npre) & season2*B6(fage) & fhisp*B15(mage) & msmoke & mhisp*B15(mage) \\ 
        ~ & (0.1,-1.31) & (0.1,-1.3) & (0.1,0.06) & (0.4,-0.09) & (0.2,-0.07) & (0.2,-10) & (0.1,-0.16) & (0.1,-10) \\ \hline
        9th & alcohol*B6(mage) & foreign*B15(npre) & fhisp*B4(fage) & fedu1*B1(npre) & fedu1*B1(npre) & msmoke & B9(mage) & mhisp*B10(npre) \\ 
        ~ & (0.1,-0.13) & (0.1,-1.31) & (0.1,-0.04) & (0.4,-0.11) & (0.2,-0.17) & (0.1,-0.01) & (0.1,-0.35) & (0.1,-10) \\ \hline
        10th & deadkids*B2(npre) & alcohol*B6(mage) & alcohol*B7(mage) & married & B2(fage) & B2(fage) & married*B9(mage) & mhisp*B15(npre) \\ 
        ~ & (0.1,-0.21) & (0.1,-0.13) & (0.1,-0.02) & (0.3,0.01) & (0.1,-0.05) & (0.1,-0.07) & (0.1,0.32) & (0.1,-10) \\ \hline
\end{tabular}
}
\begin{tablenotes} \item[]
For each parenthesized pair of values, the left value shows the proportion of the variable being selected across the sample splitting replications in the conformalized quantile regression procedure. The right value shows the corresponding averaged estimated regression coefficient value over those cases where the variable has been selected.
\end{tablenotes}
\end{threeparttable}

\begin{threeparttable}[tbh]
\caption{Top 10 most often selected variables for $p=1281$ at $95\%$ quantile}
\label{sel_upper_quantile_1281}
\scriptsize{
\begin{tabular}{c|cccccccc}
\hline\hline
& \multicolumn{2}{|c}{$\ell _{0}$-PQR} & $\ell _{0}$-CQR & $\ell _{1}$-PQR & 
AL-SCAD & AL-MCP & QR-SCAD & QR-MCP \\ 
~ & MIO & FO & ~ & ~ & ~ & ~ & ~ & ~ \\ \hline
1st & intercept & intercept & intercept & intercept & intercept & intercept & intercept & intercept \\ 
        ~ & (1,4.22) & (1,4.21) & (1,4.22) & (1,3.56) & (1,3.75) & (1,3.66) & (1,3.76) & (1,3.76) \\ \hline
        2nd & msmoke & msmoke & msmoke & msmoke & fedu3*B1(fage) & fedu3*B1(fage) & fedu3*B1(fage) & fedu3*B1(fage) \\ 
        ~ & (0.1,-0.16) & (0.1,-0.16) & (0.1,-0.15) & (0.6,-0.06) & (0.5,-10) & (0.5,-10) & (0.5,-10) & (0.5,-10) \\ \hline
        3rd & season1*B12(mslb) & mhisp*B14(mage) & B8(mage) & fedu3*B1(fage) & fedu3*B2(fage) & fedu3*B2(fage) & fedu3*B2(fage) & fedu3*B2(fage) \\ 
        ~ & (0.1,0.06) & (0.1,-3.81) & (0.1,-0.03) & (0.5,-10) & (0.3,-10) & (0.3,-10) & (0.3,-10) & (0.3,-10) \\ \hline
        4th & ~ & ~ & B2(fage) & married*B12(npre) & fhisp*B15(mage) & fhisp*B15(mage) & mhisp*B15(mslb) & fhisp*B15(mage) \\ 
        ~ & ~ & ~ & (0.1,-0.05) & (0.4,0.03) & (0.2,-10) & (0.2,-10) & (0.2,-5.73) & (0.2,-10) \\ \hline
        5th & ~ & ~ & alcohol*B8(mage) & fedu2*B13(mage) & mhisp*B11(mage) & medu2*B10(npre) & fhisp*B15(mage) & mhisp*B11(mage) \\ 
        ~ & ~ & ~ & (0.1,0) & (0.4,0) & (0.1,-0.51) & (0.2,0.02) & (0.2,-10) & (0.1,-0.51) \\ \hline
        6th & ~ & ~ & alcohol*B11(mslb) & medu2*B9(mslb) & mhisp*B14(mage) & mrace & fhisp*B15(mslb) & mhisp*B14(mage) \\ 
        ~ & ~ & ~ & (0.1,-0.01) & (0.4,0.02) & (0.1,-3.81) & (0.1,0.11) & (0.2,1.37) & (0.1,-3.81) \\ \hline
        7th & ~ & ~ & deadkids*B4(mage) & mrace & mhisp*B15(mage) & B9(npre) & mrace & mhisp*B15(mage) \\ 
        ~ & ~ & ~ & (0.1,-0.01) & (0.3,0.03) & (0.1,-10) & (0.1,0.03) & (0.1,0.12) & (0.1,-10) \\ \hline
        8th & ~ & ~ & deadkids*B13(npre) & B1(npre) & mhisp*B10(npre) & B11(mslb) & married*B10(mage) & mhisp*B10(npre) \\ 
        ~ & ~ & ~ & (0.1,-0.03) & (0.3,-0.03) & (0.1,-10) & (0.1,0.09) & (0.1,-0.11) & (0.1,-10) \\ \hline
        9th & ~ & ~ & mrace*B10(mslb) & mhisp*B15(npre) & mhisp*B15(npre) & married*B1(mage) & married*B13(fage) & mhisp*B15(npre) \\ 
        ~ & ~ & ~ & (0.1,0.03) & (0.3,-3.33) & (0.1,-10) & (0.1,-0.01) & (0.1,0.08) & (0.1,-10) \\ \hline
        10th & ~ & ~ & frace*B15(fage) & deadkids*B12(mslb) & mhisp*B15(mslb) & married*B3(mage) & mhisp*B11(mage) & mhisp*B15(mslb) \\ 
        ~ & ~ & ~ & (0.1,-0.05) & (0.3,-0.01) & (0.1,-10) & (0.1,0.04) & (0.1,-0.51) & (0.1,-10) \\ \hline
\end{tabular}
}
\begin{tablenotes} \item[]
For each parenthesized pair of values, the left value shows the proportion of the variable being selected across the sample splitting replications in the conformalized quantile regression procedure. The right value shows the corresponding averaged estimated regression coefficient value over those cases where the variable has been selected.
\end{tablenotes}
\end{threeparttable}
\newpage
\begin{threeparttable}[tbh]
\caption{Top 10 most often selected variables for $p=1617$ at $5\%$ quantile}
\label{sel_lower_quantile_1617}
\scriptsize{
\begin{tabular}{c|cccccccc}
\hline\hline
& \multicolumn{2}{|c}{$\ell _{0}$-PQR} & $\ell _{0}$-CQR & $\ell _{1}$-PQR & 
AL-SCAD & AL-MCP & QR-SCAD & QR-MCP \\ 
~ & MIO & FO & ~ & ~ & ~ & ~ & ~ & ~ \\ \hline
 1st & intercept & intercept & intercept & intercept & intercept & intercept & intercept & intercept \\ 
        ~ & (1,2.46) & (1,2.46) & (1,2.44) & (0.9,1.76) & (0.9,1.79) & (0.9,1.79) & (0.9,1.79) & (0.9,1.79) \\ \hline
        2nd & frace & frace & msmoke*B5(mage) & frace & trimester1*B2(npre) & trimester1*B2(npre) & trimester1*B2(npre) & trimester1*B2(npre) \\ 
        ~ & (0.2,0.11) & (0.2,0.11) & (0.2,-0.04) & (0.6,0.04) & (0.3,-0.1) & (0.4,-0.13) & (0.3,-0.28) & (0.3,-0.27) \\ \hline
        3rd & trimester1*B2(npre) & trimester1*B2(npre) & B11(mage) & trimester1*B2(npre) & fedu3*B2(fage) & mrace & fedu3*B2(fage) & fedu3*B2(fage) \\ 
        ~ & (0.2,-0.28) & (0.2,-0.28) & (0.1,0.07) & (0.6,-0.09) & (0.3,-10) & (0.3,0.1) & (0.3,-10) & (0.3,-10) \\ \hline
        4th & fedu1*B1(npre) & fedu1*B1(npre) & B12(mage) & mrace & medu3*B2(mage) & mrace*B9(mon) & medu3*B2(mage) & medu3*B2(mage) \\ 
        ~ & (0.2,-0.14) & (0.2,-0.14) & (0.1,-0.02) & (0.5,0.04) & (0.3,-10) & (0.3,-0.03) & (0.3,-10) & (0.3,-10) \\ \hline
        5th & fedu1*B2(npre) & fedu1*B2(npre) & B15(mage) & B15(npre) & mrace & fedu3*B2(fage) & mhisp*B19(mage) & mhisp*B19(mage) \\ 
        ~ & (0.2,-0.24) & (0.2,-0.24) & (0.1,-0.12) & (0.5,0.02) & (0.2,0.12) & (0.3,-10) & (0.2,-10) & (0.2,-10) \\ \hline
        6th & fhisp & fhisp & married*B11(mage) & mrace*B9(mslb) & mhisp*B19(mage) & medu3*B2(mage) & mhisp*B13(npre) & mhisp*B13(npre) \\ 
        ~ & (0.1,0.05) & (0.1,0.05) & (0.1,0.18) & (0.4,-0.04) & (0.2,-10) & (0.3,-10) & (0.2,-10) & (0.2,-10) \\ \hline
        7th & fhisp*B14(mage) & fhisp*B14(mage) & married*B12(mage) & trimester1*B1(npre) & mhisp*B13(npre) & frace & fhisp*B14(mage) & fhisp*B14(mage) \\ 
        ~ & (0.1,-0.08) & (0.1,-0.08) & (0.1,0.3) & (0.4,-0.06) & (0.2,-10) & (0.2,0.03) & (0.2,-10) & (0.2,-10) \\ \hline
        8th & foreign*B17(mage) & foreign*B17(mage) & married*B19(fage) & fedu1*B2(npre) & fhisp*B14(mage) & mhisp*B19(mage) & foreign*B19(npre) & fedu1*B1(npre) \\ 
        ~ & (0.1,-0.08) & (0.1,-0.08) & (0.1,-0.1) & (0.4,-0.08) & (0.2,-10) & (0.2,-10) & (0.2,-0.68) & (0.2,-0.25) \\ \hline
        9th & foreign*B2(fage) & foreign*B2(fage) & mhisp*B4(mage) & fedu3*B5(mage) & frace & mhisp*B13(npre) & fedu1*B1(npre) & mrace \\ 
        ~ & (0.1,-0.24) & (0.1,-0.24) & (0.1,0) & (0.4,-0.01) & (0.1,0.02) & (0.2,-10) & (0.2,-0.25) & (0.1,0.22) \\ \hline
        10th & foreign*B19(npre) & foreign*B19(npre) & mhisp*B8(mage) & fhisp*B14(mage) & B2(fage) & fhisp*B14(mage) & mrace & B2(fage) \\ 
        ~ & (0.1,-1.25) & (0.1,-1.25) & (0.1,0.02) & (0.3,-6.69) & (0.1,-0.09) & (0.2,-10) & (0.1,0.22) & (0.1,-0.05) \\ \hline
\end{tabular}
}
\begin{tablenotes} \item[]
For each parenthesized pair of values, the left value shows the proportion of the variable being selected across the sample splitting replications in the conformalized quantile regression procedure. The right value shows the corresponding averaged estimated regression coefficient value over those cases where the variable has been selected.
\end{tablenotes}
\end{threeparttable}

\begin{threeparttable}[tbh]
\caption{Top 10 most often selected variables for $p=1617$ at $95\%$ quantile}
\label{sel_upper_quantile_1617}
\scriptsize{
\begin{tabular}{c|cccccccc}
\hline\hline
& \multicolumn{2}{|c}{$\ell _{0}$-PQR} & $\ell _{0}$-CQR & $\ell _{1}$-PQR & 
AL-SCAD & AL-MCP & QR-SCAD & QR-MCP \\ 
~ & MIO & FO & ~ & ~ & ~ & ~ & ~ & ~ \\ \hline
1st & intercept & intercept & intercept & intercept & intercept & intercept & intercept & intercept \\ 
        ~ & (1,4.23) & (1,4.22) & (1,4.22) & (1,3.27) & (1,3.38) & (1,3.37) & (1,3.41) & (1,3.43) \\ \hline
        2nd & ~ & ~ & married & msmoke & medu3*B2(mage) & medu3*B2(mage) & medu3*B2(mage) & medu3*B2(mage) \\ 
        ~ & ~ & ~ & (0.2,0.03) & (0.5,-0.06) & (0.4,-10) & (0.4,-10) & (0.4,-10) & (0.4,-10) \\ \hline
        3rd & ~ & ~ & deadkids*B15(mage) & married*B16(npre) & fedu3*B1(mage) & fedu3*B1(mage) & fedu3*B1(mage) & fedu3*B1(mage) \\ 
        ~ & ~ & ~ & (0.2,-0.02) & (0.5,0.04) & (0.3,-6.41) & (0.3,-6.45) & (0.3,-6.62) & (0.3,-6.46) \\ \hline
        4th & ~ & ~ & married*B2(fage) & fhisp*B15(mage) & fedu3*B2(fage) & fedu3*B2(fage) & fedu3*B2(fage) & fedu3*B2(fage) \\ 
        ~ & ~ & ~ & (0.1,-0.01) & (0.5,-1.99) & (0.3,-10) & (0.3,-10) & (0.3,-10) & (0.3,-10) \\ \hline
        5th & ~ & ~ & married*B13(npre) & married*B17(npre) & B9(mage) & B9(mage) & married*B16(npre) & mhisp*B19(mage) \\ 
        ~ & ~ & ~ & (0.1,0.02) & (0.4,0) & (0.2,0.04) & (0.2,0.05) & (0.2,0.07) & (0.2,-10) \\ \hline
        6th & ~ & ~ & mhisp*B8(mage) & mhisp*B13(npre) & married*B15(mage) & married*B15(mage) & mhisp*B19(mage) & mhisp*B13(npre) \\ 
        ~ & ~ & ~ & (0.1,-0.01) & (0.4,-4.99) & (0.2,-0.06) & (0.2,-0.06) & (0.2,-10) & (0.2,-10) \\ \hline
        7th & ~ & ~ & mhisp*B15(npre) & fedu2*B9(mslb) & married*B9(mslb) & married*B16(npre) & mhisp*B13(npre) & fhisp*B14(mage) \\ 
        ~ & ~ & ~ & (0.1,0) & (0.4,0.01) & (0.2,0.05) & (0.2,0.04) & (0.2,-10) & (0.2,-10) \\ \hline
        8th & ~ & ~ & alcohol*B10(mage) & fedu3*B19(mslb) & mhisp*B19(mage) & married*B9(mslb) & fhisp*B14(mage) & deadkids*B9(mage) \\ 
        ~ & ~ & ~ & (0.1,0) & (0.4,-0.28) & (0.2,-10) & (0.2,0.05) & (0.2,-10) & (0.2,0.02) \\ \hline
        9th & ~ & ~ & alcohol*B11(mslb) & medu3*B2(mage) & mhisp*B13(npre) & married*B12(mslb) & msmoke*B19(mage) & deadkids*B14(fage) \\ 
        ~ & ~ & ~ & (0.1,0.03) & (0.4,-10) & (0.2,-10) & (0.2,0.04) & (0.2,-5.06) & (0.2,0.04) \\ \hline
        10th & ~ & ~ & alcohol*B14(mslb) & married*B5(mage) & fhisp*B14(mage) & mhisp*B19(mage) & fbaby*B14(mage) & msmoke*B19(mage) \\ 
        ~ & ~ & ~ & (0.1,-0.02) & (0.3,0) & (0.2,-10) & (0.2,-10) & (0.2,0.05) & (0.2,-5.12) \\ \hline
\end{tabular}
}
\begin{tablenotes} \item[]
For each parenthesized pair of values, the left value shows the proportion of the variable being selected across the sample splitting replications in the conformalized quantile regression procedure. The right value shows the corresponding averaged estimated regression coefficient value over those cases where the variable has been selected.
\end{tablenotes}
\end{threeparttable}
\end{landscape}

{\singlespacing
\bibliographystyle{econometrica}
\bibliography{L0QR_ref}
}

\end{document}